\documentclass[%
 reprint,amsmath,amssymb,aps,
]{revtex4-2}
\usepackage[utf8]{inputenc}
\usepackage{graphicx}
\usepackage{dcolumn}
\usepackage{bbm}
\usepackage{amssymb}
\usepackage{MnSymbol}
\usepackage{amsfonts}
\usepackage{mathtools}
\usepackage{amsmath}
\usepackage{amsthm}
\usepackage{xcolor}
\usepackage{hyperref}
\usepackage{comment}
\usepackage{physics}
\usepackage{wasysym}
\usepackage{multirow}
\usepackage{booktabs}
\usepackage{tabularx}
\usepackage{array}

\usepackage{xcolor}

\definecolor{cA}{HTML}{D62728}
\definecolor{cB}{HTML}{1F77B4}
\definecolor{cC}{HTML}{2CA02C}
\definecolor{cD}{HTML}{FF7F0E}
\definecolor{cE}{HTML}{9467BD}
\definecolor{cF}{HTML}{8C564B}
\definecolor{cG}{HTML}{E377C2}
\definecolor{cH}{HTML}{17BECF}
\definecolor{cI}{HTML}{7F7F00}
\definecolor{cJ}{HTML}{7F7F7F}

\usepackage{tikz}
\usetikzlibrary{matrix,positioning}

\newcommand\mc[1]{\mathcal{#1}}

\newcommand{\cB}{{\mc{B}}}
\newcommand{\cC}
{{\mc{C}}}

\newcommand{\cF}{{\mc{F}}}
\newcommand{\cH}{{\mc{H}}}

\newcommand{\cK}{{\mc{K}}}
\newcommand{\cL}{{\mc{L}}}
\newcommand{\cM}{{\mc{M}}}
\newcommand{\cN}{{\mc{N}}}

\newcommand{\cP}{{\mc{P}}}
\newcommand{\cQ}{{\mc{Q}}}
\newcommand{\cR}{{\mc{R}}}

\newcommand{\cS}{{\mc{S}}}
\newcommand{\cT}{{\mc{T}}}

\newcommand{\cW}{{\mc{W}}}

\newcommand{\fq}{{\mathfrak{q}}}

\newcommand{\cU}{{\mc{U}}}
\newcommand{\cV}{{\mc{V}}}

\newcommand{\R}{{\mathbb R}}

\newcommand{\Gr}{{\mathbf{Gr}}}

\newcommand{\one}{{\mathbbm{1}}}

\newcommand{\zz}{{\mathbb{Z}}}
\newcommand{\ra}{{\rightarrow}}

\newcommand{\Sp}{{\mathrm{Sp}}}

\newcommand{\Stab}{{\mathrm{Stab}}}

\newcommand{\Lag}{{\mathrm{Lag}}}

\newcommand{\PG}{{\mathrm{PG}}}

\newcommand{\bracket}[2]{\left\langle#1\,\middle|\,#2\right\rangle}

\newtheorem{definition}{Definition}
\newtheorem{lemma}{Lemma}
\newtheorem{corollary}{Corollary}
\newtheorem{proposition}{Proposition}
\newtheorem{theorem}{Theorem}

\begin{document}
\preprint{APS/123-QED}

\title{Unextendible stabiliser bases}
\author{Markus Frembs}
\email{markus.frembs@itp.uni-hannover.de}
\affiliation{Institut f\"ur Theoretische Physik, Leibniz Universit\"at Hannover, Appelstraße 2, 30167 Hannover, Germany}

\begin{abstract}
    We study incomplete sets of orthogonal stabiliser states that cannot be extended by a further stabiliser state orthogonal to all of its members. Such sets are the stabiliser analogue of unextendible product bases (UPBs), and we thus call them unextendible stabiliser bases (USBs). Leveraging the symplectic geometry underlying the $n$-qudit Pauli group in prime local dimension, we explicitly construct USBs for systems of four qubits and three qudits of odd prime local dimension. Moreover, we show that these are the respective minimal qubit, respectively qudit numbers for which such bases exist, and that USBs exist for all $n\geq4$ qubit and for all $n\geq3$ odd-prime-dimensional qudit systems. Finally, we compare the resource-theoretic aspects of USBs with those of UPBs. We establish that, analogous to the case of UPBs, the orthogonal complement of every unextendible stabiliser set is a stabiliser-free subspace, and its normalised projector is necessarily magic; moreover, it is bound magic in odd prime dimension, yet need not be for qubits. We also show that USB unextendibility alone imposes no uniform quantitative obstruction to discrimination by stabiliser operations.
\end{abstract}

\maketitle

\section{Introduction}

The stabiliser formalism is a distinguished, efficiently classically simulable subtheory of finite-dimensional quantum mechanics. Pure stabiliser states are prepared from Pauli eigenstates using Clifford transformations, while stabiliser operations additionally allow Pauli measurements, classical control, stabiliser ancillas and discarding of subsystems \cite{Gottesman1998,AaronsonGottesman2004,DehaeneDeMoor2003,HostensDehaeneDeMoor2005,HeimendahlHeinrichGross2022}. Although these operations are not universal, supplementing them with suitable nonstabiliser states---or \emph{magic states}---is sufficient for universal quantum computation \cite{BravyiKitaev2005,Reichardt2005,BravyiHaah2012,Veitch2012}. This makes the geometry of stabiliser states, and the restrictions imposed on their preparation, transformation and measurement, central questions in the resource theory of magic \cite{HowardCampbell2017,VeitchEtAl2014}.

In this work, we study a basic completion problem for pairwise orthogonal stabiliser states. We call an orthonormal basis whose members are pure stabiliser states a \emph{stabiliser basis}, and ask whether every incomplete pairwise orthogonal family of pure stabiliser states can be enlarged to a stabiliser basis. The answer is negative: an incomplete orthogonal family may admit no additional pure stabiliser state orthogonal to all of its members. In analogy with unextendible product bases (UPBs), we call such a family an \emph{unextendible stabiliser basis} (USB).

Our main contribution consists of a general existence result of USBs for $n$-qudit systems of prime local dimension. More precisely, we construct examples of unextendible sets of orthogonal stabiliser states for 4-qubit and 3-qudit systems of odd prime local dimension, prove that these are the respectively minimal parties for which the phenomenon occurs, and further provide a lifting argument proving existence of USBs for all $n\geq4$ qubits and all $n\geq3$ qudits of odd prime local dimension.

In addition, we examine how far the analogy between USBs and UPBs extends operationally.

An unextendible product basis (UPB) is an incomplete orthogonal family of product states whose orthogonal complement contains no product vector \cite{BennettEtAl1999,DiVincenzoEtAl2003}. The normalised projector onto the complementary subspace of a UPB is bound entangled \cite{BennettEtAl1999}, while strong uncompletability under local extensions implies that the members of a UPB cannot be perfectly distinguished by LOCC \cite{DiVincenzoEtAl2003}. A USB presents a formally analogous obstruction internal to stabiliser theory: its orthogonal complement contains no pure stabiliser state. Consequently, every density operator supported entirely in that complementary subspace lies outside the stabiliser polytope and is therefore magic. For odd prime local dimension, we show that every USB complementary state is Wigner-positive and thus also bound magic. This conclusion fails for qubits: for our four-qubit construction below we give an explicit stabiliser protocol producing a state in a known magic-distillable region \cite{BravyiKitaev2005,CampbellBrowne2010,VeitchEtAl2014}. Finally, unextendibility implies no uniform obstruction to state discrimination. This should be contrasted with recent examples of pairwise orthogonal stabiliser states that cannot be perfectly distinguished by stabiliser operations \cite{Kwon2025}. The distinction shows that geometric unextendibility, stabiliser uncompletability and operational indistinguishability are related but inequivalent notions.

\section{Symplectic formalism}

We revisit the orthogonality relations for stabiliser states in prime local dimension \cite{Gottesman1998,Gottesman1999,HowardBrennanVala2015}, and define the notion of an unextendible stabiliser basis (USB).\\

\textbf{Orthogonality of stabiliser states.} Let $q$ be prime, $D=q^n$, and equip $V=\mathbb F_q^{2n}$ with its standard nondegenerate symplectic form $\omega$. A subspace $U\subset V$ is \emph{isotropic} if $\omega|_{U\times U}=0$, and \emph{Lagrangian} if it is maximal isotropic. We denote the set of Lagrangian subspaces of $V$ by $\Lag(V)$.

Fix a Weyl section $V\ni v\mapsto W_v\in\cP_{n,q}$, where $\cP_{n,q}$ denotes the $n$-qudit Pauli group, such that
\begin{align*}
    \tr[W_u^\dagger W_v]
    =D\delta_{u,v}\; ,
    \qquad
    \tr\left(W_uW_v\right)
    =D\delta_{u,-v}\; .
\end{align*}
For odd $q$, we may choose the standard phase convention
\begin{align*}
    W_uW_v
    =\zeta_q^{-\omega(u,v)/2}W_{u+v}\; ,
\end{align*}
for $\zeta_q=e^{2\pi i/q}$ and $1/2$ the inverse of $2$ in $\mathbb F_q$. In particular, the Weyl operators multiply without an additional phase on every isotropic subspace. A pure stabiliser state can thus be labelled by a pair $(L,\xi)$ with $L\in\Lag(V)$ and $\xi\in L^*$, where $L^*$ is the space of linear functionals on $L$.

For qubits, this is not possible since the central extension $1\ra\zz_4\ra\cP_{n,2}\ra V\ra 1$ does not split, equivalently the cohomology class of the $2$-cocycle $\gamma_W$ corresponding to this extension is nontrivial. Instead, choose a Hermitian Weyl section such that for $v,u\in V$ with $\omega(u,v)=0$,
\begin{align*}
    W_uW_v
    =(-1)^{\beta_W(u,v)}W_{u+v}\; ,
\end{align*}
where $\beta_W=\gamma_W/2$ (for $\omega(u,v)=0$) is the corresponding $\mathbb F_2$-valued multiplication cocycle \cite{Raussendorf2019,RaussendorfEtAl2023}. The admissible eigenvalues over $L\in\Lag(V)$ thus form the affine space
\begin{align*}
    X_{\beta_W}(L)
    =\{\xi:L\longrightarrow\mathbb F_2\mid\delta\xi=\beta_W|_{L\times L}\}\; ,
\end{align*}
where $(\delta\xi)(u,v)=\xi(u)+\xi(v)+\xi(u+v)$. 

To treat both cases (even/odd prime) uniformly, set
\begin{align*}
    \Xi(L)=
    \begin{cases}
        X_{\beta_W}(L) & q=2\\
        L^* & q\text{ odd}
    \end{cases}\; ,
\end{align*}
and for $\xi\in\Xi(L)$, define the eigenvalue function
\begin{align*}
    \chi_\xi(v)=
    \begin{cases}
        (-1)^{\xi(v)} & q=2\\
        \zeta_q^{\xi(v)} & q\text{ odd}
    \end{cases}\; .
\end{align*}
The corresponding pure stabiliser state $|L,\xi\rangle$ is the unique joint eigenstate satisfying $W_v|L,\xi\rangle=\chi_\xi(v)|L,\xi\rangle$ for every $v\in L$, and its rank-one projector is
\begin{align}\label{eq: stabiliser projector}
    \Pi_{L,\xi}
    =\dyad{L,\xi}
    =\frac{1}{D}\sum_{v\in L}\overline{\chi_\xi(v)}W_v\; .
\end{align}
We will write $\Stab^\mathrm{pure}_{n,q}$ for the set of pure, and $\Stab_{n,q}$ for the convex hull of all $n$-qudit stabiliser states.

\begin{lemma}[\cite{AaronsonGottesman2004,HostensDehaeneDeMoor2005}]\label{lm: stabiliser overlap}
    Let $|L,\xi\rangle$, $|M,\eta\rangle$ be two pure $n$-qudit stabiliser states. Then their squared overlap is given by
    \begin{align}\label{eq: stabiliser overlap}
        |\langle L,\xi|M,\eta\rangle|^2
        \ =\ \begin{cases}
            q^{\dim(L\cap M)-n} & \xi|_{L\cap M}=\eta|_{L\cap M}\\
            0 & \xi|_{L\cap M}\neq\eta|_{L\cap M}
        \end{cases}\; .
    \end{align}
\end{lemma}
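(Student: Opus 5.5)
The plan is to compute the overlap as a trace of the two rank-one projectors, $|\langle L,\xi|M,\eta\rangle|^2=\tr[\Pi_{L,\xi}\Pi_{M,\eta}]$, and expand both via Eq.~\eqref{eq: stabiliser projector}. This gives
\begin{align*}
    \tr[\Pi_{L,\xi}\Pi_{M,\eta}]
    =\frac{1}{D^2}\sum_{u\in L}\sum_{v\in M}\overline{\chi_\xi(u)}\,\overline{\chi_\eta(v)}\,\tr[W_uW_v]\; .
\end{align*}
By the trace relation $\tr(W_uW_v)=D\delta_{u,-v}$ of the Weyl section, only pairs with $v=-u$ survive. Such pairs have $u\in L\cap M$. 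The double sum therefore collapses to
\begin{align*}
    \frac{1}{D}\sum_{u\in L\cap M}\overline{\chi_\xi(u)}\,\overline{\chi_\eta(-u)}\; .
\end{align*}

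The next step is to rewrite the summand as a character of the subgroup $L\cap M$.

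For odd $q$, $\xi,\eta$ are linear, so $\chi_\eta(-u)=\overline{\chi_\eta(u)}$. The summand is then $\zeta_q^{-(\xi-\eta)(u)}$, which is a character of $L\cap M$.

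For $q=2$, $-u=u$, so the summand is $(-1)^{\xi(u)+\eta(u)}$. Here $\xi|_{L\cap M}$ and $\eta|_{L\cap M}$ both satisfy $\delta\xi=\beta_W|_{(L\cap M)\times(L\cap M)}=\delta\eta$. Hence their difference $\xi+\eta$ has vanishing coboundary on $L\cap M$, and is therefore linear on $L\cap M$. The summand is again a character of $L\cap M$.

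One point deserves care for qubits: the Hermitian Weyl section must satisfy $\tr(W_uW_u)=D$, which it does since $W_u^2=\one$. For odd $q$, the identity $\chi_\eta(-u)=\overline{\chi_\eta(u)}$ is consistent with $W_{-u}=W_u^\dagger$ under the chosen phase convention. I expect this bookkeeping of phases, especially the linearity of $\xi+\eta$ on the intersection in the qubit case, to be the main point requiring care.

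Finally, apply orthogonality of characters on the $\mathbb F_q$-vector space $L\cap M$. The sum equals $|L\cap M|=q^{\dim(L\cap M)}$ if the character is trivial, i.e.\ if $\xi|_{L\cap M}=\eta|_{L\cap M}$, and $0$ otherwise. Dividing by $D=q^n$ gives exactly Eq.~\eqref{eq: stabiliser overlap}.
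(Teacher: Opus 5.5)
Your proposal is correct and follows essentially the same route as the paper. You expand $\tr[\Pi_{L,\xi}\Pi_{M,\eta}]$ in Weyl operators, collapse the double sum onto $L\cap M$ via $\tr(W_uW_v)=D\delta_{u,-v}$, identify the summand as a character of $L\cap M$ (using that $\xi+\eta$ has trivial coboundary there in the qubit case), and finish with character orthogonality. You also make explicit the step $\overline{\chi_\eta(-u)}=\chi_\eta(u)$, which the paper leaves implicit. The paper's extra remark on rephasing the Weyl section is not needed for the statement itself.
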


\begin{proof}
    Let $R=L\cap M$. Using the Weyl expansions of the two projectors and the relation $\tr[W_uW_v]=D\delta_{u,-v}$ gives
    \begin{align*}
        |\langle L,\xi|M,\eta\rangle|^2
        &=\tr\left(\Pi_{L,\xi}\Pi_{M,\eta}\right)\\
        &=\frac{1}{D^2}\sum_{u\in L}\sum_{v\in M}\overline{\chi_\xi(u)}\overline{\chi_\eta(v)}\tr[W_uW_v]\\
        &=\frac{1}{D}\sum_{u\in R}\overline{\chi_\xi(u)}\chi_\eta(u)\; .
    \end{align*}
    Rephasing the Weyl section by $W_v\mapsto(-1)^{s(v)}W_v$ sends
    \begin{align*}
        \beta_W\longmapsto\beta_W+\delta s\; ,\qquad
        \xi\longmapsto\xi+s|_L\; ,
    \end{align*}
    and therefore preserves all restriction-equality conditions. In particular, note that orthogonality is independent of the chosen Weyl representative.
    
    For odd $q$, the summand is the additive character $u\mapsto\zeta_q^{\eta(u)-\xi(u)}$
    of $R$. For $q=2$, the restrictions of $\xi$ and $\eta$ have the same coboundary $\beta_W|_{R\times R}$, so $\eta|_R-\xi|_R$ is linear and $u\mapsto(-1)^{\eta(u)-\xi(u)}$ is again an ordinary character of $R$. Character orthogonality therefore gives (in both cases)
    \begin{equation*}
        \sum_{u\in R}\overline{\chi_\xi(u)}\chi_\eta(u)=
        \begin{cases}
            |R| & \xi|_R=\eta|_R\\
            0 & \xi|_R\neq\eta|_R
        \end{cases}\; .\qedhere
    \end{equation*}
\end{proof}

In particular, orthogonality requires
$L\cap M\neq\{0\}$, since two labels necessarily agree on the zero subspace.

For odd $q$, Eq.~(\ref{eq: stabiliser overlap}) admits a useful affine-geometric reformulation. To see this, note that the symplectic form induces an isomorphism $V/L\longrightarrow L^*$ given by
\begin{align*}
    a+L\longmapsto\xi_a\; ,\qquad
    \xi_a(v)=\omega(a,v)\; .
\end{align*}
Indeed, the kernel of the map $a\mapsto\xi_a$ is $L^\perp=L$, and both $V/L$ and $L^*$ have dimension $n$. Thus every pair $(L,\xi)$ is represented by a unique affine Lagrangian subspace $A=a+L$ such that $\xi=\xi_a$, and we write $|A\rangle:=|L,\xi_a\rangle$.

\begin{corollary}[\cite{Gross2006}]\label{cor: affine stabiliser overlap}
    Let $A=a+L$ and $B=b+M$ be affine Lagrangian subspaces of $V$, with $q$ odd. Then
    \begin{align}\label{eq: affine stabiliser overlap}
        |\langle A|B\rangle|^2
        \ =\ \frac{1}{D}|A\cap B|\; .
    \end{align}
\end{corollary}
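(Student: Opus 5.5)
The plan is to reduce Corollary~\ref{cor: affine stabiliser overlap} to Lemma~\ref{lm: stabiliser overlap} by translating the eigenvalue condition into an intersection condition. Write $R=L\cap M$, $\xi=\xi_a$ and $\eta=\xi_b$. By the lemma it suffices to show two things:
\begin{itemize}
\item[(i)] $A\cap B\neq\emptyset$ if and only if $\xi_a|_R=\xi_b|_R$;
\item[(ii)] when nonempty, $A\cap B$ has exactly $|R|=q^{\dim R}$ elements.
\end{itemize}
Granting both, the lemma gives $|\langle A|B\rangle|^2=q^{\dim R-n}=|R|/D=|A\cap B|/D$ in the first case and $0=|\emptyset|/D$ in the second, since $D=q^n$.

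\emph{Step (ii).} If $c\in A\cap B$, then $A=c+L$ and $B=c+M$. Hence $A\cap B=c+(L\cap M)=c+R$, which has cardinality $|R|$.

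\emph{Step (i), forward direction.} Suppose $c\in A\cap B$. Then $c-a\in L$, so $\xi_c=\xi_a$ as functionals on $L$, because $\omega(c-a,v)=0$ for all $v\in L$ by isotropy. Likewise $\xi_c=\xi_b$ on $M$. Restricting both to $R$ gives $\xi_a|_R=\xi_b|_R$.

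\emph{Step (i), converse.} This is the main step. Assume $\omega(a-b,v)=0$ for all $v\in R$, i.e.\ $a-b\in R^\perp$. Using nondegeneracy of $\omega$ and $L^\perp=L$, $M^\perp=M$, we get
\begin{align*}
    R^\perp=(L\cap M)^\perp=L^\perp+M^\perp=L+M\; .
\end{align*}
So we can write $a-b=\ell-m$ with $\ell\in L$ and $m\in M$. Then $c:=a-\ell=b-m$ lies in both $A$ and $B$, so $A\cap B\neq\emptyset$.

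The only genuine ingredient is the identity $(L\cap M)^\perp=L^\perp+M^\perp$ for a nondegenerate form on a finite-dimensional space. It follows from the dimension count $\dim U^\perp=2n-\dim U$ together with the easy inclusion $L^\perp+M^\perp\subseteq(L\cap M)^\perp$. I expect this step to be the only place needing care; everything else is bookkeeping with the isomorphism $V/L\cong L^*$ fixed above.
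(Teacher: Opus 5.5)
Your proposal is correct and follows essentially the same route as the paper: both reduce to Lemma~\ref{lm: stabiliser overlap} via the identity $L+M=R^\perp$ (the paper gets it from $(L+M)^\perp=L\cap M$, you from $(L\cap M)^\perp=L^\perp+M^\perp$), and both use $|A\cap B|=|R|$ when the intersection is nonempty. The only cosmetic difference is that you prove the forward direction of (i) directly via isotropy, whereas the paper packages both directions into one chain of equivalences starting from $A\cap B\neq\emptyset\iff a-b\in L+M$.
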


\begin{proof}
    Note that the affine subspaces intersect if and only if $a-b\in L+M$. Since $L$ and $M$ are Lagrangian,
    \begin{align*}
        (L+M)^\perp
        =L^\perp\cap M^\perp
        =L\cap M
        =:R\; ,
    \end{align*}
    and thus $L+M=R^\perp$. It follows that
    \begin{align*}
        A\cap B\neq\emptyset
        &\Longleftrightarrow a-b\in R^\perp\\
        &\Longleftrightarrow \omega(a-b,v)=0\ \forall v\in R
        \Longleftrightarrow \xi_a|_R=\xi_b|_R\; .
    \end{align*}
    If the intersection is nonempty, it is an affine subspace with linear component $R$, and therefore $|A\cap B|=q^{\dim R}$, from which the claim follows with Lm.~\ref{lm: stabiliser overlap}.
\end{proof}

In particular, $\langle A|B\rangle=0$ if and only if $A\cap B=\emptyset$.\\

\textbf{Unextendible stabiliser bases.} Let $q$ be prime.

\begin{definition}\label{def: USB}
    Let $\cU=\left\{\Pi_{L_i,\xi_i}\right\}_{i=1}^k$, $k<D$ be a set of pairwise orthogonal $n$-qudit stabiliser states. Then $\cU$ is an \emph{unextendible stabiliser basis (USB)} if no other $n$-qudit stabiliser state is orthogonal to every element in $\cU$.
\end{definition}

Our terminology mimics that of unextendible product bases (UPBs), see also Sec.~\ref{sec: UPBs vs USBs}.

For odd $q$, Cor.~\ref{cor: affine stabiliser overlap} shows that a set of orthogonal stabiliser states corresponds to a set of pairwise disjoint affine Lagrangian subspaces. Such a set is unextendible precisely when every other affine Lagrangian intersects at least one member of the set nontrivially.

For qubits, the same obstruction is formulated in the cocycle-twisted fibres $X_{\beta_W}(L)$. A further stabiliser state $\Pi_{M,\eta}$ is orthogonal to every member of $\cU$ precisely when its affine character disagrees with each $\xi_i$ on $M\cap L_i$.

\section{USBs for qudits in prime dimension}\label{sec: unextendibility}

We provide examples of unextendible sets of pairwise orthogonal $n$-qudit stabiliser states (USBs) for $q$ even and odd prime. In Sec.~\ref{sec: unextendible sets of orthogonal stabiliser states}, we provide an explicit example for $n=4$ qubits, and in Sec.~\ref{sec: unextendible sets of orthogonal qudit stabiliser states} for $n=3$ qudits in odd prime dimension. The general existence result is Thm.~\ref{thm: unextendibility} in Sec.~\ref{sec: general USB existence}. We also treat the analogue assertion for the symplectic theory in the case $q=2$ in Sec.~\ref{app: unextendible sets of orthogonal symplectic stabiliser vectors}.

\subsection{A $4$-qubit USB}\label{sec: unextendible sets of orthogonal stabiliser states}

We will write $\{\one_i,X_i,Y_i,Z_i\}$ for the Pauli operators on the $i$-th qubit, that is, $X_i=\one\otimes X\otimes \one$. We will use the common shorthand and omit tensor products, both for Pauli strings, e.g. $X_i=\one X\one$ and stabiliser states, e.g. $\ket{z_+x_-y_+}=\ket{z_+}\otimes\ket{x_-}\otimes\ket{y_+}$ where $\ket{x_\pm},\ket{y_\pm},\ket{z_\pm}$ denote the $\pm1$-eigenstates of the respective Pauli operators.

\begin{theorem}\label{thm: 4 qubit USB}
    The following pairwise orthogonal $4$-qubit stabiliser states define an unextendible stabiliser basis:
    \begin{align*}
        \cS_4
        :=\bigl\{
         &\ket{z_+z_+z_+z_+},
         \ket{z_+z_+z_-x_+},
         \ket{z_-x_+x_+y_+},
         \ket{z_-x_+y_-y_-},\\
         &\ket{x_+z_-x_-y_+},
         \ket{x_+z_-y_+y_-},
         \ket{x_-x_-z_+z_-},
         \ket{x_-x_-z_-x_-}
        \bigr\}.
    \end{align*}
\end{theorem}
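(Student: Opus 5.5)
The plan has two parts: check pairwise orthogonality directly, then prove unextendibility by translating orthogonality into conditions on the stabiliser group of a hypothetical extra state and ruling out every case.

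\textbf{Orthogonality.} Every element of $\cS_4$ is a product state. Two product stabiliser states are orthogonal iff on some qubit they carry the same Pauli axis with opposite signs. This is the case $\dim$ of the single-qubit intersection $=1$ with disagreeing labels in Lm.~\ref{lm: stabiliser overlap}. I will verify the $\binom{8}{2}=28$ pairs against the table. It helps to note the block structure. On qubits $1,2$ the eight states come in four pairs $z_+z_+,\ z_-x_+,\ x_+z_-,\ x_-x_-$, and within each pair they are separated on qubits $3,4$. Most cross-pair orthogonalities are then witnessed on qubit $1$ or $2$. The remaining ones, such as $z_+z_+$ versus $x_-x_-$, are witnessed on qubit $3$ or $4$.

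\textbf{Reformulating unextendibility.} Write $L_i=\langle P^{(i)}_1,\dots,P^{(i)}_4\rangle$ for the local Lagrangian of the $i$-th state, where $P^{(i)}_j$ is its axis on qubit $j$. Suppose $\Pi_{M,\eta}$ is orthogonal to all of $\cS_4$. By Lm.~\ref{lm: stabiliser overlap}, for every $i$ there is then a Pauli string $s_i\in M\cap L_i$ whose tensor factors lie in $\{\one,P^{(i)}_j\}$. Moreover, its sign in the stabiliser group of $|M,\eta\rangle$ must be opposite to the product of the local signs of the $i$-th state on its support. So $M$ must contain eight such ``witness'' strings $s_1,\dots,s_8$, all commuting pairwise, with prescribed signs that must be consistent. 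In particular, no product of them with trivial Pauli part may have sign $-1$. I will derive a contradiction from these requirements.

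\textbf{Case analysis on supports.} I plan to split according to the supports of the witnesses. First, if some $s_i$ has weight one, then $M$ contains a single-qubit Pauli. Then $|M,\eta\rangle$ factorises as a single-qubit eigenstate on that qubit times a $3$-qubit stabiliser state. Restricting $\cS_4$ accordingly yields a $3$-qubit orthogonality problem. I expect this to be ruled out using the paper's minimality result, or by a direct check that the restricted family leaves no room. Second, I will use the pair structure on qubits $1,2$. States $1$–$4$ have axis $z$ on qubit $1$ and states $5$–$8$ have axis $x$. Since $Z_1$ and $X_1$ anticommute, witnesses from the two halves acting nontrivially on qubit $1$ cannot both be present unless compensated by anticommutation on another qubit. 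The same holds on qubit $2$ with the grouping $\{1,2,5,6\}$ versus $\{3,4,7,8\}$. Commutation then forces many witnesses to live on qubits $3,4$ only, or on a fixed qubit-$1$/qubit-$2$ axis. Within each pair, the two states use different axes on qubits $3,4$: $\{zz,zx\}$, $\{xy,yy\}$, $\{xy,yy\}$, $\{zz,zx\}$. I will show that the required commuting witnesses on qubits $3,4$, together with sign consistency, span more than a $4$-dimensional isotropic space or produce a sign contradiction.

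\textbf{Main obstacle.} The hard step is the genuinely entangled case, where all witnesses have weight $\geq2$ and the sign (cocycle $\beta_W$) bookkeeping matters. Pure commutation arguments may not close this case. If so, I will fall back on enumerating the finitely many Lagrangians $M$ compatible with the commutation constraints, using the symmetries of $\cS_4$: the permutations of the pairs induced by local Cliffords that preserve the set. As an independent sanity check, the total of $36720$ four-qubit stabiliser states can be tested exhaustively against $\cS_4$.
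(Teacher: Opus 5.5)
There is a genuine gap. The decisive step of your unextendibility argument is left open, and the tool you name for the reduced case points the wrong way.

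Nothing in your plan handles the ``genuinely entangled'' case, yet a pure commutation argument closes it without any sign or cocycle bookkeeping. The missing idea is the following. Within each pair $(1,2),(3,4),(5,6),(7,8)$, the two Lagrangians are $U\oplus\langle v\rangle$ and $U\oplus\langle w\rangle$, where $U$ is a common $3$-dimensional isotropic subspace and $v,w\in U^\perp$ with $\omega(v,w)=1$ (e.g.\ $U=\langle z_1,z_2,z_3\rangle$, $v=z_4$, $w=x_4$ for the first pair). Suppose a Lagrangian $M$ met both members of a pair but not $U$. Then it would contain $u+v$ and $u'+w$ with $u,u'\in U$, and $\omega(u+v,u'+w)=1$ contradicts isotropy. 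Hence $M$ meets each of $\langle z_1,z_2,z_3\rangle$, $\langle z_1,x_2,y_4\rangle$, $\langle x_1,z_2,y_4\rangle$, $\langle x_1,x_2,z_3\rangle$ nontrivially. A short isotropy check on the $z_1,z_2,x_1,x_2$ coefficients of four such vectors then forces $z_3\in M$ or $y_4\in M$. So $|M,\eta\rangle$ always factorises off a $Z_3$- or $Y_4$-eigenstate, and your case split on witness weights becomes unnecessary. Your remark that the qubit-$3,4$ axes differ within each pair is the germ of this, but you never draw the consequence that $M$ must meet the common part. Without it, the plan falls back on exhaustive enumeration, which would certify the finite claim but leaves the case analysis doing no work.

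For the factorised case, the paper's minimality result cannot help. Theorem~\ref{thm: 3-qubit extendibility} says pairwise orthogonal three-qubit families always extend, which is the opposite of what you need. It does not even apply, because the residual families are not pairwise orthogonal: states of $\cS_4$ that were separated only on the fixed qubit become non-orthogonal. For $Y_4=+1$, the residues $\ket{z_+z_+z_+}$ and $\ket{x_-x_-z_+}$ of the first and seventh states have overlap of modulus $1/2$.

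What is needed is a dedicated obstruction for a non-orthogonal six-element set. The paper proves that no stabiliser state is orthogonal to $\cS_3$, as follows. Orthogonality to the two states with $\ket{z_+z_+}_{12}$ puts one of $-Z_1,-Z_2,-Z_1Z_2$ into the stabiliser group. Orthogonality to the two with $\ket{x_-x_-}_{12}$ puts in one of $X_1,X_2,-X_1X_2$. Commutation leaves three combinations, and each is excluded by the two middle states. The paper then checks that the four residual sets for $Z_3=\pm1$, $Y_4=\pm1$ are local-Clifford images of $\cS_3$. Without the reduction to $z_3$ or $y_4$, you would need a separate obstruction of this kind for every qubit, axis and sign that a weight-one witness could have.
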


\begin{proof}
    The details can be found in App.~\ref{app: proof - 4 qubits}.
\end{proof}

Consequently, not every orthogonal set of stabiliser states is extendible to a stabiliser basis.

\subsection{A $3$-qudit USB (for $q$ odd prime)}\label{sec: unextendible sets of orthogonal qudit stabiliser states}

For $q$ odd prime, the single-qudit Pauli operators are
\begin{align*}
    X\ket j
    &=\ket{j+1}\; ,&
    Z\ket j
    &=\zeta^j_q\ket j\; ,&
    W&:=\zeta^{\frac{1}{2}}_qXZ\; .
\end{align*}
We denote their eigenbases by $\{\ket{x_a}\}_{a\in\mathbb F_q}$, $\{\ket{z_a}\}_{a\in\mathbb F_q}$, and $\{\ket{w_a}\}_{a\in\mathbb F_q}$, respectively. In the following, we again omit writing tensor products, and denote by $\mathbb F_q^\times\cong \zz_{q-1}$ the multiplicative units in $\mathbb F_q$.

\begin{theorem}\label{thm: 3 qudit USB}
    Let $q$ be odd prime. The following set of pairwise orthogonal three-qudit stabiliser states defines an unextendible stabiliser basis:
    \begin{equation}\label{eq: unextendible qudit stabiliser set}
    \begin{aligned}
        \cS_q
        :=\bigl\{\ket{x_az_bx_c}\mid a,b,c\in\mathbb F_q^\times\bigr\}
        \ &\cup\ 
         \bigl\{\ket{w_0z_0w_0}\bigr\}\\
        \cup\
        \bigl\{\ket{x_0w_bw_c}\mid b,c\in\mathbb F_q^\times\bigr\}
        \ &\cup\ 
         \bigl\{\ket{w_aw_0x_0}\mid a\in\mathbb F_q^\times\bigr\}\; .
    \end{aligned}
    \end{equation}
\end{theorem}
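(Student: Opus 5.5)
We work with the affine picture of Cor.~\ref{cor: affine stabiliser overlap}, or equivalently with measurement statistics. $\cS_q$ consists of product states drawn from four product bases: $\cB_1=\{x_az_bx_c\}$, $\cB_2=\{w_az_bw_c\}$, $\cB_3=\{x_aw_bw_c\}$ and $\cB_4=\{w_aw_bx_c\}$, each indexed by $(a,b,c)\in\mathbb F_q^3$. Each $\cB_j$ is the eigenbasis of a product Lagrangian $L_j$. From $\cB_1$ we keep the labels in $(\mathbb F_q^\times)^3$; from $\cB_2$ only $(0,0,0)$; from $\cB_3$ the labels $\{0\}\times(\mathbb F_q^\times)^2$; from $\cB_4$ the labels $\mathbb F_q^\times\times\{0\}\times\{0\}$.

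\textbf{Pairwise orthogonality.} This is a direct check with Lm.~\ref{lm: stabiliser overlap} on each qudit, since $x$, $z$ and $w$ belong to mutually unbiased single-qudit bases. Two product states are orthogonal iff on some qudit they carry the same Pauli type with different eigenvalues. Across families the witnesses are as follows: qudit $1$ ($x_a$ with $a\neq0$ versus $x_0$) for $\cB_1$ against $\cB_3$; qudit $2$ ($z_b$ with $b\ne0$ versus $z_0$) for $\cB_1$ against $\cB_2$; qudit $3$ for $\cB_1$ against $\cB_4$; qudit $3$ ($w_c$ with $c\ne0$ versus $w_0$) for $\cB_2$ against $\cB_3$; qudit $1$ for $\cB_2$ against $\cB_4$; qudit $2$ ($w_b$ with $b\ne0$ versus $w_0$) for $\cB_3$ against $\cB_4$. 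We also record $|\cS_q|=(q-1)^3+(q-1)^2+(q-1)+1<q^3$.

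\textbf{Reformulation of orthogonality.} Suppose $\ket{M,\eta}$ is orthogonal to all of $\cS_q$. By Lm.~\ref{lm: stabiliser overlap}, the outcome distribution of $\ket{M,\eta}$ in basis $\cB_j$ is uniform on an affine subspace $S_j\subseteq\mathbb F_q^3$. This subspace is a coset of the annihilator of $R_j=M\cap L_j$ in the dual coordinates, of dimension $3-\dim R_j$. Orthogonality to all of $\cS_q$ then reads:
\begin{align*}
S_1&\subseteq\{a=0\}\cup\{b=0\}\cup\{c=0\}, &
(0,0,0)&\notin S_2,\\
S_3\cap\bigl(\{0\}\times(\mathbb F_q^\times)^2\bigr)&=\emptyset, &
S_4\cap\bigl(\mathbb F_q^\times\times\{0\}\times\{0\}\bigr)&=\emptyset.
\end{align*}
We will use one key observation throughout. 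If some single-qudit coordinate of $S_j$ is constant, then $M$ contains the corresponding local Pauli direction. Since $M$ is Lagrangian, it then contains the whole symplectic complement, so $\ket{M,\eta}$ factorises as that single-qudit eigenstate tensored with a two-qudit stabiliser state.

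\textbf{Case analysis.} The main case split is on how $S_1$ sits inside the union of three coordinate hyperplanes.

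In case (i), $S_1$ lies in a single hyperplane, so one coordinate is fixed to $0$. Say $a=0$, which forces a factor $\ket{x_0}$ on qudit $1$. Then the constraints from $\cB_3$ and $\cB_4$ become a two-qudit problem: the remaining factor must avoid $\{w_bw_c: b,c\ne0\}$ in its $ww$-statistics. The constraint from $\cB_2$, on qudit 1, is automatic because $x_0$ and $w_0$ are unbiased. Applying the same reduction again to this two-qudit state, which now has too few stabiliser directions to avoid all the excluded labels, should lead to a contradiction. The cases $b=0$ (factor $z_0$ on qudit 2, then use $\cB_2$ and $\cB_4$) and $c=0$ (factor $x_0$ on qudit 3, then use $\cB_2$ and $\cB_3$) are treated symmetrically, and each ends in a one- or two-qudit contradiction.

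In case (ii), $S_1$ is contained in the union of three hyperplanes but in none of them. A Chevalley--Warning or polynomial-method argument shows this forces $\dim S_1\le1$. For $q>3$ it is impossible even for lines, since a line has $q>3$ points and meets each hyperplane in at most one point unless it lies inside it. For $q=3$, however, it can occur: a line whose three points each have exactly one zero coordinate, such as $(0,1,1),(1,0,2),(2,2,0)$. In that case $M$ contains a $2$-dimensional subspace of $L_1$. These few configurations are explicitly enumerable, and I would check them directly against the constraints from $\cB_2$, $\cB_3$ and $\cB_4$.

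I expect the main obstacle to be case (ii) together with the entangled $M$ that survive case (i). There the local-factorisation shortcut is unavailable, and one must show directly that the affine subspaces $S_2,S_3,S_4$ cannot simultaneously dodge their excluded label sets. My first attempt would parametrise the two-dimensional Lagrangians of two qudits, of which there are finitely many types up to local diagonal symplectic maps, and run the remaining conditions on each type.
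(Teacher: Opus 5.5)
Your outline has the same architecture as the paper's proof. You take the support $S_1$ of the candidate in the $X_1Z_2X_3$ eigenbasis, use counting to force it into a coordinate hyperplane (except for a full-direction line when $q=3$), split off the deterministic local factor, and reduce to two qudits. However, the reduction step is stated incorrectly, and the two places where the real work lies are left open.

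After splitting off $\ket{x_0}_1$ (resp.\ $\ket{z_0}_2$, $\ket{x_0}_3$), only $\cB_1$ is discharged on the fixed qudit. For each of $\cB_2,\cB_3,\cB_4$ the local overlap there is $1$ or has modulus $q^{-1/2}$. Unbiased means \emph{nonzero} overlap, not orthogonality, so all three blocks pass to the residual two-qudit state. The residual sets are therefore
\begin{align*}
\cT_1&=\{\ket{w_bw_c}\}_{b,c\neq0}\cup\{\ket{w_0x_0},\ket{z_0w_0}\}\;,\\
\cT_2&=\{\ket{x_0w_c}\}_{c\neq0}\cup\{\ket{w_ax_0}\}_{a\neq0}\cup\{\ket{w_0w_0}\}\;,\\
\cT_3&=\{\ket{x_0w_b}\}_{b\neq0}\cup\{\ket{w_aw_0}\}_{a\neq0}\cup\{\ket{w_0z_0}\}\;.
\end{align*}
Each constraint you drop is essential. $\ket{x_0w_0x_1}$ is orthogonal to $\cB_1\cup\cB_3\cup\cB_4$ and is excluded only by $\ket{w_0z_0w_0}$, which you declare automatic. $\ket{w_0z_0w_1}$ is excluded only by $\cB_3$, which you omit for $b=0$. $\ket{w_1w_0x_0}\in\cB_4$ is orthogonal to $\cB_1\cup\cB_2\cup\cB_3$, which are exactly the blocks you keep for $c=0$. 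So the two-qudit problems as you set them up are extendible, and no contradiction can come out of them.

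Even with the correct sets, the two-qudit step is the real content, and ``applying the same reduction again'' only settles $\cT_1$. In the $W_1W_2$ eigenbasis its excluded labels fill $(\mathbb F_q^\times)^2$, so the affine support lies in $\{s_1=0\}\cup\{s_2=0\}$, hence (as $q>2$) in one of them. This forces a factor $\ket{w_0}$, after which $\ket{w_0x_0}$ and $\ket{z_0w_0}$ finish.

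For $\cT_2$ and $\cT_3$ no basis carries such a large excluded set. The missing idea is that a two-qudit stabiliser state is either a product (if $\dim(M\cap V_1)=1$) or of the form $(\one\otimes U)\ket{\Phi_q}$ with $U$ Clifford (if $M\cap V_1=0$). In the latter case $\langle rs|\psi_U\rangle=q^{-1/2}\langle s|U|\overline{r}\rangle$. Orthogonality to the $q-1$ states $\ket{w_ax_0}$ (resp.\ $\ket{w_aw_0}$) then forces $U\ket{\overline{w_0}}\propto\ket{x_0}$ (resp.\ $\ket{w_0}$), contradicting orthogonality to $\ket{w_0w_0}$ (resp.\ $\ket{w_0z_0}$). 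The product case is a four-way check.

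For the $q=3$ exceptional line, no enumeration is needed, and only $\ket{w_0z_0w_0}$ is used. Let $E=\langle e_1,e_2,e_3\rangle$ be the Lagrangian of $\langle X_1,Z_2,X_3\rangle$ and $f_1,f_3$ the vectors of $W_1,W_3$. The kernel $K=M\cap E=\{\sum_iu_ie_i\mid\sum_iu_ir_i=0\}$ is two-dimensional. Pick $p\in A=a+M$ with $\omega(p,e_2)=0$, which exists because the line meets $H_2$. Add $k\in K$ with $u_1=-\omega(p,f_1)$, $u_3=-\omega(p,f_3)$, and $u_2$ solved using $r_2\neq0$. Then $p+k\in A\cap\langle f_1,e_2,f_3\rangle$, so by Cor.~\ref{cor: affine stabiliser overlap} the overlap with $\ket{w_0z_0w_0}$ is nonzero.

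The bound $\dim S_1\le1$ also needs no Chevalley--Warning. The estimate $|S_1\cap H_i|\le|S_1|/q$ for $S_1\not\subseteq H_i$ gives $q\le3$ at once. For $q=3$, equality forces three pairwise disjoint, hence parallel, hyperplanes of $S_1$. This is impossible when $\dim S_1\geq2$, since the coordinate functionals would then be proportional on the direction space.
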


\begin{proof}
    The details are provided in App.~\ref{app: proof - 3 qudits}.
\end{proof}

\subsection{General existence results for USBs}\label{sec: general USB existence}

Our general existence result for USBs establishes the number of parties in the above examples to be minimal, and lifts them to larger ones.

\begin{theorem}\label{thm: unextendibility}
    Unextendible sets of orthogonal $n$-qudit stabiliser states with $q$ prime exist if and only if
    \begin{enumerate}
        \item $n\geq 4$ for qubits ($q=2$), or
        \item $n\geq 3$ for qudits ($q$ odd).
    \end{enumerate}
\end{theorem}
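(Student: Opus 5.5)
Existence will be obtained by lifting the base cases and non-existence by showing that every orthogonal stabiliser set is extendible below the thresholds.

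\emph{Existence.} Theorem~\ref{thm: 4 qubit USB} and Theorem~\ref{thm: 3 qudit USB} supply USBs for $n=4$ ($q=2$) and $n=3$ ($q$ odd). To pass from $n$ to $n+1$, take a USB $\cU=\{\Pi_{L_i,\xi_i}\}_{i=1}^k$ on $n$ qudits and set
\[
\cU'=\{\Pi_{L_i,\xi_i}\otimes\dyad{z_a}\mid i\le k,\ a\in\mathbb F_q\}.
\]
The elements of $\cU'$ are pairwise orthogonal, and $|\cU'|=kq<q^{n+1}$.

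Suppose a pure stabiliser state $|\psi\rangle=|M,\eta\rangle$ on $n+1$ qudits were orthogonal to all of $\cU'$. Since $\sum_a\dyad{z_a}=\one$, orthogonality to all of $\cU'$ gives $\tr[(\Pi_{L_i,\xi_i}\otimes\one)\psi]=0$, so $\Pi_{L_i,\xi_i}\otimes \one$ annihilates $\psi$ for each $i$. Now measure the last qudit in the $Z$ basis, which is a stabiliser operation. Every outcome with nonzero probability leaves a pure $n$-qudit stabiliser state $\phi$; this is standard, because a projection onto a Pauli eigenspace maps stabiliser states to stabiliser states. Each such $\phi$ is annihilated by every $\Pi_{L_i,\xi_i}$, since $\Pi_{L_i,\xi_i}$ commutes with $\one\otimes\dyad{z_a}$. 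Hence $\phi$ is orthogonal to all of $\cU$, contradicting unextendibility. This lifting step should be routine.

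\emph{Nonexistence.} We must show that every orthogonal stabiliser set is extendible for $n\le3$ when $q=2$, and for $n\le 2$ when $q$ is odd. For $n=1$, distinct orthogonal states lie in a common basis, since Lemma~\ref{lm: stabiliser overlap} forces $L\cap M\ne0$, i.e.\ $L=M$. For larger $n$ I would argue through Lemma~\ref{lm: stabiliser overlap}, in the affine form of Corollary~\ref{cor: affine stabiliser overlap} for odd $q$. Given pairwise disjoint affine Lagrangians $A_1,\dots,A_k$ with $k<q^n$, we seek an affine Lagrangian $B$ disjoint from all of them. Counting points gives $|\bigcup_i A_i|=kq^n<q^{2n}$, so some point $p$ is uncovered.

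For $n=2$ and $q$ odd, I will try to find, among the Lagrangians $M$, a translate $p+M$ that misses every $A_i$. We have $(p+M)\cap A_i\neq\emptyset$ iff $p-a_i\in M+L_i=(M\cap L_i)^\perp$. I would average over the $q+1$ Lagrangians through a fixed line, or use a double count over pairs (point, Lagrangian). The aim is to show that if every $B$ met some $A_i$, then the $A_i$ would already partition $V$ into $q^2$ parallel classes, i.e.\ form a basis. The qubit cases $n=2,3$ are finite, and I would handle them the same way with the cocycle fibres $X_{\beta_W}(L)$. If the counting is insufficient, I would fall back on an exhaustive computer check up to Clifford equivalence, which is feasible for $n=3$ qubits since there are 1080 stabiliser states.

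The main obstacle I expect is the $n=2$ odd-$q$ case for all primes $q$ simultaneously. There a uniform double-counting argument is needed, showing that an incomplete disjoint family always leaves some affine Lagrangian free. I would organise it by the structure of the $L_i$: if all $L_i$ are equal the family is contained in a basis and extendible, and otherwise I would use that two distinct Lagrangians in $\mathbb F_q^4$ meet in a line.
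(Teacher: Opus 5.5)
Your existence half is the paper's argument: the base cases plus the lifting Lemma~\ref{lm: lifting}, with the same projection onto a $Z$-outcome branch. Your $n=1$ case is also fine.

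The gap is that the ``only if'' direction is only a plan, and precisely in the cases that carry the content: $n=2$ (all $q$) and $n=3$ ($q=2$).

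For $n=2$ the missing idea is structural, not a count. Pairwise orthogonality gives $L_i\cap L_j\neq\{0\}$, and in $\mathbb F_q^4$ such a family has a common line $D$ (Lemma~\ref{lm: rank-two-pencil}). To see this, let $D=L_1\cap L_2$ and let $L_3$ meet $L_1,L_2$ in $u,v$. Either $u\propto v\in D$, or $L_3=\langle u,v\rangle\subseteq L_1+L_2=D^\perp$; in both cases $D\subseteq L_3$.

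A double count cannot replace this. For an uncovered point $p$ one has $(p+M)\cap A_i\neq\emptyset$ iff $p-a_i\in(M\cap L_i)^\perp$. This holds for \emph{every} $p$ as soon as $M\cap L_i=\{0\}$. So you first need a Lagrangian meeting all $L_i$, which is exactly the common-line statement.

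Given $D$, your idea goes through. A Clifford unitary maps $D$ to $Z_1$, so every state becomes $|z_t\rangle\otimes|\phi\rangle$. Each $t$-fibre is then a one-qudit orthogonal family, which lies in a single basis $\cB_t$. This works uniformly in $q$, so two qubits need no separate treatment. Note also that complete families are not single parallel classes in general, since different fibres may use different $\cB_t$.

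For $n=3$ qubits you offer only ``the same way with the cocycle fibres'' or a computer search you have not carried out. Any proof here must use the twist $\beta_W$ essentially. Theorem~\ref{thm: 3 qubits symplectic} exhibits five pairwise orthogonal \emph{untwisted} three-qubit states with no extension, so a counting or affine argument blind to the cocycle would prove a false statement.

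Moreover, pairwise-intersecting Lagrangians in $\mathbb F_2^6$ need not share a line. Via the spin embedding into $Q^+(7,2)$, the paper shows they lie in one of three families (Lemma~\ref{lm: spin trichotomy}):
(I) Lagrangians containing a common line, which reduces to $n=2$;
(II) $\{M\}\cup\{L\mid\dim(L\cap M)=2\}$, handled by two-point supports in the $M$-eigenbasis;
(III) a generator class of a hyperbolic $Q^+(5,2)$. This is exactly where the untwisted counterexample lives. The paper handles it by making the states real and identifying them with $E_8$ root rays, i.e.\ nonsingular vectors of a Witt-index-four quadric on $\mathbb F_2^8$. There any orthogonal set extends to the eight nonsingular vectors of a maximal isotropic subspace.

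An exhaustive search would be an acceptable substitute, as the paper itself notes. Without it or an argument of this kind, the minimality claim is not established.
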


\begin{proof}
    Existence follows from Thm.~\ref{thm: 4 qubit USB} and Thm.~\ref{thm: 3 qudit USB}, and minimality from Thm.~\ref{thm: 2 qudit extendibility} and Thm.~\ref{thm: 3-qubit extendibility}, which establish that $2$-qudit and $3$-qubit sets of orthogonal stabiliser states always extend to a stabiliser basis. Every set of pairwise orthogonal single-qudit stabiliser states belongs to a single Pauli eigenbasis and can therefore be completed. The remaining cases follow by Lm.~\ref{lm: lifting} in App.~\ref{app: minimality and lifting}.
\end{proof}

\subsection{USBs for symplectic $n$-qubit stabiliser states}\label{app: unextendible sets of orthogonal symplectic stabiliser vectors}

The even and odd prime cases above are distinguished by the fact that the qubit Pauli extension is non-split. For completeness, we also treat the untwisted case (for $q=2)$, with `symplectic $n$-qubit stabiliser states' $|A\rangle=|a+L\rangle$ and orthogonality as in Cor.~\ref{cor: affine stabiliser overlap}. Unlike the qubit case (see Thm.~\ref{thm: unextendibility}), we find that unextendible sets of symplectic stabiliser states already exist for $n=3$.

\begin{theorem}\label{thm: 3 qubits symplectic}
    Let $(V=\zz^6_2,\omega)$ be the symplectic space of three qubits. Then there exists a set of five symplectic stabiliser states that cannot be extended to an orthogonal basis of symplectic stabiliser states.
\end{theorem}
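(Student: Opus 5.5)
The plan is to work entirely in the untwisted affine picture. Here a symplectic stabiliser state is an affine Lagrangian $A=a+L\subset V=\zz_2^6$, and by the argument of Cor.~\ref{cor: affine stabiliser overlap} (which uses only $L^\perp=L$ and so is valid over $\mathbb F_2$) two such states are orthogonal if and only if the affine subspaces are disjoint. The statement then asks for five pairwise disjoint affine Lagrangians $A_1,\dots,A_5$, each with $8$ points, covering $40$ of the $64$ points of $V$, such that no affine Lagrangian fits inside the remaining $24$ points.

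For the construction I would start from the $q=2$ specialisation of the family $\cS_q$ in Thm.~\ref{thm: 3 qudit USB}. Since $\mathbb F_2^\times=\{1\}$, it collapses to the four product cosets $x_1z_1x_1$, $w_0z_0w_0$, $x_0w_1w_1$ and $w_1w_0x_0$, with single-qubit Lagrangians the three lines $\langle(1,0)\rangle$, $\langle(0,1)\rangle$, $\langle(1,1)\rangle$ of $\zz_2^2$. This gives only four states, so it is presumably extendible. I would then add a fifth product (or possibly entangled) affine Lagrangian disjoint from the four, chosen so that it blocks the extensions. Concretely, the first step is to enumerate the Lagrangians $M$ with $M\cap L_i\neq\{0\}$ for all $i$; this is the necessary condition for some coset of $M$ to be orthogonal to all members. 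I would then pick the fifth coset so that it meets every orthogonal coset of each such $M$. A natural guide is a SHIFTS-like pattern, cycling the three local Lagrangians over the three qubits.

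For the verification, fix a candidate $B=b+M$. Disjointness from $A_i=a_i+L_i$ means $b-a_i\notin M+L_i=(M\cap L_i)^\perp$. For each $i$ this cuts out an affine-hyperplane-type condition on the class of $b$ in $V/M\cong M^*$, i.e.\ on the eigenvalue functional restricted to $M\cap L_i$. So for fixed $M$ the allowed labels are the elements of $M^*\cong\zz_2^3$ avoiding one prescribed value on each subspace $M\cap L_i$. Unextendibility becomes the statement that, for every $M$ among the $135$ Lagrangians of $\zz_2^6$, these at most five restriction constraints exclude all $8$ labels. I would organise the check by the type of $M$: product, $2+1$ split (a Bell-type pair times a single-qubit line), or genuinely tripartite (GHZ/graph type). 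Within each type I would use the symmetry of the construction (qubit permutations and local symplectic maps preserving the set $\{A_i\}$) to reduce to a few representatives.

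The main obstacle is the choice of the fifth element: it must be disjoint from the four specialised states, yet block every surviving $M$. This is where I expect to iterate, and the fallback is an exhaustive (computer-assisted) search over the $1080$ affine Lagrangians, reported as a table of the constraints per orbit of $M$. The second delicate point is the entangled $M$, whose intersections with product $L_i$ are small (often one-dimensional). Each such $M$ then excludes only half of the labels per intersecting $A_i$, and one has to check that the five halves jointly cover $M^*$.
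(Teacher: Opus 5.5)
\textbf{There is a genuine gap: the proposal never produces the five states or proves unextendibility.} Your reformulation is sound: orthogonality is disjointness of affine Lagrangians, labels in $M^*$ are constrained on each $M\cap L_i$, and $M\cap L_i=\{0\}$ excludes $M$ outright. But the fifth state is never fixed, and the unextendibility check is deferred to ``iteration'' or an unspecified computer search. So the existence statement is not established.

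Your leading candidate also cannot work. Write $L_A=\langle x_1,z_2,x_3\rangle$, $L_B=\langle w_1,z_2,w_3\rangle$, $L_C=\langle x_1,w_2,w_3\rangle$, $L_D=\langle w_1,w_2,x_3\rangle$. A direct check shows that the only product Lagrangians meeting all four are $L_A,\dots,L_D$ themselves. On each of them, the three orthogonality constraints force the label to equal that member's own label. Hence no product state, SHIFTS-type or otherwise, is orthogonal to your four states, and the fifth element must be entangled. Your type-by-type check of the $135$ candidates $M$ is only outlined and never carried out for any specific set.

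\textbf{The missing idea is the structure the paper exploits.} Let $\fq$ be the quadratic refinement of $\omega$ vanishing on $x_1,w_1,z_2,w_2,x_3,w_3$; it is hyperbolic. Your four Lagrangians are $\fq$-singular, and their pairwise intersections are one-dimensional and together span $V$. So they are same-class generators, i.e.\ under the Klein correspondence $\alpha$-planes $\alpha_{y_1},\dots,\alpha_{y_4}$ at four independent points of $\PG(3,2)$.

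Put the fifth state on $\alpha_{y_5}$ with $y_5=y_1+\dots+y_4$. This is the GHZ-type Lagrangian $L_5=\langle x_1+z_2+x_3,\,w_1+z_2+w_3,\,x_1+w_2+w_3\rangle$. The orthogonality constraints on the $L_5\cap L_j$ prescribe the values $0,1,1,0$ on $x_1+z_2+x_3$, $w_1+z_2+w_3$, $x_1+w_2+w_3$, $w_1+w_2+x_3$. These are consistent, since both quadruples sum to zero, and they fix $\eta$ uniquely.

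Unextendibility then follows as in the paper. A Lagrangian meeting all five $\alpha$-planes must itself be some $x\wedge W$, because a pencil line or a $\beta$-plane would force every $y_i$ into a hyperplane. For $x=y_k$ the label is forced to coincide with the $k$-th member's. Otherwise $\sum_i x\wedge y_i=0$ gives the parity contradiction.

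The orientation $j\to i\Leftrightarrow\xi_i(y_i\wedge y_j)=1$ here is not the cyclic one; the member on $L_A$ has in-degree $4$. The paper's parity step, however, only needs all out-degrees to be even. That is automatic from linearity of the labels together with $\sum_i y_i=0$.

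So your four states can in fact be completed to a USB. What your proposal lacks is either this reduction of the $135$ Lagrangians to the fifteen $\alpha$-planes, or a completed exhaustive check of an explicitly specified set.
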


\begin{proof}
    The details are provided in App.~\ref{app: proof - 3 qubits}.
\end{proof}

Again, here $n=3$ is minimal in the number of parties, and the result extends to all $n\geq 3$, as both Thm.~\ref{thm: 2 qudit extendibility} and Lm.~\ref{lm: lifting} are readily extended to the symplectic level.

\section{Relation with UPBs}\label{sec: UPBs vs USBs}

We compare Def.~\ref{def: USB} with the notion of unextendible product bases and their respective roles in the resource theories of local operations and classical communication (LOCC) and non-stabiliserness (or `magic').\\

\textbf{LOCC.} An unextendible product basis (UPB) is an incomplete set of pairwise orthogonal product states whose orthogonal complement contains no product state~\cite{BennettEtAl1999,DiVincenzoEtAl2003}. Let
$\cU=\{|\psi_i\rangle\}_{i=1}^k$
be a UPB, and define
\begin{align}\label{eq: unextendible subspace}
    P_\cU
    =\sum_{i=1}^k\dyad{\psi_i}\; ,\qquad
    Q_\cU
    =\one-P_\cU\; .
\end{align}
The normalised complementary projector $\rho_\cU^{\perp}=\frac{Q_\cU}{\tr[Q_\cU]}$ is entangled, since a separable decomposition would place a product vector in $\mathrm{ran}(Q_\cU)$. Moreover, partial transposition maps the pairwise orthogonal product states in $\cU$ to another pairwise orthogonal product set, consequently,
\begin{align*}
    \left(\rho_\cU^{\perp}\right)^{T_A}
    \geq0\; ,
\end{align*}
across any bipartition $(A,\overline{A})$. Since positive partial transpose implies non-distillability, $\rho_\cU^{\perp}$ is bound entangled \cite{BennettEtAl1999}. Moreover, UPBs satisfy an even stronger extension property: although unextendibility is destroyed by enlarging the local Hilbert spaces of a bipartition, a UPB cannot be completed to a full orthogonal product basis in any locally extended system. This property is called \emph{strong uncompletability} \cite{DiVincenzoEtAl2003}. Together with local Naimark dilation, it implies that the members of a UPB cannot be perfectly distinguished by LOCC~\cite{BennettEtAl1999,DiVincenzoEtAl2003}.\\

\textbf{Magic.} A closely analogous structure arises in the resource theory of magic. Let
$\cU=\{|\psi_i\rangle\}_{i=1}^k$ be an incomplete set of pairwise orthogonal $n$-qudit stabiliser states, and define $P_\cU$ and $Q_\cU$ as in Eq.~(\ref{eq: unextendible subspace}). If $\cU$ is a USB, then
\begin{align*}
    \mathrm{ran}(Q_\cU)\cap\Stab^\mathrm{pure}_{n,q}
    =\emptyset\; ,
\end{align*}
and its complementary subspace is then completely magic in the sense that every density operator supported in $\mathrm{ran}(Q_\cU)$ lies outside the stabiliser polytope. Indeed, if a density matrix $\rho=\sum_s p_s\dyad{s}$ with $p_s>0$ and $|s\rangle\in\Stab^\mathrm{pure}_{n,q}$ were supported in $\mathrm{ran}(Q_\cU)$, then
\begin{align*}
    0
    =\tr[P_\cU\rho]
    =\sum_s p_s\langle s|P_\cU|s\rangle\; .
\end{align*}
It follows that every stabiliser state occurring with $p_s>0$ would be orthogonal to every member of $\cU$, contradicting unextendibility. In turn, it is readily seen that a USB defines a magic witness (see Cor.~\ref{cor: USB magic witness} in App.~\ref{app: applications}), and there is also a direct stabiliser analogue of strong UPB uncompletability (see Lm.~\ref{lm: stabiliser uncompletability} App.~\ref{app: applications}).

Nevertheless, the analogy with UPBs (partly) breaks for two stronger operational conclusions. First, whether the complementary state $\rho_\cU^{\perp}=\frac{Q_\cU}{\tr[Q_\cU]}$ is bound magic depends on dimension. For UPB complements bound entanglement follows from positivity under partial transposition (PPT) \cite{Peres1996,Horodeckisz1998}, which is stable under tensor powers and thus obstructs entanglement distillation. For $q$ odd, an analogous role is played by positivity of the Wigner representation of stabiliser states \cite{Gross2006} (see Cor.~\ref{cor: bound magic states} in App.~\ref{app: applications}), which implies that $\rho_\cU^{\perp}$ is indeed bound magic. However, qubit stabiliser states do not admit a (canonical) Wigner representation and, indeed, the complementary state of the $4$-qubit USB in Thm.~\ref{thm: 4 qubit USB} is not bound magic (see Prop.~\ref{prop: distillability} in App.~\ref{app: applications}).

Second, stabiliser uncompletability does not exclude discrimination by coarse-grained stabiliser measurements, let alone by stabiliser operations which also include adaptive protocols with Pauli measurements, discarding and feed-forward \cite{VeitchEtAl2014}. Indeed, USB unextendibility imposes no uniform quantitative discrimination gap: there exist USBs whose uniform-prior discrimination success probability under stabiliser operations is arbitrarily close to one (see Prop.~\ref{prop: approximate SO distinguishability} in App.~\ref{app: applications}).

The above commonalities and differences are summarised in Tab.~\ref{tab: UPBs vs USBs}. Finally, we note that the two properties are not mutually exclusive, as the four-qubit example in Thm.~\ref{thm: 4 qubit USB} is both a USB and UPB (see also Ref.~\cite{Johnston2014}).\footnote{Write $S_4=\{\ket{u_i}\}_{i=1}^{8}$ in the order listed in Thm.~\ref{thm: 4 qubit USB}, and suppose $\ket{a_1}\!\ket{a_2}\!\ket{a_3}\!\ket{a_4}$ were orthogonal to every member of $S_4$. Assign to each $i$ a qubit $\varphi(i)$ on which the local overlap vanishes. Since a qubit state has a unique orthogonal state, all members of a block $\varphi^{-1}(j)$ must carry the \emph{same} local factor on qubit $j$. Grouping the elements of $S_4$ by equal local factor on qubit $j$ then gives:
\begin{align*}
 j=1,2:&\quad \{1,2\},\{3,4\},\{5,6\},\{7,8\}\\
 j=3:&\quad \{1,7\},\{2,8\},\{3\},\{4\},\{5\},\{6\}\\
 j=4:&\quad \{1\},\{2\},\{3,5\},\{4,6\},\{7\},\{8\}
\end{align*}
Note that every block has at most two elements, and since $|S_4|=8$ and there are four blocks, all four blocks must have exactly two elements. Since $\varphi^{-1}(3)\in\{\{1,7\},\{2,8\}\}$ and $\varphi^{-1}(4)\in\{\{3,5\},\{4,6\}\}$, the four resulting choices leave the residues $\{1,3,5,7\}$, $\{1,4,6,7\}$, $\{2,3,5,8\}$ and $\{2,4,6,8\}$, yet none of which contains an admissible block for qubit $1$ or $2$. Hence, no product vector is orthogonal to all elements of $S_4$, that is, $S_4$ is a UPB.}

\begin{table}[t!]
\caption{Comparison between unextendible product bases
(UPBs) and unextendible stabiliser bases (USBs).}
\label{tab: UPBs vs USBs}
\centering
\renewcommand{\arraystretch}{1.2}
\small
\begin{tabular}{c|c}
    \toprule
    \parbox[t]{0.46\columnwidth}{\centering UPB} & \parbox[t]{0.49\columnwidth}{\centering USB}\\
    \midrule
    \parbox[t]{0.46\columnwidth}{no product state in the orthogonal complement} & \parbox[t]{0.49\columnwidth}{no stabiliser state in the orthogonal complement}\\
    \addlinespace[3pt]
    \parbox[t]{0.46\columnwidth}{
    completely entangled complementary subspace} & \parbox[t]{0.49\columnwidth}{completely magic complementary subspace}\\
    \addlinespace[3pt]
    \parbox[t]{0.46\columnwidth}{
    strongly uncompletable under local extensions} & \parbox[t]{0.44\columnwidth}{strongly uncompletable under Clifford isometries}\\
    \addlinespace[3pt]
    \midrule
    \parbox[t]{0.46\columnwidth}{complementary state is bound entangled} & \parbox[t]{0.45\columnwidth}{complementary state is bound magic for $q$ odd, but need not be for $q=2$}\\
    \addlinespace[3pt]
    \midrule    
    \parbox[t]{0.46\columnwidth}{cannot be perfectly distinguished by LOCC} & \parbox[t]{0.49\columnwidth}{no uniform distinguishability gap by stabiliser operations}\\
    \bottomrule
\end{tabular}
\end{table}

\section{Conclusion}

Unextendible sets of pairwise orthogonal stabiliser states (USBs) expose a geometric obstruction internal to stabiliser theory: an incomplete orthogonal family can have a complementary subspace containing no further stabiliser state. Our main result establishes the existence of USBs for every $n\geq4$ qubit system and every $n\geq3$ odd-prime-dimensional qudit system. Moreover, we compared the operational properties of USBs within the resource theory of magic to those of unextendible product bases (UPBs) in the resource theory of LOCC. We showed that the normalised projector onto the complement of a USB is magic, and moreover bound magic in odd prime dimension, while this is not necessarily true for qubit USBs. We further showed that USB unextendibility alone imposes no uniform quantitative obstruction to discrimination by stabiliser operations.\\

\textbf{Acknowledgments.} The explicit constructions and proof ideas are due to OpenAI's GPT-5.5 and GPT-5.6 Sol models. I verified the arguments in detail, reformulated and refined some of the results and arranged them into the present manuscript.

\bibliography{bibliography}

\begin{thebibliography}{38}%
\makeatletter
\providecommand \@ifxundefined [1]{%
 \@ifx{#1\undefined}
}%
\providecommand \@ifnum [1]{%
 \ifnum #1\expandafter \@firstoftwo
 \else \expandafter \@secondoftwo
 \fi
}%
\providecommand \@ifx [1]{%
 \ifx #1\expandafter \@firstoftwo
 \else \expandafter \@secondoftwo
 \fi
}%
\providecommand \natexlab [1]{#1}%
\providecommand \enquote  [1]{``#1''}%
\providecommand \bibnamefont  [1]{#1}%
\providecommand \bibfnamefont [1]{#1}%
\providecommand \citenamefont [1]{#1}%
\providecommand \href@noop [0]{\@secondoftwo}%
\providecommand \href [0]{\begingroup \@sanitize@url \@href}%
\providecommand \@href[1]{\@@startlink{#1}\@@href}%
\providecommand \@@href[1]{\endgroup#1\@@endlink}%
\providecommand \@sanitize@url [0]{\catcode `\\12\catcode `\$12\catcode `\&12\catcode `\#12\catcode `\^12\catcode `\_12\catcode `\%12\relax}%
\providecommand \@@startlink[1]{}%
\providecommand \@@endlink[0]{}%
\providecommand \url  [0]{\begingroup\@sanitize@url \@url }%
\providecommand \@url [1]{\endgroup\@href {#1}{\urlprefix }}%
\providecommand \urlprefix  [0]{URL }%
\providecommand \Eprint [0]{\href }%
\providecommand \doibase [0]{https://doi.org/}%
\providecommand \selectlanguage [0]{\@gobble}%
\providecommand \bibinfo  [0]{\@secondoftwo}%
\providecommand \bibfield  [0]{\@secondoftwo}%
\providecommand \translation [1]{[#1]}%
\providecommand \BibitemOpen [0]{}%
\providecommand \bibitemStop [0]{}%
\providecommand \bibitemNoStop [0]{.\EOS\space}%
\providecommand \EOS [0]{\spacefactor3000\relax}%
\providecommand \BibitemShut  [1]{\csname bibitem#1\endcsname}%
\let\auto@bib@innerbib\@empty
\bibitem [{\citenamefont {Gottesman}(1998)}]{Gottesman1998}%
  \BibitemOpen
  \bibfield  {author} {\bibinfo {author} {\bibfnamefont {D.}~\bibnamefont {Gottesman}},\ }\bibfield  {title} {\bibinfo {title} {Theory of fault-tolerant quantum computation},\ }\href {https://doi.org/10.1103/PhysRevA.57.127} {\bibfield  {journal} {\bibinfo  {journal} {Phys. Rev. A}\ }\textbf {\bibinfo {volume} {57}},\ \bibinfo {pages} {127} (\bibinfo {year} {1998})}\BibitemShut {NoStop}%
\bibitem [{\citenamefont {Aaronson}\ and\ \citenamefont {Gottesman}(2004)}]{AaronsonGottesman2004}%
  \BibitemOpen
  \bibfield  {author} {\bibinfo {author} {\bibfnamefont {S.}~\bibnamefont {Aaronson}}\ and\ \bibinfo {author} {\bibfnamefont {D.}~\bibnamefont {Gottesman}},\ }\bibfield  {title} {\bibinfo {title} {Improved simulation of stabilizer circuits},\ }\href {https://doi.org/10.1103/PhysRevA.70.052328} {\bibfield  {journal} {\bibinfo  {journal} {Phys. Rev. A}\ }\textbf {\bibinfo {volume} {70}},\ \bibinfo {pages} {052328} (\bibinfo {year} {2004})}\BibitemShut {NoStop}%
\bibitem [{\citenamefont {Dehaene}\ and\ \citenamefont {De~Moor}(2003)}]{DehaeneDeMoor2003}%
  \BibitemOpen
  \bibfield  {author} {\bibinfo {author} {\bibfnamefont {J.}~\bibnamefont {Dehaene}}\ and\ \bibinfo {author} {\bibfnamefont {B.}~\bibnamefont {De~Moor}},\ }\bibfield  {title} {\bibinfo {title} {{Clifford group, stabilizer states, and linear and quadratic operations over GF(2)}},\ }\href {https://doi.org/10.1103/PhysRevA.68.042318} {\bibfield  {journal} {\bibinfo  {journal} {Phys. Rev. A}\ }\textbf {\bibinfo {volume} {68}},\ \bibinfo {pages} {042318} (\bibinfo {year} {2003})}\BibitemShut {NoStop}%
\bibitem [{\citenamefont {Hostens}\ \emph {et~al.}(2005)\citenamefont {Hostens}, \citenamefont {Dehaene},\ and\ \citenamefont {De~Moor}}]{HostensDehaeneDeMoor2005}%
  \BibitemOpen
  \bibfield  {author} {\bibinfo {author} {\bibfnamefont {E.}~\bibnamefont {Hostens}}, \bibinfo {author} {\bibfnamefont {J.}~\bibnamefont {Dehaene}},\ and\ \bibinfo {author} {\bibfnamefont {B.}~\bibnamefont {De~Moor}},\ }\bibfield  {title} {\bibinfo {title} {Stabilizer states and {C}lifford operations for systems of arbitrary dimensions and modular arithmetic},\ }\href {https://doi.org/10.1103/PhysRevA.71.042315} {\bibfield  {journal} {\bibinfo  {journal} {Phys. Rev. A}\ }\textbf {\bibinfo {volume} {71}},\ \bibinfo {pages} {042315} (\bibinfo {year} {2005})}\BibitemShut {NoStop}%
\bibitem [{\citenamefont {Heimendahl}\ \emph {et~al.}(2022)\citenamefont {Heimendahl}, \citenamefont {Heinrich},\ and\ \citenamefont {Gross}}]{HeimendahlHeinrichGross2022}%
  \BibitemOpen
  \bibfield  {author} {\bibinfo {author} {\bibfnamefont {A.}~\bibnamefont {Heimendahl}}, \bibinfo {author} {\bibfnamefont {M.}~\bibnamefont {Heinrich}},\ and\ \bibinfo {author} {\bibfnamefont {D.}~\bibnamefont {Gross}},\ }\bibfield  {title} {\bibinfo {title} {The axiomatic and the operational approaches to resource theories of magic do not coincide},\ }\href {https://doi.org/10.1063/5.0085774} {\bibfield  {journal} {\bibinfo  {journal} {Journal of Mathematical Physics}\ }\textbf {\bibinfo {volume} {63}},\ \bibinfo {pages} {112201} (\bibinfo {year} {2022})}\BibitemShut {NoStop}%
\bibitem [{\citenamefont {Bravyi}\ and\ \citenamefont {Kitaev}(2005)}]{BravyiKitaev2005}%
  \BibitemOpen
  \bibfield  {author} {\bibinfo {author} {\bibfnamefont {S.}~\bibnamefont {Bravyi}}\ and\ \bibinfo {author} {\bibfnamefont {A.}~\bibnamefont {Kitaev}},\ }\bibfield  {title} {\bibinfo {title} {Universal quantum computation with ideal clifford gates and noisy ancillas},\ }\href {https://doi.org/10.1103/PhysRevA.71.022316} {\bibfield  {journal} {\bibinfo  {journal} {Phys. Rev. A}\ }\textbf {\bibinfo {volume} {71}},\ \bibinfo {pages} {022316} (\bibinfo {year} {2005})}\BibitemShut {NoStop}%
\bibitem [{\citenamefont {Reichardt}(2005)}]{Reichardt2005}%
  \BibitemOpen
  \bibfield  {author} {\bibinfo {author} {\bibfnamefont {B.~W.}\ \bibnamefont {Reichardt}},\ }\bibfield  {title} {\bibinfo {title} {Quantum universality from magic states distillation applied to {CSS} codes},\ }\href {https://doi.org/10.1007/s11128-005-7654-8} {\bibfield  {journal} {\bibinfo  {journal} {Quantum Information Processing}\ }\textbf {\bibinfo {volume} {4}},\ \bibinfo {pages} {251} (\bibinfo {year} {2005})}\BibitemShut {NoStop}%
\bibitem [{\citenamefont {Bravyi}\ and\ \citenamefont {Haah}(2012)}]{BravyiHaah2012}%
  \BibitemOpen
  \bibfield  {author} {\bibinfo {author} {\bibfnamefont {S.}~\bibnamefont {Bravyi}}\ and\ \bibinfo {author} {\bibfnamefont {J.}~\bibnamefont {Haah}},\ }\bibfield  {title} {\bibinfo {title} {Magic-state distillation with low overhead},\ }\href {https://doi.org/10.1103/PhysRevA.86.052329} {\bibfield  {journal} {\bibinfo  {journal} {Phys. Rev. A}\ }\textbf {\bibinfo {volume} {86}},\ \bibinfo {pages} {052329} (\bibinfo {year} {2012})}\BibitemShut {NoStop}%
\bibitem [{\citenamefont {Veitch}\ \emph {et~al.}(2012)\citenamefont {Veitch}, \citenamefont {Ferrie}, \citenamefont {Gross},\ and\ \citenamefont {Emerson}}]{Veitch2012}%
  \BibitemOpen
  \bibfield  {author} {\bibinfo {author} {\bibfnamefont {V.}~\bibnamefont {Veitch}}, \bibinfo {author} {\bibfnamefont {C.}~\bibnamefont {Ferrie}}, \bibinfo {author} {\bibfnamefont {D.}~\bibnamefont {Gross}},\ and\ \bibinfo {author} {\bibfnamefont {J.}~\bibnamefont {Emerson}},\ }\bibfield  {title} {\bibinfo {title} {Negative quasi-probability as a resource for quantum computation},\ }\href@noop {} {\bibfield  {journal} {\bibinfo  {journal} {New J. Phys.}\ }\textbf {\bibinfo {volume} {14}},\ \bibinfo {pages} {113011} (\bibinfo {year} {2012})}\BibitemShut {NoStop}%
\bibitem [{\citenamefont {Howard}\ and\ \citenamefont {Campbell}(2017)}]{HowardCampbell2017}%
  \BibitemOpen
  \bibfield  {author} {\bibinfo {author} {\bibfnamefont {M.}~\bibnamefont {Howard}}\ and\ \bibinfo {author} {\bibfnamefont {E.}~\bibnamefont {Campbell}},\ }\bibfield  {title} {\bibinfo {title} {Application of a resource theory for magic states to fault-tolerant quantum computing},\ }\href {https://doi.org/10.1103/PhysRevLett.118.090501} {\bibfield  {journal} {\bibinfo  {journal} {Phys. Rev. Lett.}\ }\textbf {\bibinfo {volume} {118}},\ \bibinfo {pages} {090501} (\bibinfo {year} {2017})}\BibitemShut {NoStop}%
\bibitem [{\citenamefont {Veitch}\ \emph {et~al.}(2014)\citenamefont {Veitch}, \citenamefont {Hamed~Mousavian}, \citenamefont {Gottesman},\ and\ \citenamefont {Emerson}}]{VeitchEtAl2014}%
  \BibitemOpen
  \bibfield  {author} {\bibinfo {author} {\bibfnamefont {V.}~\bibnamefont {Veitch}}, \bibinfo {author} {\bibfnamefont {S.~A.}\ \bibnamefont {Hamed~Mousavian}}, \bibinfo {author} {\bibfnamefont {D.}~\bibnamefont {Gottesman}},\ and\ \bibinfo {author} {\bibfnamefont {J.}~\bibnamefont {Emerson}},\ }\bibfield  {title} {\bibinfo {title} {The resource theory of stabilizer quantum computation},\ }\href {https://doi.org/10.1088/1367-2630/16/1/013009} {\bibfield  {journal} {\bibinfo  {journal} {New Journal of Physics}\ }\textbf {\bibinfo {volume} {16}},\ \bibinfo {pages} {013009} (\bibinfo {year} {2014})}\BibitemShut {NoStop}%
\bibitem [{\citenamefont {Bennett}\ \emph {et~al.}(1999)\citenamefont {Bennett}, \citenamefont {DiVincenzo}, \citenamefont {Mor}, \citenamefont {Shor}, \citenamefont {Smolin},\ and\ \citenamefont {Terhal}}]{BennettEtAl1999}%
  \BibitemOpen
  \bibfield  {author} {\bibinfo {author} {\bibfnamefont {C.~H.}\ \bibnamefont {Bennett}}, \bibinfo {author} {\bibfnamefont {D.~P.}\ \bibnamefont {DiVincenzo}}, \bibinfo {author} {\bibfnamefont {T.}~\bibnamefont {Mor}}, \bibinfo {author} {\bibfnamefont {P.~W.}\ \bibnamefont {Shor}}, \bibinfo {author} {\bibfnamefont {J.~A.}\ \bibnamefont {Smolin}},\ and\ \bibinfo {author} {\bibfnamefont {B.~M.}\ \bibnamefont {Terhal}},\ }\bibfield  {title} {\bibinfo {title} {Unextendible product bases and bound entanglement},\ }\href {https://doi.org/10.1103/PhysRevLett.82.5385} {\bibfield  {journal} {\bibinfo  {journal} {Phys. Rev. Lett.}\ }\textbf {\bibinfo {volume} {82}},\ \bibinfo {pages} {5385} (\bibinfo {year} {1999})}\BibitemShut {NoStop}%
\bibitem [{\citenamefont {DiVincenzo}\ \emph {et~al.}(2003)\citenamefont {DiVincenzo}, \citenamefont {Mor}, \citenamefont {Shor}, \citenamefont {Smolin},\ and\ \citenamefont {Terhal}}]{DiVincenzoEtAl2003}%
  \BibitemOpen
  \bibfield  {author} {\bibinfo {author} {\bibfnamefont {D.~P.}\ \bibnamefont {DiVincenzo}}, \bibinfo {author} {\bibfnamefont {T.}~\bibnamefont {Mor}}, \bibinfo {author} {\bibfnamefont {P.~W.}\ \bibnamefont {Shor}}, \bibinfo {author} {\bibfnamefont {J.~A.}\ \bibnamefont {Smolin}},\ and\ \bibinfo {author} {\bibfnamefont {B.~M.}\ \bibnamefont {Terhal}},\ }\bibfield  {title} {\bibinfo {title} {Unextendible product bases, uncompletable product bases and bound entanglement},\ }\href {https://doi.org/10.1007/s00220-003-0877-6} {\bibfield  {journal} {\bibinfo  {journal} {Communications in Mathematical Physics}\ }\textbf {\bibinfo {volume} {238}},\ \bibinfo {pages} {379} (\bibinfo {year} {2003})}\BibitemShut {NoStop}%
\bibitem [{\citenamefont {Campbell}\ and\ \citenamefont {Browne}(2010)}]{CampbellBrowne2010}%
  \BibitemOpen
  \bibfield  {author} {\bibinfo {author} {\bibfnamefont {E.~T.}\ \bibnamefont {Campbell}}\ and\ \bibinfo {author} {\bibfnamefont {D.~E.}\ \bibnamefont {Browne}},\ }\bibfield  {title} {\bibinfo {title} {Bound states for magic state distillation in fault-tolerant quantum computation},\ }\href {https://doi.org/10.1103/PhysRevLett.104.030503} {\bibfield  {journal} {\bibinfo  {journal} {Physical Review Letters}\ }\textbf {\bibinfo {volume} {104}},\ \bibinfo {pages} {030503} (\bibinfo {year} {2010})}\BibitemShut {NoStop}%
\bibitem [{\citenamefont {Kwon}(2025)}]{Kwon2025}%
  \BibitemOpen
  \bibfield  {author} {\bibinfo {author} {\bibfnamefont {H.}~\bibnamefont {Kwon}},\ }\href@noop {} {\bibinfo {title} {Nonstabilizerness without magic: Classically simulatable quantum states that are indistinguishable by classically simulatable quantum circuits}} (\bibinfo {year} {2025}),\ \Eprint {https://arxiv.org/abs/2509.25790} {arXiv:2509.25790 [quant-ph]} \BibitemShut {NoStop}%
\bibitem [{\citenamefont {Gottesman}(1999)}]{Gottesman1999}%
  \BibitemOpen
  \bibfield  {author} {\bibinfo {author} {\bibfnamefont {D.}~\bibnamefont {Gottesman}},\ }\bibfield  {title} {\bibinfo {title} {Fault-tolerant quantum computation with higher-dimensional systems},\ }in\ \href@noop {} {\emph {\bibinfo {booktitle} {Quantum Computing and Quantum Communications}}},\ \bibinfo {editor} {edited by\ \bibinfo {editor} {\bibfnamefont {C.~P.}\ \bibnamefont {Williams}}}\ (\bibinfo  {publisher} {Springer Berlin Heidelberg},\ \bibinfo {address} {Berlin, Heidelberg},\ \bibinfo {year} {1999})\ pp.\ \bibinfo {pages} {302--313}\BibitemShut {NoStop}%
\bibitem [{\citenamefont {Howard}\ \emph {et~al.}(2013)\citenamefont {Howard}, \citenamefont {Brennan},\ and\ \citenamefont {Vala}}]{HowardBrennanVala2015}%
  \BibitemOpen
  \bibfield  {author} {\bibinfo {author} {\bibfnamefont {M.}~\bibnamefont {Howard}}, \bibinfo {author} {\bibfnamefont {E.}~\bibnamefont {Brennan}},\ and\ \bibinfo {author} {\bibfnamefont {J.}~\bibnamefont {Vala}},\ }\bibfield  {title} {\bibinfo {title} {Quantum contextuality with stabilizer states},\ }\href {https://doi.org/10.3390/e15062340} {\bibfield  {journal} {\bibinfo  {journal} {Entropy}\ }\textbf {\bibinfo {volume} {15}},\ \bibinfo {pages} {2340} (\bibinfo {year} {2013})}\BibitemShut {NoStop}%
\bibitem [{\citenamefont {Raussendorf}(2019)}]{Raussendorf2019}%
  \BibitemOpen
  \bibfield  {author} {\bibinfo {author} {\bibfnamefont {R.}~\bibnamefont {Raussendorf}},\ }\bibfield  {title} {\bibinfo {title} {Cohomological framework for contextual quantum computations},\ }\href {https://doi.org/10.48550/arXiv.1602.04155} {\bibfield  {journal} {\bibinfo  {journal} {Quantum Information and Computation}\ }\textbf {\bibinfo {volume} {19}},\ \bibinfo {pages} {1141} (\bibinfo {year} {2019})}\BibitemShut {NoStop}%
\bibitem [{\citenamefont {Raussendorf}\ \emph {et~al.}(2023)\citenamefont {Raussendorf}, \citenamefont {Okay}, \citenamefont {Zurel},\ and\ \citenamefont {Feldmann}}]{RaussendorfEtAl2023}%
  \BibitemOpen
  \bibfield  {author} {\bibinfo {author} {\bibfnamefont {R.}~\bibnamefont {Raussendorf}}, \bibinfo {author} {\bibfnamefont {C.}~\bibnamefont {Okay}}, \bibinfo {author} {\bibfnamefont {M.}~\bibnamefont {Zurel}},\ and\ \bibinfo {author} {\bibfnamefont {P.}~\bibnamefont {Feldmann}},\ }\bibfield  {title} {\bibinfo {title} {The role of cohomology in quantum computation with magic states},\ }\href {https://doi.org/10.22331/q-2023-04-13-979} {\bibfield  {journal} {\bibinfo  {journal} {{Quantum}}\ }\textbf {\bibinfo {volume} {7}},\ \bibinfo {pages} {979} (\bibinfo {year} {2023})}\BibitemShut {NoStop}%
\bibitem [{\citenamefont {{Gross}}(2006)}]{Gross2006}%
  \BibitemOpen
  \bibfield  {author} {\bibinfo {author} {\bibfnamefont {D.}~\bibnamefont {{Gross}}},\ }\bibfield  {title} {\bibinfo {title} {{Hudson's theorem for finite-dimensional quantum systems}},\ }\href {https://doi.org/10.1063/1.2393152} {\bibfield  {journal} {\bibinfo  {journal} {J. Math. Phys.}\ }\textbf {\bibinfo {volume} {47}},\ \bibinfo {pages} {122107} (\bibinfo {year} {2006})}\BibitemShut {NoStop}%
\bibitem [{\citenamefont {Peres}(1996)}]{Peres1996}%
  \BibitemOpen
  \bibfield  {author} {\bibinfo {author} {\bibfnamefont {A.}~\bibnamefont {Peres}},\ }\bibfield  {title} {\bibinfo {title} {Separability criterion for density matrices},\ }\href {https://doi.org/10.1103/PhysRevLett.77.1413} {\bibfield  {journal} {\bibinfo  {journal} {Phys. Rev. Lett.}\ }\textbf {\bibinfo {volume} {77}},\ \bibinfo {pages} {1413} (\bibinfo {year} {1996})}\BibitemShut {NoStop}%
\bibitem [{\citenamefont {Horodecki}\ \emph {et~al.}(1998)\citenamefont {Horodecki}, \citenamefont {Horodecki},\ and\ \citenamefont {Horodecki}}]{Horodeckisz1998}%
  \BibitemOpen
  \bibfield  {author} {\bibinfo {author} {\bibfnamefont {M.}~\bibnamefont {Horodecki}}, \bibinfo {author} {\bibfnamefont {P.}~\bibnamefont {Horodecki}},\ and\ \bibinfo {author} {\bibfnamefont {R.}~\bibnamefont {Horodecki}},\ }\bibfield  {title} {\bibinfo {title} {Mixed-state entanglement and distillation: Is there a ``bound'' entanglement in nature?},\ }\href {https://doi.org/10.1103/PhysRevLett.80.5239} {\bibfield  {journal} {\bibinfo  {journal} {Phys. Rev. Lett.}\ }\textbf {\bibinfo {volume} {80}},\ \bibinfo {pages} {5239} (\bibinfo {year} {1998})}\BibitemShut {NoStop}%
\bibitem [{\citenamefont {Johnston}(2014)}]{Johnston2014}%
  \BibitemOpen
  \bibfield  {author} {\bibinfo {author} {\bibfnamefont {N.}~\bibnamefont {Johnston}},\ }\bibfield  {title} {\bibinfo {title} {The structure of qubit unextendible product bases},\ }\href {https://doi.org/10.1088/1751-8113/47/42/424034} {\bibfield  {journal} {\bibinfo  {journal} {Journal of Physics A: Mathematical and Theoretical}\ }\textbf {\bibinfo {volume} {47}},\ \bibinfo {pages} {424034} (\bibinfo {year} {2014})},\ \Eprint {https://arxiv.org/abs/1401.7920} {arXiv:1401.7920 [quant-ph]} \BibitemShut {NoStop}%
\bibitem [{Note1()}]{Note1}%
  \BibitemOpen
  \bibinfo {note} {Write $S_4=\{\ket {u_i}\}_{i=1}^{8}$ in the order listed in Thm.~\ref {thm: 4 qubit USB}, and suppose $\ket {a_1}\protect \!\ket {a_2}\protect \!\ket {a_3}\protect \!\ket {a_4}$ were orthogonal to every member of $S_4$. Assign to each $i$ a qubit $\varphi (i)$ on which the local overlap vanishes. Since a qubit state has a unique orthogonal state, all members of a block $\varphi ^{-1}(j)$ must carry the \protect \emph {same} local factor on qubit $j$. Grouping the elements of $S_4$ by equal local factor on qubit $j$ then gives: \begin {align*} j=1,2:&\hskip 1em\relax \{1,2\},\{3,4\},\{5,6\},\{7,8\}\\ j=3:&\hskip 1em\relax \{1,7\},\{2,8\},\{3\},\{4\},\{5\},\{6\}\\ j=4:&\hskip 1em\relax \{1\},\{2\},\{3,5\},\{4,6\},\{7\},\{8\} \end {align*} Note that every block has at most two elements, and since $|S_4|=8$ and there are four blocks, all four blocks must have exactly two elements. Since $\varphi ^{-1}(3)\in \{\{1,7\},\{2,8\}\}$ and $\varphi ^{-1}(4)\in \{\{3,5\},\{4,6\}\}$, the four resulting
  choices leave the residues $\{1,3,5,7\}$, $\{1,4,6,7\}$, $\{2,3,5,8\}$ and $\{2,4,6,8\}$, yet none of which contains an admissible block for qubit $1$ or $2$. Hence, no product vector is orthogonal to all elements of $S_4$, that is, $S_4$ is a UPB.}\BibitemShut {Stop}%
\bibitem [{\citenamefont {Taylor}(1992)}]{Taylor1992}%
  \BibitemOpen
  \bibfield  {author} {\bibinfo {author} {\bibfnamefont {D.~E.}\ \bibnamefont {Taylor}},\ }\href@noop {} {\emph {\bibinfo {title} {The Geometry of the Classical Groups}}},\ \bibinfo {series} {Sigma Series in Pure Mathematics}, Vol.~\bibinfo {volume} {9}\ (\bibinfo  {publisher} {Heldermann Verlag},\ \bibinfo {address} {Berlin},\ \bibinfo {year} {1992})\BibitemShut {NoStop}%
\bibitem [{\citenamefont {Hirschfeld}(1998)}]{Hirschfeld1998}%
  \BibitemOpen
  \bibfield  {author} {\bibinfo {author} {\bibfnamefont {J.}~\bibnamefont {Hirschfeld}},\ }\href@noop {} {\emph {\bibinfo {title} {Projective Geometries Over Finite Fields}}},\ Oxford mathematical monographs\ (\bibinfo  {publisher} {Clarendon Press},\ \bibinfo {year} {1998})\BibitemShut {NoStop}%
\bibitem [{Note2()}]{Note2}%
  \BibitemOpen
  \bibinfo {note} {This ovoid corresponds with maximal sets (with cardinality $5$) of mutually anti-commuting Pauli operators.}\BibitemShut {Stop}%
\bibitem [{\citenamefont {L{\'e}vay}\ \emph {et~al.}(2013)\citenamefont {L{\'e}vay}, \citenamefont {Planat},\ and\ \citenamefont {Saniga}}]{LevayPlanatSangia2013}%
  \BibitemOpen
  \bibfield  {author} {\bibinfo {author} {\bibfnamefont {P.}~\bibnamefont {L{\'e}vay}}, \bibinfo {author} {\bibfnamefont {M.}~\bibnamefont {Planat}},\ and\ \bibinfo {author} {\bibfnamefont {M.}~\bibnamefont {Saniga}},\ }\bibfield  {title} {\bibinfo {title} {Grassmannian connection between three- and four-qubit observables, mermin's contextuality and black holes},\ }\href {https://doi.org/10.1007/JHEP09(2013)037} {\bibfield  {journal} {\bibinfo  {journal} {Journal of High Energy Physics}\ }\textbf {\bibinfo {volume} {2013}},\ \bibinfo {pages} {37} (\bibinfo {year} {2013})}\BibitemShut {NoStop}%
\bibitem [{\citenamefont {van Geemen}\ and\ \citenamefont {Marrani}(2019)}]{VanGeemenMarrani2019}%
  \BibitemOpen
  \bibfield  {author} {\bibinfo {author} {\bibfnamefont {B.}~\bibnamefont {van Geemen}}\ and\ \bibinfo {author} {\bibfnamefont {A.}~\bibnamefont {Marrani}},\ }\bibfield  {title} {\bibinfo {title} {Lagrangian grassmannians and spinor varieties in characteristic two},\ }\bibfield  {journal} {\bibinfo  {journal} {Symmetry, Integrability and Geometry: Methods and Applications}\ }\href {https://doi.org/10.3842/sigma.2019.064} {10.3842/sigma.2019.064} (\bibinfo {year} {2019})\BibitemShut {NoStop}%
\bibitem [{\citenamefont {Miezaki}\ and\ \citenamefont {Munemasa}(2024)}]{MiezakiMunemasa2024}%
  \BibitemOpen
  \bibfield  {author} {\bibinfo {author} {\bibfnamefont {T.}~\bibnamefont {Miezaki}}\ and\ \bibinfo {author} {\bibfnamefont {A.}~\bibnamefont {Munemasa}},\ }\bibfield  {title} {\bibinfo {title} {Jacobi polynomials and harmonic weight enumerators of the first-order reed--muller codes and the extended hamming codes},\ }\href {https://doi.org/10.1007/s10623-023-01327-0} {\bibfield  {journal} {\bibinfo  {journal} {Designs, Codes and Cryptography}\ }\textbf {\bibinfo {volume} {92}},\ \bibinfo {pages} {1041} (\bibinfo {year} {2024})}\BibitemShut {NoStop}%
\bibitem [{\citenamefont {Conway}\ \emph {et~al.}(2013)\citenamefont {Conway}, \citenamefont {Sloane}, \citenamefont {Bannai}, \citenamefont {Borcherds}, \citenamefont {Leech}, \citenamefont {Norton}, \citenamefont {Odlyzko}, \citenamefont {Parker}, \citenamefont {Queen},\ and\ \citenamefont {Venkov}}]{ConwaySloane2013}%
  \BibitemOpen
  \bibfield  {author} {\bibinfo {author} {\bibfnamefont {J.}~\bibnamefont {Conway}}, \bibinfo {author} {\bibfnamefont {N.}~\bibnamefont {Sloane}}, \bibinfo {author} {\bibfnamefont {E.}~\bibnamefont {Bannai}}, \bibinfo {author} {\bibfnamefont {R.}~\bibnamefont {Borcherds}}, \bibinfo {author} {\bibfnamefont {J.}~\bibnamefont {Leech}}, \bibinfo {author} {\bibfnamefont {S.}~\bibnamefont {Norton}}, \bibinfo {author} {\bibfnamefont {A.}~\bibnamefont {Odlyzko}}, \bibinfo {author} {\bibfnamefont {R.}~\bibnamefont {Parker}}, \bibinfo {author} {\bibfnamefont {L.}~\bibnamefont {Queen}},\ and\ \bibinfo {author} {\bibfnamefont {B.}~\bibnamefont {Venkov}},\ }\href@noop {} {\emph {\bibinfo {title} {Sphere Packings, Lattices and Groups}}},\ Grundlehren der mathematischen Wissenschaften\ (\bibinfo  {publisher} {Springer New York},\ \bibinfo {year} {2013})\BibitemShut {NoStop}%
\bibitem [{\citenamefont {Elkies}(2004)}]{Elkies2004}%
  \BibitemOpen
  \bibfield  {author} {\bibinfo {author} {\bibfnamefont {N.~D.}\ \bibnamefont {Elkies}},\ }\bibfield  {title} {\bibinfo {title} {Yet another proof of the uniqueness of the {$E_8$} lattice}} (\bibinfo {year} {2004}),\ \bibinfo {note} {unpublished note}\BibitemShut {NoStop}%
\bibitem [{\citenamefont {Ruuge}\ and\ \citenamefont {van Oystaeyen}(2005)}]{RuugeVanOystaeyen2005}%
  \BibitemOpen
  \bibfield  {author} {\bibinfo {author} {\bibfnamefont {A.~E.}\ \bibnamefont {Ruuge}}\ and\ \bibinfo {author} {\bibfnamefont {F.}~\bibnamefont {van Oystaeyen}},\ }\bibfield  {title} {\bibinfo {title} {Saturated kochen–specker-type configuration of 120 projective lines in eight-dimensional space and its group of symmetry},\ }\href {https://doi.org/10.1063/1.1887923} {\bibfield  {journal} {\bibinfo  {journal} {Journal of Mathematical Physics}\ }\textbf {\bibinfo {volume} {46}},\ \bibinfo {pages} {052109} (\bibinfo {year} {2005})}\BibitemShut {NoStop}%
\bibitem [{\citenamefont {Waegell}\ and\ \citenamefont {Aravind}(2015)}]{WaegellAravind2015}%
  \BibitemOpen
  \bibfield  {author} {\bibinfo {author} {\bibfnamefont {M.}~\bibnamefont {Waegell}}\ and\ \bibinfo {author} {\bibfnamefont {P.~K.}\ \bibnamefont {Aravind}},\ }\bibfield  {title} {\bibinfo {title} {Parity proofs of the kochen–specker theorem based on the lie algebra e8},\ }\href {https://doi.org/10.1088/1751-8113/48/22/225301} {\bibfield  {journal} {\bibinfo  {journal} {Journal of Physics A: Mathematical and Theoretical}\ }\textbf {\bibinfo {volume} {48}},\ \bibinfo {pages} {225301} (\bibinfo {year} {2015})}\BibitemShut {NoStop}%
\bibitem [{\citenamefont {Mari}\ and\ \citenamefont {Eisert}(2012)}]{MariEisert2012}%
  \BibitemOpen
  \bibfield  {author} {\bibinfo {author} {\bibfnamefont {A.}~\bibnamefont {Mari}}\ and\ \bibinfo {author} {\bibfnamefont {J.}~\bibnamefont {Eisert}},\ }\bibfield  {title} {\bibinfo {title} {Positive wigner functions render classical simulation of quantum computation efficient},\ }\href {https://doi.org/10.1103/PhysRevLett.109.230503} {\bibfield  {journal} {\bibinfo  {journal} {Phys. Rev. Lett.}\ }\textbf {\bibinfo {volume} {109}},\ \bibinfo {pages} {230503} (\bibinfo {year} {2012})}\BibitemShut {NoStop}%
\bibitem [{\citenamefont {Howard}\ \emph {et~al.}(2014)\citenamefont {Howard}, \citenamefont {Wallman}, \citenamefont {Veitch},\ and\ \citenamefont {Emerson}}]{HowardEtAl2014}%
  \BibitemOpen
  \bibfield  {author} {\bibinfo {author} {\bibfnamefont {M.}~\bibnamefont {Howard}}, \bibinfo {author} {\bibfnamefont {J.}~\bibnamefont {Wallman}}, \bibinfo {author} {\bibfnamefont {V.}~\bibnamefont {Veitch}},\ and\ \bibinfo {author} {\bibfnamefont {J.}~\bibnamefont {Emerson}},\ }\bibfield  {title} {\bibinfo {title} {Contextuality supplies the `magic' for quantum computation},\ }\href {https://doi.org/10.1038/nature13460} {\bibfield  {journal} {\bibinfo  {journal} {Nature}\ }\textbf {\bibinfo {volume} {510}},\ \bibinfo {pages} {351—355} (\bibinfo {year} {2014})}\BibitemShut {NoStop}%
\bibitem [{\citenamefont {Delfosse}\ \emph {et~al.}(2015)\citenamefont {Delfosse}, \citenamefont {Allard~Guerin}, \citenamefont {Bian},\ and\ \citenamefont {Raussendorf}}]{DelfosseEtAl2015}%
  \BibitemOpen
  \bibfield  {author} {\bibinfo {author} {\bibfnamefont {N.}~\bibnamefont {Delfosse}}, \bibinfo {author} {\bibfnamefont {P.}~\bibnamefont {Allard~Guerin}}, \bibinfo {author} {\bibfnamefont {J.}~\bibnamefont {Bian}},\ and\ \bibinfo {author} {\bibfnamefont {R.}~\bibnamefont {Raussendorf}},\ }\bibfield  {title} {\bibinfo {title} {Wigner function negativity and contextuality in quantum computation on rebits},\ }\href {https://doi.org/10.1103/PhysRevX.5.021003} {\bibfield  {journal} {\bibinfo  {journal} {Phys. Rev. X}\ }\textbf {\bibinfo {volume} {5}},\ \bibinfo {pages} {021003} (\bibinfo {year} {2015})}\BibitemShut {NoStop}%
\bibitem [{\citenamefont {Zurel}\ \emph {et~al.}(2020)\citenamefont {Zurel}, \citenamefont {Okay},\ and\ \citenamefont {Raussendorf}}]{ZurelOkayRaussendorf2020}%
  \BibitemOpen
  \bibfield  {author} {\bibinfo {author} {\bibfnamefont {M.}~\bibnamefont {Zurel}}, \bibinfo {author} {\bibfnamefont {C.}~\bibnamefont {Okay}},\ and\ \bibinfo {author} {\bibfnamefont {R.}~\bibnamefont {Raussendorf}},\ }\bibfield  {title} {\bibinfo {title} {Hidden variable model for universal quantum computation with magic states on qubits},\ }\href {https://doi.org/10.1103/PhysRevLett.125.260404} {\bibfield  {journal} {\bibinfo  {journal} {Phys. Rev. Lett.}\ }\textbf {\bibinfo {volume} {125}},\ \bibinfo {pages} {260404} (\bibinfo {year} {2020})}\BibitemShut {NoStop}%
\end{thebibliography}%

\appendix
\onecolumngrid

\appendix

\section{Proof of Thm.~\ref{thm: 4 qubit USB}}\label{app: proof - 4 qubits}

For our main result, we will use the following lemma.

\begin{lemma}\label{lm: 3-qubit obstruction}
    No $3$-qubit stabiliser state is orthogonal to every member of the following set of $3$-qubit stabiliser states:
    \begin{align*}
        \cS_3:=
        \bigl\{
        \ket{z_+z_+z_+},\ 
         \ket{z_+z_+z_-},\ 
         \ket{z_-x_+x_+},\ 
        \ket{x_+z_-x_-},\ 
         \ket{x_-x_-z_+},\ 
         \ket{x_-x_-z_-}
        \bigl\}.
    \end{align*}
\end{lemma}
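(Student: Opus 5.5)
The plan is to reduce the six orthogonality conditions to three pairwise blocks and then run a short case analysis on the stabiliser group of a putative extension $|M,\eta\rangle$. Group $\cS_3$ as
\[
\{\ket{z_+z_+z_+},\ket{z_+z_+z_-}\},\qquad \{\ket{x_-x_-z_+},\ket{x_-x_-z_-}\},\qquad \{\ket{z_-x_+x_+},\ket{x_+z_-x_-}\}.
\]
Each of the first two pairs spans a space of the form $\ket{ab}\otimes\mathbb C^2$. So $|M,\eta\rangle$ is orthogonal to both members of a pair exactly when its qubit-$1,2$ marginal assigns probability zero to the outcome $z_+z_+$, respectively $x_-x_-$.

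\textbf{Step 1 (stabiliser criterion for a vanishing marginal).} I will show that $\bra{M,\eta}(\Pi_{z_+}\otimes\Pi_{z_+}\otimes\one)\ket{M,\eta}=0$ holds iff $M$ contains a nonzero element $u$ of the isotropic subspace $S_Z=\langle Z_1,Z_2\rangle$ on which $\eta$ disagrees with the $z_+z_+$ sign assignment. The proof copies Lm.~\ref{lm: stabiliser overlap}. I write the projector as $\frac14\sum_{v\in S_Z}W_v$ with the appropriate signs, then take the trace against Eq.~(\ref{eq: stabiliser projector}). This leaves a character sum over $M\cap S_Z$, and it vanishes iff the two characters differ there. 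The same argument with $S_X=\langle X_1,X_2\rangle$ handles the outcome $x_-x_-$, whose signs are all $+1$ on $S_X$. Hence an extension needs nonzero elements $u\in M\cap S_Z$ and $u'\in M\cap S_X$, and these must commute.

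\textbf{Step 2 (enumerate the commuting possibilities).} Inspecting commutation between the nonidentity elements of $\langle Z_1,Z_2\rangle$ and $\langle X_1,X_2\rangle$ leaves exactly three types:
\begin{enumerate}
  \item $Z_1$ together with $X_2$;
  \item $Z_2$ together with $X_1$;
  \item $Z_1Z_2$ together with $X_1X_2$.
\end{enumerate}
If $M\cap S_Z$ were all of $S_Z$, no nonzero element of $S_X$ would commute with it, so that case cannot occur.

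\textbf{Step 3 (check the remaining pair in each case).} In each case the required signs pin down the state.
\begin{itemize}
  \item Case 1. Disagreement with $z_+z_+$ forces $Z_1=-1$, and disagreement with $x_-x_-$ forces $X_2=+1$. So $\ket\psi=\ket{z_-x_+}\ket\varphi$ with $\varphi$ a single-qubit stabiliser state. Orthogonality to $\ket{z_-x_+x_+}$ forces $\varphi=x_-$. But then $\ket\psi$ has nonzero overlap with $\ket{x_+z_-x_-}$, since all three local overlaps are nonzero.
  \item Case 2. Symmetrically, $\ket\psi=\ket{x_+z_-}\ket\varphi$. Orthogonality to $\ket{x_+z_-x_-}$ forces $\varphi=x_+$, which clashes with $\ket{z_-x_+x_+}$.
  \item Case 3. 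We need $Z_1Z_2=-1$ and $X_1X_2=-1$. This rank-$2$ group fixes the singlet on qubits $1,2$, so $\ket\psi=\ket{\Psi^-}\ket\varphi$. The singlet is orthogonal to $\ket a\ket b$ only when $b=a$, so it overlaps both $\ket{z_-x_+}$ and $\ket{x_+z_-}$. Orthogonality to the last pair would then need $\varphi\perp x_+$ and $\varphi\perp x_-$, which is impossible.
\end{itemize}
Since all three cases fail, no extension exists.

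The main obstacle is Step 1, in particular for qubits. I have to verify that the sign convention via $\beta_W$ does not spoil the character-sum argument when $M\cap S_Z$ has rank $2$. As in the proof of Lm.~\ref{lm: stabiliser overlap}, this is handled by noting that the difference of two admissible labels is linear. I also need to make sure Step 2 does not miss the mixed options, for example $M\cap S_Z=\langle Z_1\rangle$ together with $M\cap S_X=\langle X_1X_2\rangle$. This one is excluded because $Z_1$ and $X_1X_2$ anticommute.
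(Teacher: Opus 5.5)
Your proposal is correct and follows essentially the same route as the paper: split $\cS_3$ into the $z_+z_+$ and $x_-x_-$ blocks, conclude that the stabiliser group must contain one of $-Z_1,-Z_2,-Z_1Z_2$ and one of $X_1,X_2,-X_1X_2$, reduce to the three commuting pairs, and rule out each case using the two remaining states. The paper obtains Step~1 from the fact that Pauli expectations on a stabiliser state lie in $\{0,\pm1\}$ rather than from a character sum, but this is the same fact.

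There is one slip. The signs of $x_-x_-$ on $S_X$ are not all $+1$: they are $-1$ on $X_1,X_2$ and $+1$ on $X_1X_2$, since $\Pi_{x_-}\otimes\Pi_{x_-}=\frac14(\one-X_1)(\one-X_2)$. Your Step~3 already uses the correct signs, so nothing downstream is affected.
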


\begin{proof}
    Assume that $\ket{\psi}$ is a stabiliser state orthogonal to every element of $\cS_3$, and denote by $A_\psi$ its stabiliser group.
    
    Recall that the codespace projector of a stabiliser group $A$ of a stabiliser code is given by $\Pi_A=\frac{1}{|A|}\sum_{P\in A}P$, and that for a pure stabiliser state (with $A<\cP_n$ maximal), the expectation of every Pauli operator takes values in $\{0,\pm1\}$ (it is either an eigenstate or an equal-weighted superposition). Hence, if $\bra{\psi}\Pi_A\ket{\psi}=0$ then at least one non-identity Pauli operator $P\in A$ has expectation $-1$, so that $-P\in A_\psi$.
    
    Now, the first two elements of $\cS_3$ span the code $\ket{z_+z_+}_{12}\otimes\mathbb C^2_3$, whose stabiliser group is generated by $Z_1$ and $Z_2$. It follows that $A_\psi$ contains one of the following elements
    \begin{align}\label{eq: Z stabilisers}
        -Z_1,\qquad -Z_2,\qquad -Z_1Z_2\; .
    \end{align}
    Similarly, the last two elements of $\cS_3$ span $\ket{x_-x_-}_{12}\otimes\mathbb C^2_3$, whose stabiliser group is generated by $-X_1$ and $-X_2$. Consequently, $A_\psi$ also contains an element from
    \begin{align}\label{eq: X stabilisers}
        X_1,\qquad X_2,\qquad -X_1X_2\; .
    \end{align}
    Since $A_\psi$ is Abelian, the respective elements from Eq.~(\ref{eq: Z stabilisers}) and Eq.~(\ref{eq: X stabilisers}) must commute; this leaves three possibilities, 
    \begin{align*}
        (-Z_1,+X_2),\qquad
        (-Z_2,+X_1),\qquad
        (-Z_1Z_2,-X_1X_2)\; .
    \end{align*}
    In the first case, $\ket{\psi}=\ket{z_-x_+}_{12}\otimes\ket r_3$ for $r_3\in\{x_\pm,y_\pm,z_\pm\}$. Orthogonality to $\ket{z_-x_+x_+}$ forces $\ket r=\ket{x_-}$, and thus
    \begin{align*}
        \bracket{x_+z_-x_-}{\psi}
        =
        \bracket{x_+}{z_-}\,
        \bracket{z_-}{x_+}\,
        \bracket{x_-}{x_-}
        \neq0,
    \end{align*}
    contradicting our assumption on $\ket{\psi}$. The second case is analogous: one has $\ket{\psi}=\ket{x_+z_-}_{12}\otimes\ket r_3$ for $r_3\in\{x_\pm,y_\pm,z_\pm\}$, and orthogonality to $\ket{x_+z_-x_-}$ forces $\ket r=\ket{x_+}$, which implies $\bracket{z_-x_+x_+}{\psi}\neq0$.

    In the final case, the first two qubits are stabilised by $-Z_1Z_2$ and $-X_1X_2$, hence,
    \begin{align*}
        \ket{\psi}
        &=\ket{\psi^-}_{12}\otimes\ket r_3\; , &
        \ket{\psi^-}
        &:=\frac{\ket{01}-\ket{10}}{\sqrt2}\; .
    \end{align*}
    Note that $\bracket{z_-x_+}{\psi^-}\neq0$ and $\bracket{x_+z_-}{\psi^-}\neq0$. Orthogonality to $\ket{z_-x_+x_+}$ therefore requires $\ket r=\ket{x_-}$, whereas orthogonality to $\ket{x_+z_-x_-}$ requires $\ket r=\ket{x_+}$, which together is impossible.
\end{proof}

Using the above, we now prove our main assertion.

\begin{proof}[Proof of Thm.~\ref{thm: 4 qubit USB}]
    Each element of $\cS_4$ is a product of single-qubit Pauli eigenstates and is therefore a stabiliser state. The states are pairwise orthogonal, as every pair has a tensor factor with opposite eigenstates of some Pauli operator.
    
    Suppose, for contradiction, that a stabiliser state $\ket{\psi}$ is orthogonal to every element of $\cS_4$. Let $V=\zz^8_2$ be the binary symplectic space of four qubits, with symplectic form $\omega$. Write $x_j,z_j\in V$ for the classes of $X_j,Z_j$, and put $y_j=x_j+z_j$. Let further $L\subset V$ be the four-dimensional Lagrangian subspace obtained from the unsigned stabiliser group of $\ket{\psi}$.
    
    Note that the Lagrangian subspaces corresponding to the eight states in $\cS_4$ are
    \begin{align*}
        L_1
        &=\langle z_1,z_2,z_3,z_4\rangle\; ,
        &
        L_2
        &=\langle z_1,z_2,z_3,x_4\rangle\; ,&
        L_3
        &=\langle z_1,x_2,x_3,y_4\rangle\; ,&
        L_4
        &=\langle z_1,x_2,y_3,y_4\rangle\; ,\\
        L_5
        &=\langle x_1,z_2,x_3,y_4\rangle\; ,&
        L_6
        &=\langle x_1,z_2,y_3,y_4\rangle\; ,&
        L_7
        &=\langle x_1,x_2,z_3,z_4\rangle\; ,&
        L_8
        &=\langle x_1,x_2,z_3,x_4\rangle\; ,
    \end{align*}
    where $x_i,y_i,z_i$ denote the symplectic vectors $v\in V$ of the corresponding Weyl operators $W_v$. 
    
    Note that the formula for the overlap of stabiliser states in Eq.~(\ref{eq: stabiliser overlap}) in particular implies that
    \begin{align*}
        \bracket{\psi}{\psi_i}=0
        \quad\Longrightarrow\quad
        L\cap L_i\neq\{0\}\; .
    \end{align*}
    
    We will also need the following observation. Let $U\subset V$ be isotropic and suppose
    \begin{align*}
        C
        &=U\oplus\langle v\rangle\; ,&
        C'
        &=U\oplus\langle w\rangle\; ,
    \end{align*}
    where $0\neq v,w\in U^\perp$ and $\omega(v,w)=1$. If a Lagrangian $L\in\Lag(V)$ intersects both $C$ and $C'$ nontrivially, then $L\cap U\neq\{0\}$. Indeed, suppose that $L\cap U=\{0\}$, then any nonzero vectors in $L\cap C$ and $L\cap C'$ are of the form $u+v$ and $u'+w$ for $u,u'\in U$. Since $U$ is isotropic, and $v,w\in U^\perp$, $\omega(u+v,u'+w)=\omega(v,w)=1$, contradicting that $L$ is isotropic.
    
    Now, apply this to the pairs
    $(L_1,L_2)$, $(L_7,L_8)$, $(L_3,L_4)$ and $(L_5,L_6)$ to conclude
    \begin{align}\label{eq: four intersections}
        L&\cap\langle z_1,z_2,z_3\rangle\neq\{0\}\; , &
        L&\cap\langle x_1,x_2,z_3\rangle\neq\{0\}\; , &
        L&\cap\langle z_1,x_2,y_4\rangle\neq\{0\}\; , &
        L&\cap\langle x_1,z_2,y_4\rangle\neq\{0\}\; .
    \end{align}
    
    We claim that this implies that either $z_3\in L$ or $y_4\in L$. Suppose otherwise, and choose nonzero vectors
    \begin{align*}
        a&=\alpha z_1+\beta z_2+\gamma z_3\; ,&
        b&=\delta x_1+\varepsilon x_2+\zeta z_3\; ,&
        d&=\eta z_1+\theta x_2+\iota y_4\; ,&
        e&=\kappa x_1+\lambda z_2+\mu y_4\; .
    \end{align*}
    in the respective four intersections
    in Eq.~(\ref{eq: four intersections}) (with Boolean coefficients). Since $z_3,y_4\notin L$, each of the pairs $(\alpha,\beta)$, $(\delta,\varepsilon)$, $(\eta,\theta)$ and $(\kappa,\lambda)$ must be nonzero. Isotropy of $L$ thus implies
    \begin{align}\label{eq: coefficients constrained}
        \omega(a,b)&=\alpha\delta+\beta\varepsilon=0\;, &
        \omega(a,d)&=\beta\theta=0\; ,&
        \omega(a,e)&=\alpha\kappa=0\; ,&
        \omega(b,d)&=\delta\eta=0\; ,&
        \omega(b,e)&=\varepsilon\lambda=0\; .
    \end{align}
    We evaluate these constraints for the three possible nonzero values of $(\alpha,\beta)$:
    
    if $(\alpha,\beta)=(1,0)$, then Eq.~(\ref{eq: coefficients constrained}) gives $\delta=0$ and thus $\varepsilon=1$ from which $\lambda=0$ and thus $\kappa=1$, contradicting $\alpha\kappa=0$;
    
    if $(\alpha,\beta)=(0,1)$, then Eq.~(\ref{eq: coefficients constrained}) gives $\varepsilon=0$ and thus $\delta=1$ from which $\eta=0$ and thus $\theta=1$, contradicting $\beta\theta=0$;
    
    if $(\alpha,\beta)=(1,1)$, then Eq.~(\ref{eq: coefficients constrained}) gives $\theta=0$ and thus $\eta=1$ from which $\delta=0$ and thus $\varepsilon=1$, contradicting $\alpha\delta+\beta\varepsilon=0$.
    
    \noindent Consequently, we must have either $z_3\in L$ or $y_4\in L$.

    If $z_3\in L$, then the stabiliser group of $\ket{\psi}$
    contains either $+Z_3$ or $-Z_3$, and hence
    \begin{align*}
        \ket{\psi}
        =\ket{\psi}_{124}\otimes\ket{z_s}_3\; ,
        \qquad s\in\{+,-\}\; ,
    \end{align*}
    up to the ordering of tensor factors. Likewise, if $y_4\in L$, then
    \begin{align*}
        \ket{\psi}
        =\ket{\psi}_{123}\otimes\ket{y_s}_4,
        \qquad s\in\{+,-\}\; .
    \end{align*}
    Note that for any of these two fixed one-qubit eigenstate, there are exactly two elements of $\cS_4$ orthogonal to $\ket{\psi}$ on that qubit. Since the inner product of product states factorises, orthogonality to the remaining six states is equivalent to orthogonality of $\ket{\psi}$ to the corresponding residual $3$-qubit product states.
    
    For $Z_3=\pm1$, in the qubit order $(1,2,4)$, the residual sets are
    \begin{align*}
        \mathcal R_{Z_3,+}
        &:=\bigl\{
         \ket{z_+z_+z_+},\  
         \ket{z_-x_+y_+},\ 
         \ket{z_-x_+y_-},\ 
         \ket{x_+z_-y_+},\ 
         \ket{x_+z_-y_-},\ 
         \ket{x_-x_-z_-}
        \bigr\}\; ,\\
        \mathcal R_{Z_3,-}
        &:=\bigl\{
         \ket{z_+z_+x_+},\ 
         \ket{z_-x_+y_+},\ 
         \ket{z_-x_+y_-},\ 
         \ket{x_+z_-y_+},\ 
         \ket{x_+z_-y_-},\ 
         \ket{x_-x_-x_-}
        \bigr\}\; .
    \end{align*}
    Similarly, for $Y_4=\pm 1$, in the qubit order $(1,2,3)$, the residual sets are
    \begin{align*}
        \mathcal R_{Y_4,+}
        &:=
        \bigl\{
        \ket{z_+z_+z_+},\ 
         \ket{z_+z_+z_-},\ 
         \ket{z_-x_+x_+},\ 
        \ket{x_+z_-x_-},\ 
         \ket{x_-x_-z_+},\ 
         \ket{x_-x_-z_-}
        \bigr\}\; \\
        \mathcal R_{Y_4,-}
        &:=\bigl\{
         \ket{z_+z_+z_+},\ 
         \ket{z_+z_+z_-},\ 
         \ket{z_-x_+y_-},\ 
         \ket{x_+z_-y_+},\ 
         \ket{x_-x_-z_+},\ 
         \ket{x_-x_-z_-}
        \bigr\}\; .
    \end{align*}
    Note that $R_{Y_4,+}=\cS_3$ from Lm.~\ref{lm: 3-qubit obstruction}. Moreover, the other sets are locally Clifford equivalent to $\cS_3$, explicitly
    \begin{align*}
        (Y\otimes H\otimes C)\mathcal R_{Z_3,+}
        &=\cS_3\; ,&
        (Y\otimes H\otimes R_X)\mathcal R_{Z_3,-}
        &=\cS_3\; ,&
        (I\otimes I\otimes S)\mathcal R_{Y_4,-}
        &=\cS_3\; ,
    \end{align*}
    where $S=\mathrm{diag}(1,i)$ and $H=\frac{1}{\sqrt{2}}\begin{pmatrix}
        1 & 1 \\ 1 & -1
    \end{pmatrix}$ are the phase and Hadamard gates, and $C=HS^\dagger$ and $R_X=\exp(-\frac{i\pi}{4}X)=\exp(-\frac{i\pi}{4})HSH$. Writing $s\in\{+,-\}$, their action on Pauli eigenstates is given, up to an overall phase, by
    \begin{align*}
    \begin{array}{c|ccc}
        U & \ket{x_s} & \ket{y_s} & \ket{z_s}\\[.1cm]
        \hline
        Y   & \ket{x_{-s}} & \ket{y_s}    & \ket{z_{-s}}\\[.1cm]
        H   & \ket{z_s}    & \ket{y_{-s}} & \ket{x_s}\\[.1cm]
        C   & \ket{y_s}    & \ket{z_s}    & \ket{x_s}\\[.1cm]
        R_X & \ket{x_s}    & \ket{z_s}    & \ket{y_{-s}}\\[.1cm]
        S   & \ket{y_s}    & \ket{x_{-s}} & \ket{z_s}
    \end{array}
    \end{align*}
    
    Clearly, local Clifford transformations preserve stabiliser states and their orthogonality relations. The existence of a state $\ket{\psi}$ orthogonal to every element in $\cS_4$ thus also implies the existence of a $3$-qubit stabiliser state orthogonal to every element of $\cS_3$, contradicting Lm.~\ref{lm: 3-qubit obstruction}.
    
    Consequently, no $4$-qubit stabiliser state is orthogonal to all elements of $\cS_4$, that is, $\cS_4$ is a maximal orthogonal set of stabiliser states of cardinality $|\cS_4|=8<16$. In particular, it cannot be extended to a four-qubit stabiliser basis.
\end{proof}

\section{Proof of Thm.~\ref{thm: 3 qudit USB}}\label{app: proof - 3 qudits}

Let $q$ be odd prime and let $V:=\mathbb F_q^{2n}$ be the $n$-qudit symplectic phase space, and recall that we identify pure stabiliser states with affine Lagrangian subspaces $A\subset V$. By Cor.~\ref{cor: affine stabiliser overlap}, their overlaps are of the form
\begin{align*}
    \left|\bracket{A}{B}\right|^2
    &=q^{-n}|A\cap B|\; ,&
    \bracket{A}{B}=0
    &\iff A\cap B=\emptyset\; .
\end{align*}
We will also use the fact that for any stabiliser subgroup $A$, the support of the outcome distribution of a stabiliser state (after measurement in the joint eigenbasis of $A$) is an affine subspace of the corresponding outcome space.

\begin{lemma}\label{lm: affine stabiliser support}
    Let $q$ be prime, $L,M\in\Lag(V)$ and
    $\lambda\in\Xi(L)$. In the basis
    $\{\ket{M,\eta}\mid\eta\in\Xi(M)\}$, the \emph{support of $\ket{L,\lambda}$} is
    \begin{align}\label{eq: outcome set}
        S_M(L,\lambda)
        =\{\eta\in\Xi(M)\mid\eta|_{L\cap M}=\lambda|_{L\cap M}\}\; .
    \end{align}
    It is an affine subspace of dimension
    $n-\dim(L\cap M)$, and the outcome distribution is uniform on
    this support.
\end{lemma}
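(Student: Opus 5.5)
The proof reduces the outcome distribution to overlaps, which Lm.~\ref{lm: stabiliser overlap} already computes. The basis $\{\ket{M,\eta}\mid\eta\in\Xi(M)\}$ is an orthonormal basis of $\mathbb C^D$: distinct $\eta,\eta'\in\Xi(M)$ differ on $M=M\cap M$, so the corresponding states are orthogonal by Lm.~\ref{lm: stabiliser overlap}. Moreover $|\Xi(M)|=q^n=D$, because $\Xi(M)$ is a torsor over $M^*$ in both the odd case ($\Xi(M)=M^*$) and the qubit case (where two solutions of $\delta\xi=\beta_W|_{M\times M}$ differ by a linear functional, and a solution exists since $W$ restricted to the isotropic $M$ gives a commuting group without $-\one$). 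The probability of outcome $\eta$ is therefore $p(\eta)=|\langle M,\eta|L,\lambda\rangle|^2$.

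By Lm.~\ref{lm: stabiliser overlap}, $p(\eta)$ equals $q^{\dim R-n}$ if $\eta|_R=\lambda|_R$, where $R:=L\cap M$, and vanishes otherwise. This gives the support $S_M(L,\lambda)$ of Eq.~(\ref{eq: outcome set}) directly, and shows that the distribution is constant, hence uniform, on that support.

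It remains to establish the affine structure and the dimension count. Consider the restriction map $\rho:\Xi(M)\to\{\text{functions }R\to\mathbb F_q\}$, $\eta\mapsto\eta|_R$. It is affine, with linear part the restriction $M^*\to R^*$; the latter is surjective because $R\subseteq M$ is a subspace, and its kernel is $\mathrm{Ann}(R)\subset M^*$, of dimension $n-\dim R$. The set $S_M(L,\lambda)$ is the fibre $\rho^{-1}(\lambda|_R)$. This fibre is nonempty. In the qubit case, $\lambda|_R$ and any $\eta_0|_R$ both have coboundary $\beta_W|_{R\times R}$, so their difference is linear on $R$ and extends to some $\mu\in M^*$; then $\eta_0+\mu$ lies in the fibre. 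In the odd case we simply extend $\lambda|_R$ linearly to $M$. The fibre is therefore a coset of $\mathrm{Ann}(R)$, i.e.\ an affine subspace of dimension $n-\dim(L\cap M)$.

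As a consistency check, $|S_M|\cdot q^{\dim R-n}=1$, so the probabilities sum to one. The only delicate point is the qubit case: one must check that $\Xi(M)$ is a torsor of the right size and that the fibre is nonempty. Both follow from the cocycle argument already used in the proof of Lm.~\ref{lm: stabiliser overlap}.
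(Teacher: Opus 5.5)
Your proof is correct and follows the paper's route: the support and uniformity come directly from Lm.~\ref{lm: stabiliser overlap}, and the affine structure comes from realising $S_M(L,\lambda)$ as a fibre of restriction to $L\cap M$, i.e.\ a coset of its kernel (shifted by a base point $\eta_0\in\Xi(M)$ for $q=2$); you are slightly more explicit than the paper about why that fibre is nonempty and why $\{\ket{M,\eta}\}$ is an orthonormal basis of size $q^n$. The only blemish is your side justification of $\Xi(M)\neq\emptyset$: the operators $\{W_v\mid v\in M\}$ need not generate a group avoiding $-\one$ (e.g.\ $XX\cdot ZZ\cdot YY=-\one$ with the usual Hermitian choices), so you should instead take the group generated by $W_{v_1},\dots,W_{v_n}$ for a basis of $M$, which contains exactly one of $\pm W_v$ for each $v\in M$ --- though the paper simply takes this as part of its setup rather than of the lemma.
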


\begin{proof}
    Eq.~(\ref{eq: outcome set}) follows directly from Lm.~\ref{lm: stabiliser overlap}. For odd $q$, if $\mathrm{res}_K:M^*\longrightarrow K^*$ denotes restriction to $K:=L\cap M$, then 
    \begin{align*}
        S_M(L,\lambda)
        =\mathrm{res}_K^{-1}(\lambda|_K)\; .
    \end{align*}
    Clearly, $S_M(L,\lambda)$ is an affine coset of $\ker(\mathrm{res}_K)=\left\{\eta\in M^*\mid\eta|_K=0\right\}$ of dimension $n-\dim(K)$. For $q=2$, fix any $\eta_0\in\Xi(M)$ such that $\eta=\eta_0+\xi$ for $\eta\in\Xi(M)$ with $\xi\in M^*$. Now, note that $\delta(\lambda|_K)=\delta(\eta_0|_K)=\beta_W|_{K\times K}$, from which $\lambda|_K-\eta_0|_K\in K^*$ and thus $S_M(L,\lambda)=\eta_0+\mathrm{res}_K^{-1}(\lambda|_K-\eta_0|_K)$ is an affine coset of dimension $n-\dim(K)$. Finally, uniformity follows from Eq.~(\ref{eq: stabiliser overlap}).
\end{proof}

In the affine-Lagrangian representation, this can also be seen as follows. Define $\pi_M:V\ra M^*$, $\pi_M(a)(v):=\omega(a,v)$. If $A=a+L$ represents the candidate stabiliser state, then
\begin{align*}
    S_M(A)=\pi_M(A)
    =\pi_M(a)+\pi_M(L)\; .
\end{align*}

\begin{lemma}\label{lm: two-qudit obstructions}
    Let $q$ be odd prime. Then no two-qudit stabiliser state is orthogonal to all elements of the following sets:
    \begin{equation}
    \begin{aligned}
        \cT_1
         &\ :=\ \bigl\{\ket{w_bw_c}\mid b,c\in\mathbb F_q^\times\bigr\}\ \cup\ \bigl\{\ket{w_0x_0},\ket{z_0w_0}\bigr\}\; ,\\
        \cT_2
         &\ :=\ \bigl\{\ket{x_0w_c}\mid c\in\mathbb F_q^\times\bigr\}\ \cup\ \bigl\{\ket{w_ax_0}\mid a\in\mathbb F_q^\times\bigr\}\ \cup\ \bigl\{\ket{w_0w_0}\bigr\}\; ,
        \\
        \cT_3
         &\ :=\ \bigl\{\ket{x_0w_b}\mid b\in\mathbb F_q^\times\bigr\}\ \cup\ \bigl\{\ket{w_aw_0}\mid a\in\mathbb F_q^\times\bigr\}\ \cup\ \bigl\{\ket{w_0z_0}\bigr\}\; .
    \end{aligned}
    \end{equation}
\end{lemma}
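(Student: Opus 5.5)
The plan is to pass to the affine-Lagrangian picture of Cor.~\ref{cor: affine stabiliser overlap}. A candidate two-qudit stabiliser state is $\ket{A}$ with $A=a+L$ and $L\in\Lag(\mathbb F_q^4)$. For a product basis with Lagrangian $M$, Lm.~\ref{lm: affine stabiliser support} says the support $S_M(A)$ is an affine subspace of $\mathbb F_q^2$ of dimension $2-\dim(L\cap M)$. Each set $\cT_i$ contains a large block of one product basis, and orthogonality to that block confines $S_M(A)$ to the complement of the block. Since an affine subspace can only sit inside a small union of lines in few ways, this forces $L\cap M$ to be large. We are then left with a short list of candidate states, and the few remaining elements of $\cT_i$ rule them out by direct single-qudit overlaps. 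Throughout we use that two single-qudit Pauli eigenstates from different bases ($x$, $z$, $w$) are never orthogonal, and that inner products of product states factorise.

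\textbf{$\cT_1$.} Take $M=\langle w_1,w_2\rangle$, the $W\otimes W$ basis. Orthogonality to all $\ket{w_bw_c}$ with $b,c\neq0$ forces $S_M(A)\subseteq\{b=0\}\cup\{c=0\}$. A $2$-dimensional support (the case $L\cap M=0$) is impossible. A $1$-dimensional support must then be one of the two axes, and an axis arises only when $L\cap M=\langle w_1\rangle$ or $\langle w_2\rangle$. A $0$-dimensional support means $L=M$ and the support is a point on an axis. In every case $L$ contains $w_1$ or $w_2$ with eigenvalue label $0$. Since $L$ is isotropic, it then splits as $\langle w_1\rangle\oplus L_2$ (or symmetrically), so $\ket{\psi}=\ket{w_0}\ket{\varphi}$ or $\ket{\varphi}\ket{w_0}$ with $\ket{\varphi}$ a single-qudit stabiliser state. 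Take $\ket{\psi}=\ket{w_0}\ket{\varphi}$. Orthogonality to $\ket{w_0x_0}$ forces $\ket\varphi=\ket{x_c}$ with $c\neq0$. But then $\ket{\psi}$ has nonzero overlap with $\ket{z_0w_0}$ on both factors, a contradiction. The case $\ket{\varphi}\ket{w_0}$ is symmetric, with the roles of $\ket{w_0x_0}$ and $\ket{z_0w_0}$ exchanged.

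\textbf{$\cT_2$ and $\cT_3$.} These are not full grids, so I would argue in two steps.
\begin{enumerate}
\item The two one-parameter families $\{\ket{x_0w_c}\}_{c\neq0}$ and $\{\ket{w_ax_0}\}_{a\neq0}$ live in the bases $M'=\langle x_1,w_2\rangle$ and $M''=\langle w_1,x_2\rangle$. Orthogonality to the first family confines $S_{M'}(A)$ to the complement of $\{0\}\times\mathbb F_q^\times$. An affine line or plane avoiding that punctured line must be parallel to it or meet it only at $(0,0)$, or be a line $\{x=\text{const}\}$ with $\text{const}\neq0$. This forces $L\cap M'$ to be nonzero and of a specific shape. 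The same holds for $M''$.
\item Orthogonality to $\ket{w_0w_0}$ (for $\cT_2$) or $\ket{w_0z_0}$ (for $\cT_3$) additionally requires $L\cap\langle w_1,w_2\rangle\neq0$ (respectively $L\cap\langle w_1,z_2\rangle\neq0$), with a mismatched label there.
\end{enumerate}
I would then split according to whether $L$ is a product Lagrangian $L_1\oplus L_2$. In the product case, $\ket{\psi}$ is a product of single-qudit eigenstates and a short check finishes it. Orthogonality to both families forces $\ket{\psi}\in\{\ket{x_c w_0},\ket{w_0x_c},\ket{x_cx_{c'}}\}$ with nonzero labels $c,c'$, and the last element of $\cT_2$ or $\cT_3$ then kills each option.

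The main obstacle is the entangled case. Here $L$ meets each product Lagrangian in at most a line spanned by a ``diagonal'' vector such as $x_1+tw_2$ or $w_1+tw_2$ with $t\neq0$. I would first try to show that two such diagonal lines, one in $M'$ and one in $\langle w_1,w_2\rangle$, together with isotropy already pin $L$ down to at most a couple of explicit Lagrangians. This should follow from a symplectic computation like the coefficient bookkeeping in the proof of Thm.~\ref{thm: 4 qubit USB}. For those Lagrangians, the supports in $M'$ and $M''$ are diagonal affine lines, and each such line hits the forbidden punctured axis in a point with nonzero coordinate, which gives the contradiction. Should the direct computation get messy, the fallback is to apply a local Clifford symmetry (for instance one exchanging $X$ and $W$ on one qudit while fixing $W$ on the other) to relate $\cT_3$ to $\cT_2$, so that the entangled case only needs to be done once.
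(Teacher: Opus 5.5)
Your argument for $\cT_1$ is the paper's. For $\cT_2$ and $\cT_3$, your treatment of the entangled case takes a different route from the paper, and it works. It closes for a different reason than you anticipate, though: the isotropy computation you defer is already the whole argument.

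If $L$ is not a product Lagrangian, then $L\cap\langle x_1,w_2\rangle=\langle x_1+tw_2\rangle$ and $L\cap\langle w_1,w_2\rangle=\langle w_1+sw_2\rangle$ with $s,t\neq0$. But $\omega(x_1+tw_2,\,w_1+sw_2)=\omega(x_1,w_1)\neq0$, so no such $L$ exists. The first intersection is forced in both sets by $\{\ket{x_0w_c}\}_{c\neq0}$. The second is forced in $\cT_2$ by the singleton $\ket{w_0w_0}$ (Lm.~\ref{lm: stabiliser overlap}), and in $\cT_3$ by the family $\{\ket{w_aw_0}\}_{a\neq0}$, which lives in $\langle w_1,w_2\rangle$, not in $M''$.

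The paper instead writes an entangled state as $(I\otimes U)\ket{\Phi_q}$. It uses $\bracket{r,s}{\psi_U}=q^{-1/2}\bra sU\ket{\overline r}$ to pin down $U\ket{\overline{w_0}}$ from one family, and then finds a nonzero overlap with the singleton. Your version stays inside the symplectic picture and rules out every non-product $L$ before eigenvalue labels enter, and for $\cT_3$ it does so without the singleton. The paper's version works directly with the explicit state and uses one family plus the singleton in each set.

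Three things in the proposal are wrong as written and should be replaced.

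First, your endgame claim that each diagonal support line meets the forbidden punctured axis is false. A diagonal line through the origin avoids it, as your own Step 1 allows. Concretely, in $\cT_2$ the two one-parameter families alone are satisfied by the $q-1$ entangled states with $A=L_t=\langle x_1+tw_2,\,w_1+t^{-1}x_2\rangle$, $t\in\mathbb F_q^\times$. Their supports in $M'$ and $M''$ pass through $(0,0)$. They are excluded only because $L_t\cap\langle w_1,w_2\rangle=\{0\}$, i.e.\ by $\ket{w_0w_0}$. So the contradiction has to come from pairing $M'$ with $\langle w_1,w_2\rangle$, not from the $M'$ and $M''$ supports.

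Second, your product-case list is incorrect. The states $\ket{x_cw_0}$ and $\ket{w_0x_c}$ are not orthogonal to both families (e.g.\ $\bracket{w_ax_0}{x_cw_0}\neq0$), and $\ket{w_0w_0}$ is missing. The actual survivors are $\ket{x_ax_b}$ and $\ket{w_0w_0}$ for $\cT_2$, and $\ket{x_aw_b}$ and $\ket{w_0w_0}$ for $\cT_3$, with $a,b\neq0$. The respective singleton still has nonzero overlap with each of them.

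Third, your fallback is unavailable: no local unitary maps $\cT_3$ to $\cT_2$. On qudit $2$ of $\cT_3$, the $(q-1)$-fold repeated factor $\ket{w_0}$ is orthogonal to other factors occurring on that qudit, which happens on neither qudit of $\cT_2$. With the isotropy argument you do not need the fallback anyway.
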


\begin{proof}
    Suppose that a stabiliser state is orthogonal to the elements in $\{\ket{w_bw_c}\mid b,c\neq0\}\subset\cT_1$. By Lm.~\ref{lm: affine stabiliser support}, its outcome support for the Lagrangian corresponding to the Abelian subgroup $\langle W_1,W_2\rangle$ is an affine subspace $S\subset\mathbb F_q^2$ satisfying
    \begin{align*}
        S\subseteq\{s_1=0\}\cup\{s_2=0\}\; .
    \end{align*}
    Since $2<q$, an affine subspace cannot be covered by these two proper affine hyperplanes unless it is contained in one of them. Thus the state is stabilised either by $W_1$ or by $W_2$ with eigenvalue
    label $0$, and hence factorises as either
    \begin{align*}
        \ket{w_0}\ket r
        \qquad\text{or}\qquad
        \ket r\ket{w_0}\; .
    \end{align*}
    In the first case, orthogonality to $\ket{w_0x_0}$ forces $\ket r\perp\ket{x_0}$, that is, $r=x_i$ for $i\neq 0$, after which its overlap with $\ket{z_0w_0}$ is nonzero. The second case is analogous. This proves the assertion for $\cT_1$.

    Before analysing the other two cases, observe that a two-qudit pure stabiliser $|L,\xi\rangle$ state is either a product or maximally entangled state. Indeed, let $L\subset V=V_A\oplus V_B$ be its Lagrangian subspace and set $L_A:=L\cap V_A$. Taking the partial trace over the second qudit gives $\rho_A=\frac{1}{q}\sum_{v\in L_A}\chi_\xi|_A(v)W_v^{(A)}$. Since $L_A$ is isotropic in the two-dimensional symplectic space $V_A$, its dimension is either zero or one. If $\dim L_A=0$, then $\rho_A=\frac{\one}{q}$ and $|L,\xi\rangle$ is maximally entangled, if $\dim L_A=1$, then $\rho_A$ is a rank-one one-qudit stabiliser projector, hence, $|L,\xi\rangle$ is a product state.
    
    Now, consider $\cT_2$. For a product stabiliser state $\ket r\ket s$, orthogonality to the first and second families in
    $\cT_2$ gives,
    \begin{align*}
        \ket r\perp\ket{x_0}\ &\text{or}\ \ket s=\ket{w_0}\; ,&
        \ket r=\ket{w_0}\ &\text{or}\ \ket s\perp\ket{x_0}\; ,
    \end{align*}
    respectively. The four possible combinations are either mutually inconsistent or give nonzero overlap with $\ket{w_0w_0}$.
    
    For a maximally entangled stabiliser state $\ket{\psi}$, let $U$ be a one-qudit Clifford unitary such that
    \begin{align*}
        \ket{\psi}=\ket{\psi_U}
         &=(I\otimes U)\ket{\Phi_q}\; ,&
        \ket{\Phi_q}
         &=\frac1{\sqrt q}\sum_{j\in\mathbb F_q}\ket{jj}\; .
    \end{align*}
    Since $\bracket{r,s}{\psi_U}=\frac{1}{\sqrt{q}}\bra sU\ket{\overline r}$, orthogonality to $\ket{w_ax_0}$ for every $a\neq0$ implies $U\ket{\overline{w_0}}\propto\ket{x_0}$. However, in this case, the overlap with the last vector in $\cT_2$ is nonzero, $\bracket{w_0w_0}{\psi_U}\propto\bracket{w_0}{x_0}\neq0$, hence, also $\cT_2$ admits no stabiliser extension.
    
    Analogously, for $\cT_3$, the product-state alternatives (for $\ket{r}\ket{s}$) are, respectively,
    \begin{align*}
        \ket r\perp\ket{x_0}\ &\text{or}\ \ket s=\ket{w_0}\; ,&
        \ket r=\ket{w_0}\ &\text{or}\ \ket s\perp\ket{w_0}\; .
    \end{align*}
    Yet, the singleton $\ket{w_0z_0}$ excludes all possible combinations. Finally, in the maximally entangled case, orthogonality to $\ket{w_aw_0}$ for all $a\neq0$ implies $U\ket{\overline{w_0}}\propto\ket{w_0}$, in which case $\bracket{w_0z_0}{\psi_U}\propto\bracket{z_0}{w_0}\neq0$.
\end{proof}

\begin{theorem}\label{thm: unextendible qudits, d>3}
    Let $q\geq5$ be prime. The following set is an unextendible stabiliser basis:
    \begin{equation}\label{eq: unextendible qudit stabiliser set}
    \begin{aligned}
        \cS_q:=&
         \bigl\{\ket{x_az_bx_c}\mid a,b,c\in\mathbb F_q^\times\bigr\}
        \ \cup\ 
         \bigl\{\ket{w_0z_0w_0}\bigr\}
        \ \cup\ 
         \bigl\{\ket{x_0w_bw_c}\mid b,c\in\mathbb F_q^\times\bigr\}
        \ \cup\ 
         \bigl\{\ket{w_aw_0x_0}\mid a\in\mathbb F_q^\times\bigr\}\; .
    \end{aligned}
    \end{equation}
\end{theorem}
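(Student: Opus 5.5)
We check orthogonality and the cardinality directly. For unextendibility, we look at where a hypothetical extra state is supported in the $x\otimes z\otimes x$ basis. This forces it to factorise, and each factorised case reduces to one of the two-qudit obstructions of Lm.~\ref{lm: two-qudit obstructions}.

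\textbf{Orthogonality and size.} Every member of $\cS_q$ is a product of single-qudit Pauli eigenstates, so it is a stabiliser state. Two members are orthogonal as soon as some tensor factor consists of distinct eigenstates of the same single-qudit Pauli operator. We check this family by family.
\begin{itemize}
\item Within $\{\ket{x_az_bx_c}\}$ the labels differ somewhere.
\item $\ket{x_az_bx_c}\perp\ket{w_0z_0w_0}$ because $b\neq0$.
\item $\ket{x_az_bx_c}\perp\ket{x_0w_bw_c}$ because $a\neq0$.
\item $\ket{x_az_bx_c}\perp\ket{w_aw_0x_0}$ because $c\neq0$.
\item $\ket{w_0z_0w_0}\perp\ket{x_0w_bw_c}$ via the third factor, since $c\neq0$.
\item $\ket{w_0z_0w_0}\perp\ket{w_aw_0x_0}$ via the first factor, since $a\neq0$.
\item $\ket{x_0w_bw_c}\perp\ket{w_{a}w_0x_0}$ via the second factor, since $b\neq0$.
\end{itemize}
For the size, $|\cS_q|=(q-1)^3+(q-1)^2+(q-1)+1$, which is strictly less than $q^3=(q-1)^3+3(q-1)^2+3(q-1)+1$.

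\textbf{Step 1: forcing a factorisation.} Suppose a stabiliser state $\ket{\psi}=\ket{A}$ is orthogonal to all of $\cS_q$. Let $M=\langle x_1,z_2,x_3\rangle$ and apply Lm.~\ref{lm: affine stabiliser support}. The support $S\subset\mathbb F_q^3$ of $\ket\psi$ in the product basis $\ket{x_{s_1}z_{s_2}x_{s_3}}$ is a nonempty affine subspace. Orthogonality to $\{\ket{x_az_bx_c}\mid a,b,c\neq0\}$ gives
\begin{align*}
S\subseteq\{s_1=0\}\cup\{s_2=0\}\cup\{s_3=0\}\; .
\end{align*}
Each $S\cap\{s_i=0\}$ is either all of $S$ or a proper affine subspace of $S$. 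Suppose all three were proper. Restrict to an affine line $\ell\subseteq S$ not contained in any of them, which exists by a counting argument. Each proper affine subspace meets $\ell$ in at most one point, so the three sets cover at most $3<q$ of the $q$ points of $\ell$, a contradiction.

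Hence $S\subseteq\{s_i=0\}$ for some $i$. The corresponding single-qudit Pauli ($X_1$, $Z_2$ or $X_3$) then has eigenvalue label $0$ on $\ket\psi$, so its symplectic vector lies in the Lagrangian of $\ket\psi$. Consequently $\ket\psi$ factorises as $\ket{x_0}_1\ket\phi_{23}$, as $\ket{z_0}_2\ket\phi_{13}$, or as $\ket\phi_{12}\ket{x_0}_3$, with $\ket\phi$ a two-qudit stabiliser state.

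\textbf{Step 2: reduction to $\cT_1,\cT_2,\cT_3$.} In each case, the members of $\cS_q$ whose factor on the fixed qudit is orthogonal to that factor drop out. Every other member has nonzero overlap on that qudit, because distinct Pauli bases are mutually unbiased. Orthogonality to such a member is therefore equivalent to orthogonality of $\ket\phi$ to the residual two-qudit product state.
\begin{itemize}
\item For $\ket{x_0}_1$, the residual set is $\{\ket{z_0w_0}\}\cup\{\ket{w_bw_c}\mid b,c\neq0\}\cup\{\ket{w_0x_0}\}=\cT_1$.
\item For $\ket{z_0}_2$, in qudit order $(1,3)$, the residual set is $\{\ket{w_0w_0}\}\cup\{\ket{x_0w_c}\mid c\neq0\}\cup\{\ket{w_ax_0}\mid a\neq0\}=\cT_2$.
\item For $\ket{x_0}_3$, the residual set is $\{\ket{w_0z_0}\}\cup\{\ket{x_0w_b}\mid b\neq0\}\cup\{\ket{w_aw_0}\mid a\neq0\}=\cT_3$.
\end{itemize}
Lm.~\ref{lm: two-qudit obstructions} excludes all three cases, so $\cS_q$ is unextendible.

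The main obstacle is the covering step in Step~1. It is exactly where $q\geq5$ enters: for $q=3$, three coordinate hyperplanes can cover a line, so that case needs a separate argument. I would also double-check that "eigenvalue label $0$ on the whole support" really places the single-site Weyl vector in the Lagrangian of $\ket\psi$. This holds because $S_M(L,\lambda)$ lies in a hyperplane of $M^*$ exactly when $L\cap M$ contains the dual vector, by the dimension count in Lm.~\ref{lm: affine stabiliser support}.
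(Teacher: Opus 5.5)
Your proposal is correct and follows the paper's proof essentially step for step: the support in the $X_1Z_2X_3$ basis, the covering by coordinate hyperplanes, the resulting local stabiliser and factorisation, and the reduction to $\cT_1,\cT_2,\cT_3$ via Lm.~\ref{lm: two-qudit obstructions}. The only deviation is the covering step. The paper bounds $|S\cap H_i|\le|S|/q$ directly to get $|S|\le 3|S|/q<|S|$. Your transversal line also works, but its existence needs exactly this count, applied to the direction space of $S$, so the paper's version is the more direct one.
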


\begin{proof}
    Denote the four blocks in Eq.~(\ref{eq: unextendible qudit stabiliser set}) by $\mathcal B_0,\mathcal B_1,\mathcal B_2,\mathcal B_3$, respectively. Clearly, each block is orthogonal internally, while between pairs
    \begin{align*}
        (\mathcal B_0,\mathcal B_1),\
        (\mathcal B_0,\mathcal B_2),\
        (\mathcal B_0,\mathcal B_3),\
        (\mathcal B_1,\mathcal B_2),\
        (\mathcal B_1,\mathcal B_3),\
        (\mathcal B_2,\mathcal B_3),
    \end{align*}
    orthogonality occurs on qudits
    $2,1,3,3,1,2$, respectively, hence, $\cS_q$ is pairwise orthogonal. Note also that $|\cS_q|=(q-1)^3+(q-1)^2+(q-1)+1=q^3-2q(q-1)<q^3$, hence, $\cS_q$ is not a stabiliser basis. 
    
    Suppose that $\ket\psi$ is a stabiliser state orthogonal to every element of $\cS_q$. By Lm.~\ref{lm: affine stabiliser support}, its outcome support $S$ for the Abelian subgroup $\langle X_1, Z_2, X_3\rangle$ is an affine subspace of $\mathbb F_q^3$. Orthogonality to $\mathcal B_0$ implies $S\cap(\mathbb F_q^\times)^3=\emptyset$, and therefore
    \begin{align*}
        S\subseteq H_1\cup H_2\cup H_3\; ,\qquad
        H_i:=\{s_i=0\}\; .
    \end{align*}
    If $S$ is contained in none of the $H_i$, then every
    $S\cap H_i$ has at most $|S|/q$ elements. Hence,
    \begin{align}\label{eq: d>3 obstruction}
        |S|
        \leq\sum_{i=1}^3|S\cap H_i|
        \leq\frac{3}{q}|S|
        <|S|\; ,
    \end{align}
    a contradiction. Thus $S\subseteq H_i$ for some $i$, and $\ket\psi$ has one of the deterministic local stabilisers $X_1=0$, $Z_2=0$ or $X_3=0$. Accordingly, $\ket\psi$ factorises across the corresponding qudit. After deleting the fixed factor, the remaining orthogonality conditions are precisely those determined by $\cT_1$, $\cT_2$ and $\cT_3$ in Lm.~\ref{lm: two-qudit obstructions}, hence, yield a contradiction.
\end{proof}

We are left to treat the qutrit case, for which the obstruction in Eq.~(\ref{eq: d>3 obstruction}) does not hold. Still, the construction in Thm.~\ref{thm: unextendible qudits, d>3} can be adapted to cover the case $q=3$ as well. To this end, we will need the following lemma.

\begin{lemma}\label{lm: affine cover}
    Let $S\subseteq\mathbb F_3^3$ be a nonempty affine subspace, that is, $S=s+U$ for $U\subset \mathbb F_3^3$ linear, satisfying
    \begin{align*}
        S\subseteq H_1\cup H_2\cup H_3\; ,\qquad
        H_i=\{s_i=0\}\; .
    \end{align*}
    Then either $S\subseteq H_i$ for some $i$, or $S=s+\mathbb F_3r$ is an affine line whose direction $r=(r_1,r_2,r_3)$ satisfies $r_1r_2r_3\neq0$. In the latter case, $S$ meets each $H_i$ in exactly one point.
\end{lemma}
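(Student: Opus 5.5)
We argue by the dimension of the linear part $U$ of $S=s+U$, using the same counting idea as in Eq.~(\ref{eq: d>3 obstruction}) but refined for $q=3$. Suppose $S\not\subseteq H_i$ for every $i$. Then each $S\cap H_i$ is either empty or a proper affine subspace of $S$ (a hyperplane section of $S$, since $H_i$ is a hyperplane not containing $S$), hence has at most $|S|/3$ elements. The covering $S\subseteq H_1\cup H_2\cup H_3$ then forces
\begin{align*}
    |S|\leq\sum_{i=1}^3|S\cap H_i|\leq|S|\; ,
\end{align*}
so equality holds throughout. Consequently, each $S\cap H_i$ is nonempty of size exactly $|S|/3$, and the three sets $S\cap H_i$ are pairwise disjoint.

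The key step is to exploit disjointness. If $\dim U\geq2$, then two affine hyperplane sections $S\cap H_i$ and $S\cap H_j$ of $S$ have dimension $\dim U-1\geq1$. Both lie in the codimension-two subspace $H_i\cap H_j$ relative to $\mathbb F_3^3$; more concretely, the sections are hyperplanes of the affine space $S$ cut out by the nonconstant affine functions $s_i|_S$ and $s_j|_S$. Two such hyperplanes in $S$ are disjoint only if they are parallel, i.e.\ $s_j|_S=c\,s_i|_S+\text{const}$ with $c\neq0$ and nonzero constant. Running this over all three pairs, the functions $s_1,s_2,s_3$ restricted to $S$ would all be affinely equivalent up to scaling, say with the common linear part $\ell$, and disjointness means they vanish on three distinct level sets of $\ell$, i.e.\ $\{s_1=0\},\{s_2=0\},\{s_3=0\}$ partition $S$ into three parallel slices. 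I would then check whether this is compatible with $\dim U\geq2$. Here a concrete counterexample may arise: for instance, a plane on which $s_2=s_1+1$ and $s_3=s_1+2$, e.g.\ $S=\{(t,t+1,u)\}$ fails since $s_3$ is free. So this requires the constraint that the linear parts of $s_1,s_2,s_3$ on $U$ be proportional. If $\dim U=2$ inside $\mathbb F_3^3$, the restriction map $(\mathbb F_3^3)^*\to U^*$ has a one-dimensional kernel, so three coordinate functionals cannot all be pairwise proportional on $U$ unless two differ by the kernel functional, which needs a case check. I expect this step to be the main obstacle. I would try to exhibit $U$ such as $\{r:r_1=r_2=r_3\}$ (dimension one) as the only option, and rule out dimension two by noting that pairwise proportionality on a plane forces $U^\perp$ to contain $e_i-c\,e_j$ for all pairs, which spans at least dimension two, contradiction. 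Dimension three is impossible since $S=\mathbb F_3^3$ contains $(1,1,1)$, and dimension zero means $S$ is a point lying in some $H_i$.

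Finally, for $\dim U=1$, $S=s+\mathbb F_3 r$ has three points, each $S\cap H_i$ is a single point, and $S\not\subseteq H_i$ means $s_i$ is nonconstant along $S$, i.e.\ $r_i\neq0$. Hence $r_1r_2r_3\neq0$, and $S$ meets each $H_i$ in exactly one point, as claimed.
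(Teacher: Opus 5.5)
Your argument is correct and follows essentially the same route as the paper. The counting inequality forces the sections $S\cap H_i$ to be pairwise disjoint hyperplanes of $S$. For $\dim U\geq2$, disjointness forces the coordinate functionals to be proportional on $U$, and your observation that $U^\perp$ would then contain the independent vectors $e_1-c\,e_2$ and $e_1-c'e_3$ is the dual of the paper's remark that the inclusion $U\hookrightarrow\mathbb F_3^3$ would have rank at most one. The proportionality constants are nonzero because each coordinate functional is nonzero on $U$, which your equality step already guarantees, and the exploratory detour about $S=\{(t,t+1,u)\}$ can simply be dropped.
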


\begin{proof}
    Suppose that $S\not\subset H_i$ for $i\in\{1,2,3\}$, and let $S=s+U$ for $U\subset\mathbb F_3^3$ a linear subspace of dimension $m=\dim U=\dim S$. (Note that our assumption implies $m>0$ since otherwise $S=\{s\}$ and thus $S\subset H_i$ for some $i\in\{1,2,3\}$.) As an intersection of affine hyperplanes, each set $S\cap H_i$ contains at most $3^{m-1}$ elements, hence, the inequality,
    \begin{align*}
        3^m=|S|
        \leq\sum_{i=1}^3|S\cap H_i|
        \leq3\cdot3^{m-1}\; ,
    \end{align*}
    is in fact an equality, and the three intersections therefore define pairwise disjoint affine hyperplanes of $S$.
    
    If $m\geq2$, disjointness implies that these hyperplanes are parallel, that is, the linear spaces $U_i$ in $S\cap H_i=h_i+U_i$ for $i\in\{1,2,3\}$ all coincide. Consequently, the three coordinate functionals $l_i(x_1,x_2,x_3)=x_i$ restricted to $U$ all have the same kernel, and since two nonzero linear functionals on a vector space with the same kernel are proportional, there exist $c_1,c_2,c_3\in\mathbb F_3^\times$ and a nonzero functional $f\in U^*$ such that $l_i|_U=c_if$.
    
    It follows that the coordinate map $\iota:U\longrightarrow\mathbb F_3^3$ takes the form
    \begin{align*}
        u\longmapsto
        \bigl(\ell_1(u),\ell_2(u),\ell_3(u)\bigr)
        =f(u)(c_1,c_2,c_3)\; .
    \end{align*}
    that is, $\mathrm{im}(\iota)\subseteq\langle(c_1,c_2,c_3)\rangle$, hence, $\mathrm{rank}(\iota)\leq1$. On the other hand, $\iota$ is simply the inclusion of the subspace $U\subset\mathbb F_3^3$, written in coordinates. It is therefore injective, that is, $\mathrm{rank}(\iota)=\dim U=m$. We thus conclude $m<2$, contradicting $m\geq2$. Consequently, we must have $m=1$, that is, $U=\langle r\rangle$ for some nonzero vector $r=(r_1,r_2,r_3)\in\mathbb F_3^3$, and thus $S=s+\mathbb F_3r$. Moreover, since $S\nsubseteq H_i$ we have $l_i|_U\neq0$, that is, $r_i\neq0$ for all $i\in\{1,2,3\}$, equivalently $r_1r_2r_3\neq0$.
    
    Finally, since for every $i\in\{1,2,3\}$ the linear equation $s_i+t r_i=0$ has a unique solution $t\in\mathbb F_3$, $S$ meets each coordinate hyperplane $H_i$ in a single point.
\end{proof}

\begin{theorem}\label{thm: unextendible qutrits}
    The set $\cS_3$ obtained from Eq.~(\ref{eq: unextendible qudit stabiliser set}) for $q=3$ is an unextendible stabiliser basis.
\end{theorem}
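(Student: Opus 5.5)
The plan is to follow the proof of Thm.~\ref{thm: unextendible qudits, d>3} as far as it goes and to treat separately the one configuration where the counting bound in Eq.~(\ref{eq: d>3 obstruction}) fails for $q=3$. Pairwise orthogonality of $\cS_3$ and the count $|\cS_3|=27-12=15<27$ carry over verbatim from that proof.

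Suppose a stabiliser state $\ket\psi=|L,\lambda\rangle$ is orthogonal to every element of $\cS_3$. By Lm.~\ref{lm: affine stabiliser support}, its outcome support $S\subseteq\mathbb F_3^3$ for $M:=\langle x_1,z_2,x_3\rangle$ is a nonempty affine subspace. Orthogonality to $\mathcal B_0$ forces $S\subseteq H_1\cup H_2\cup H_3$, so Lm.~\ref{lm: affine cover} applies. It leaves two alternatives.
\begin{enumerate}
\item If $S\subseteq H_i$ for some $i$, then $\ket\psi$ has a deterministic local stabiliser on qudit $i$ with label $0$ and factorises across that qudit. Deleting this factor leaves exactly the orthogonality conditions of $\cT_1$, $\cT_2$ or $\cT_3$. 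Lm.~\ref{lm: two-qudit obstructions} was proved for all odd $q$ (its covering step only used $2<q$), so this gives a contradiction exactly as before.
\item Otherwise $S=s+\mathbb F_3 r$ is an affine line with $r_1r_2r_3\neq0$. All remaining work goes into excluding this case.
\end{enumerate}

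\emph{The line case.} Since $\dim S=1=n-\dim(L\cap M)$, the subspace $K:=L\cap M$ is two-dimensional. Identifying $M^*\cong\mathbb F_3^3$ via the basis $(x_1,z_2,x_3)$, $K$ is the annihilator of $r$:
\begin{align*}
K=\{\alpha x_1+\beta z_2+\gamma x_3\mid r_1\alpha+r_2\beta+r_3\gamma=0\}\; .
\end{align*}
Because every $r_i\neq0$, $K$ contains no single-qudit vector, so $\ket\psi$ is not a product across any cut that splits off one qudit. Moreover, $L\subset K^\perp=M\oplus\langle u\rangle$ with $u=r_1z_1+r_2x_2+r_3z_3$, up to the signs coming from $\omega$. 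Hence $L=K\oplus\langle u+m\rangle$ for some $m\in M$, and the label $\lambda$ on $K$ is fixed by the base point $s$.

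Using local Pauli and diagonal Clifford symmetries that rescale $X_1$, $Z_2$, $X_3$ and preserve the block structure as far as possible, the aim is to reduce $r$ to a few normal forms; there are only four directions in $(\mathbb F_3^\times)^3$ up to scalars. For each normal form, I will compute $L\cap L'$ for the Lagrangians of the other blocks: $\langle x_1,w_2,w_3\rangle$ for $\mathcal B_2$, $\langle w_1,z_2,w_3\rangle$ for $\mathcal B_1$, and $\langle w_1,w_2,x_3\rangle$ for $\mathcal B_3$. By Lm.~\ref{lm: affine stabiliser support}, this gives the outcome supports of $\ket\psi$ in those bases. Orthogonality then requires:
\begin{itemize}
\item the support in the $\mathcal B_2$ basis avoids $\{0\}\times(\mathbb F_3^\times)^2$;
\item the support in the $\mathcal B_3$ basis avoids $\mathbb F_3^\times\times\{0\}\times\{0\}$;
\item $\ket\psi$ is orthogonal to $\ket{w_0z_0w_0}$.
\end{itemize}
I expect each affine support, combined with the freedom in $m$ and in $s$ (subject to $S$ meeting each $H_i$ in exactly one point), to force a nonzero overlap with some member of $\mathcal B_1\cup\mathcal B_2\cup\mathcal B_3$.

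This line case is the main obstacle, since the argument becomes a genuinely finite computation over $\mathbb F_3$. My first attempt will be to derive a contradiction structurally, by showing that $L\cap\langle x_1,w_2,w_3\rangle$ is at most one-dimensional. Then the $\mathcal B_2$ support is at least two-dimensional inside $\mathbb F_3^3$ and cannot avoid the $4$ points of $\{0\}\times(\mathbb F_3^\times)^2$, where only outcomes with first coordinate $0$ matter. Only if that fails will I fall back to the explicit enumeration over normal forms of $r$ and $m$.
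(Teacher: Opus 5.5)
Your treatment of orthogonality, the count $|\mathcal S_3|=15$, and the case $S\subseteq H_i$ agrees with the paper. Your description of the line case is also correct: $K=L\cap M$ is the two-dimensional annihilator of $r$, and $L=K\oplus\langle u+m\rangle$.

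The gap is that the line case is never closed, and it is the only step that is new for $q=3$. You offer an expectation, an enumeration over normal forms that you do not carry out, and a structural shortcut whose key inference is false. Knowing $\dim\bigl(L\cap\langle x_1,w_2,w_3\rangle\bigr)\le 1$ only gives a $\mathcal B_2$-support of dimension $\ge 2$. Two-dimensional affine planes in $\mathbb F_3^3$, such as $\{s_1=1\}$ or $\{s_2=0\}$, avoid $\{0\}\times(\mathbb F_3^\times)^2$ entirely, so no contradiction follows from that bound.

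What closes the case is a transversality statement that your own parametrisation gives directly. Put $L_2:=\langle x_1,w_2,w_3\rangle$.
\begin{enumerate}
\item Since $x_1\in L_2$ and $L_2$ is isotropic, every $v\in L\cap L_2$ has $\omega(v,x_1)=0$.
\item Because $\pi(L)=\langle r\rangle$ with $r_1\neq 0$, this forces $\pi(v)=0$, i.e.\ $v\in K$.
\item Hence $v\in K\cap L_2\subseteq M\cap L_2=\langle x_1\rangle$, and $x_1\notin K$ because $r_1\neq 0$.
\end{enumerate}
So $L\cap L_2=\{0\}$. By Lm.~\ref{lm: affine stabiliser support} the $\mathcal B_2$-support of $\ket\psi$ is all of $\mathbb F_3^3$, so $\ket\psi$ overlaps every $\ket{x_0w_bw_c}$.

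The paper argues differently, targeting the singleton $\ket{w_0z_0w_0}$. Because $S$ meets $H_2$, there is $p\in A$ with $\omega(p,z_2)=0$. Add $k=u_1x_1+u_2z_2+u_3x_3\in K$, choosing $u_1,u_3$ to cancel $\omega(p,w_1)$ and $\omega(p,w_3)$, and then fixing $u_2$ by $r_1u_1+r_2u_2+r_3u_3=0$, which is possible since $r_2\neq 0$. This adjustment does not disturb $\omega(\cdot,z_2)=0$, since $M$ is isotropic. The result is a point of $A$ in the affine Lagrangian of $\ket{w_0z_0w_0}$, which contradicts orthogonality via Cor.~\ref{cor: affine stabiliser overlap}.

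Either argument finishes the proof. As written, your proposal leaves the decisive case open.
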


\begin{proof}
    Pairwise orthogonality holds as for $q\geq 5$ in Thm.~\ref{thm: unextendible qudits, d>3}, and $\cS_3$ is not a stabiliser basis since $|\cS_3|=15<27=3^3$.
    
    Suppose now that a stabiliser state $\ket{\psi}$ corresponding to the affine Lagrangian $A=a+L\subseteq V=\mathbb F_3^6$ is orthogonal to every element of $\cS_3$, let $E=\langle e_1,e_2,e_3\rangle$ be the Lagrangian corresponding to the Abelian subgroup $\langle X_1,Z_2,X_3\rangle$. Moreover, define the map $\pi:V\longrightarrow\mathbb F_3^3$ by $\pi(v)=\bigl(\omega(v,e_1),\omega(v,e_2),\omega(v,e_3)\bigr)$. By Lm.~\ref{lm: affine stabiliser support}, the outcome support of the state $\ket{\psi}$ is the affine subspace $S:=\pi(A)$. As in the proof of Thm.~\ref{thm: unextendible qudits, d>3}, orthogonality to the first block of $\cS_3$ implies $S\subseteq H_1\cup H_2\cup H_3$. If $S\subseteq H_i$, the candidate has one of the deterministic local stabilisers $X_1=0$, $Z_2=0$ or $X_3=0$, and Lm.~\ref{lm: two-qudit obstructions} results in a contradiction.
    
    It remains to consider the exceptional case of Lm.~\ref{lm: affine cover}, that is, $S=s+\mathbb F_3r$ with $r_1r_2r_3\neq0$. Since $\pi(L)=\langle r\rangle$ in this case, the kernel of $\pi|_L$ is the two-dimensional space $K:=L\cap E$. Isotropy of $L$ gives
    \begin{align*}
        K
        =\left\{\ \sum_{i=1}^3u_ie_i\ \bigg\mid\ u_1r_1+u_2r_2+u_3r_3=0\ \right\}\; .
    \end{align*}
    By Lm.~\ref{lm: affine cover}, $S$ intersects $H_2$, hence, there exists $p\in A$ with $\omega(p,e_2)=0$. Consider the stabiliser state $\ket{w_0z_0w_0}$, whose associated Lagrangian is given by $L_{\ket{w_0z_0w_0}}=\{v\in V\mid\omega(v,f_1)=\omega(v,e_2)=\omega(v,f_3)=0\}$, where $f_1$ and $f_3$ are the representatives of the Pauli operators $W_1$ and $W_3$, in particular, $\omega(e_1,f_1)=1=\omega(e_3,f_3)$. Now, choose
    \begin{align*}
        u_1&=-\omega(p,f_1)\; ,&
        u_3&=-\omega(p,f_3)\; ,
    \end{align*}
    and note that, since $r_2\neq0$, there exists a unique element $u_2$ such that $u_1r_1+u_2r_2+u_3r_3=0$. By construction,
    \begin{align*}
        k
        :=u_1e_1+u_2e_2+u_3e_3
    \end{align*}
    belongs to $K\subseteq L$, and $p':=p+k$ satisfies all three equations in $L_{\ket{w_0z_0w_0}}$, hence, $p'\in A\cap L_{\ket{w_0z_0w_0}}$. By Cor.~\ref{cor: affine stabiliser overlap}, this implies that $\psi$ has nonzero overlap with $\ket{w_0z_0w_0}$, contradicting the assumption of orthogonality.
\end{proof}

\begin{proof}[Proof of Thm.~\ref{thm: 3 qudit USB}]
    The result follows from Thm.~\ref{thm: unextendible qutrits} and Thm.~\ref{thm: unextendible qudits, d>3}.
\end{proof}

\section{Proof of Thm.~\ref{thm: 3 qubits symplectic}}\label{app: proof - 3 qubits}

We consider the untwisted qubit stabiliser theory over $V=\zz^{2n}_2$. More precisely, in analogy with Eq.~(\ref{eq: stabiliser projector}), we define
\begin{align}
    \pi_{L,\xi}
    =\frac{1}{2^n}\sum_{v\in L}(-1)^{\xi(v)}e_v\; ,
\end{align}
to be a vector in the Bloch space $\R[V]$ with orthogonal basis $\{e_v\}_{v\in V}$ with $\langle e_v,e_w\rangle=2^n\delta_{vw}$. The orthogonality relations for two vectors $\pi_{L,\xi},\pi_{L',\xi'}$ with $L,L'\in\Lag(V)$ and $\xi\in L^*,\xi'\in L'^*$ are then easily seen to be given by
\begin{align}\label{eq: stabiliser orthogonality}
    \langle\pi_{L,\xi},\pi_{L',\xi'}\rangle=0
    \quad\Longleftrightarrow\quad
    \xi_{L\cap L'}\neq\xi'_{L\cap L'}\; ,
\end{align}
equivalently by Eq.~\ref{eq: affine stabiliser overlap}. Within the space of all such abstract stabiliser states $\cS^\mathrm{symp}_n:=\{\pi_{L,\xi}\mid L\in\Lag(V),\ \xi\in L^*\}$, we ask whether there exist sets of pairwise orthogonal vectors that cannot be extended to an abstract stabiliser basis of $2^n$ orthogonal such vectors in $\cS^\mathrm{symp}_n$. The answer is again positive, yet unlike its twisted qubit cousin, already obtains at $n=3$. To construct such a set, we will use the following well-known construction from finite projective geometry.\\

\textbf{Klein correspondence.} We recall the basics of a well-known finite-geometric construction (for more details, see Refs.~\cite[Chs.~7, 8, and 12]{Taylor1992} and \cite[Chs.~2 and 5]{Hirschfeld1998}). For $q$ prime, the projective space $\PG(n,q)$ is the projectivisation of $\mathbb{F}_q^{n+1}$. Its points are the one-dimensional subspaces $\langle x\rangle$, where $0\neq x\in\mathbb{F}_q^{n+1}$. When $q=2$, every one-dimensional subspace contains a unique nonzero vector, so points $[x]\in\PG(n,2)$ may be identified with nonzero vectors $0\neq x\in\mathbb{F}_2^{n+1}$.

Let $E\cong\mathbb{F}_q^{2r}$ carry a symplectic (nondegenerate alternating bilinear) form $\omega$. The symplectic polar space $W(2r-1,q)$ consists of the projectivisations of the totally isotropic subspaces $U\subset E$, that is, $\omega_U=\omega|_{U\times U}=0$. In particular, the nonzero isotropic subspaces of the $n$-qubit symplectic vector space $\mathbb{F}_2^{2n}$ correspond to the projective subspaces of $W(2n-1,2)$. A Lagrangian vector subspace has dimension $n$ and hence corresponds to a projective $(n-1)$-subspace.

Let $W$ be a four-dimensional vector space over $\mathbb{F}_2$, and define the vector space of bi-vectors over it by
\begin{align*}
    V:=\bigwedge\nolimits^2 W\; .
\end{align*}
Since $\dim V=\binom{4}{2}=6$, $V$ is isomorphic as a vector space to the three-qubit vector space $\mathbb{F}_2^6$. Choose a nonzero volume element $\Omega\in\bigwedge^4W$ and define a bilinear form $\omega_K:V\times V\to\mathbb{F}_2$ by
\begin{align}\label{eq: symplectic form}
    z\wedge z'=\omega_K(z,z')\Omega\; .
\end{align}
The form $\omega_K$ is nondegenerate and alternating, hence, $(V,\omega_K)$ is a six-dimensional symplectic vector space.

Let $e_0,e_1,e_2,e_3$ be a basis of $W$, write $V\ni z=\sum_{0\leq i<j\leq 3}p_{ij}e_i\wedge e_j$, and define the Klein quadratic form on $V$ by
\begin{align*}
    Q_K(z)
    :=p_{01}p_{23}+p_{02}p_{13}+p_{03}p_{12}\; .
\end{align*}
Its polar form is $\omega_K$, and since $Q_K$ is the orthogonal direct sum of three hyperbolic planes (corresponding to the coordinate pairs $(p_{01},p_{23})$, $(p_{02},p_{13})$ and $(p_{03},p_{12})$), it has Witt index three and thus defines the hyperbolic quadric
\begin{align*}
    Q^+(5,2)
    :=\big\{[z]\in\PG(V)\mid Q_K(z)=0\big\}\; .
\end{align*}
A nonzero bi-vector $z\in\bigwedge^2W$ is decomposable if $z=x\wedge y$ for some linearly independent $x,y\in W$. In dimension four, decomposability is equivalent to the Pl\"ucker relation $Q_K(z)=0$. Consequently, the Klein correspondence
\begin{align}\label{eq: Klein correspondence}
    \kappa:\Gr(2,W)
    \longrightarrow Q^+(5,2)\; ,\qquad
    \langle x,y\rangle\longmapsto[x\wedge y],
\end{align}
is a bijection between the Grassmannian of all projective lines of $\PG(W)$ and the points of the Klein quadric.

Note that for two nonzero decomposable bi-vectors $z=x\wedge y$ and $z'=x'\wedge y'$, one has
\begin{align*}
    z\wedge z'\neq 0
    \quad\Longleftrightarrow\quad
    \langle x,y\rangle\cap\langle x',y'\rangle={0}.
\end{align*}
Thus their wedge product is nonzero precisely when the corresponding projective lines of $\PG(3,2)$ are skew, that is, have no projective point in common (equivalently, the corresponding two-dimensional subspaces intersect trivially).

The maximal projective subspaces contained in $Q^+(5,2)$ (equivalently, Lagrangian subspaces $L\subset V$ with $L\backslash\{0\}\subset Q^+(5,2)$) are projective planes and occur in two families: first, for every $0\neq x\in W$, the plane
\begin{align*}
    \alpha_x
    :=\mathbb{P}(x\wedge W).
\end{align*}
represents all projective lines through the point $[x]$; second, for every hyperplane $H<W$, the plane
\begin{align*}
    \beta_H
    :=\mathbb{P}\left(\bigwedge\nolimits^2H\right)\; .
\end{align*}
represents all projective lines contained in the projective plane $\mathbb{P}(H)$. Indeed, the underlying three-dimensional vector spaces $x\wedge W$ and $\bigwedge^2H$ are totally singular with respect to $Q_K$ and therefore Lagrangian with respect to $\omega_K$.

Finally, let $p=[x]\in\mathbb{P}(H)$. The \emph{pencil with vertex $p$} in the plane $\mathbb{P}(H)$ is defined as
\begin{align*}
    \mathcal{P}(p,\mathbb{P}(H))
    :=\left\{l\mid p\in l\subseteq\mathbb{P}(H)\right\}\; .
\end{align*}
Under the Klein correspondence in Eq.~(\ref{eq: Klein correspondence}), this pencil is represented by the projective line $\alpha_x\cap\beta_H\subseteq Q^+(5,2)$. Conversely, every projective line contained in the Klein quadric arises from an incident point--plane pair in this way.

The relations between the respective objects under the Klein correspondence are summarised in Tab.~\ref{tab: Klein correspondence}. Using them, we now prove the existence of USBs for the (untwisted) symplectic theory underlying the $n$-qubit Pauli group.

\begin{table}[t]
\centering
\begin{tabular}{lll}
    \toprule
    &
    $\PG(3,2)=\mathbb P(W)$
    &
    $Q^+(5,2)\subseteq\mathbb P(\Lambda^2W)$
    \\
    \midrule
    
    \multirow{4}{*}{\textbf{configurations}}
    &
    projective line $\ell=\mathbb P\langle x,y\rangle$
    &
    point $\kappa(\ell)=[x\wedge y]$
    \\
    
    &
    pencil of lines through $p=[x]$ and contained in $\pi=\mathbb P(H)$ 
    &
    projective line $\mathbb P(x\wedge H)=\alpha_x\cap\beta_H$
    \\
    
    &
    projective lines through a fixed point $p=[x]$
    &
    generator plane $\alpha_x:=\mathbb P(x\wedge W)$
    \\
    
    &
    all projective lines contained in a fixed plane $\pi=\mathbb P(H)$
    &
    generator plane $\beta_H:=\mathbb P(\Lambda^2H)$
    \\[1mm]
    
    \midrule
    
    \multirow{4}{*}{\textbf{incidences}}
    &
    line
    $\mathbb P\langle x,y\rangle$ joining
    distinct points $[x]$, $[y]$
    &
    $\alpha_x\cap\alpha_y=\{[x\wedge y]\}=\{\kappa(\mathbb P\langle x,y\rangle)\}$
    \\
    
    &
    line $\mathbb P(H\cap H')$ intersecting distinct planes $\mathbb P(H)$, $\mathbb P(H')$
    &
    $\beta_H\cap\beta_{H'}=\mathbb P\!\left(\Lambda^2(H\cap H')\right)=\{\kappa(\mathbb P(H\cap H'))\}$
    \\
    
    &
    pencil through $p=[x]$ and contained in
    $\pi=\mathbb P(H)$ 
    &
    $\alpha_x\cap\beta_H=\mathbb P(x\wedge H)$,
    a projective line
    \\
    
    &
    no incident point--plane pair, when $x\notin H$
    &
    $\alpha_x\cap\beta_H=\emptyset$
    \\
    \bottomrule
\end{tabular}
\caption{The Klein correspondence.}
\label{tab: Klein correspondence}
\end{table}

\begin{proof}[Proof of Thm.~\ref{thm: 3 qubits symplectic}]
    Consider the four-dimensional vector space
    \begin{align*}
        W
        :=\left\{(t_0,\ldots,t_4)\in\mathbb{F}_2^5\ \bigg\mid\ \sum_{i=0}^4t_i=0\right\}\; ,
    \end{align*}
    and define $x_i:=\mathbf{1}+e_i$ for all $i\in\mathbb{Z}_5$, where $\mathbf{1}=(1,1,1,1,1)$ and $e_i$ is the vector with only nonzero value at the $i$-th position. Note that $\sum_{i=0}^4x_i=0$, and this is the only nontrivial linear relation between these vectors, in particular, any four of the $x_i$ form a basis of $W$. With respect to the basis $x_0,x_1,x_2,x_3$, define the quadratic form
    \begin{align*}
        \fq(a_0,a_1,a_2,a_3)
        :=\sum_{0\leq r<s\leq3}a_ra_s\; .
    \end{align*}
    Note that $\fq(a)=\binom{m}{2}\pmod 2$ for the binary vector $a=(a_0,a_1,a_2,a_3)$ with Hamming weight $m$ (where $\binom{0}{2}=\binom{1}{2}=0$). Consequently, $\fq^{-1}(0)\backslash\{0\}=\{x_0,x_1,x_2,x_3,x_4\}$, where $x_4=x_0+x_1+x_2+x_3$. The polar form $b_\fq(a,b)=\sum_{0\leq r<s\leq 3}(a_rb_s+a_sb_r)$ of $\fq$ has matrix representation $b_\fq=\begin{pmatrix}
        0 & 1 & 1 & 1 \\ 1 & 0 & 1 & 1 \\ 1 & 1 & 0 & 1 \\ 1 & 1 & 1 & 0
    \end{pmatrix}$ from which it immediately follows that it is alternating and nondegenerate. Moreover, since the above zero set contains no projective line, $\fq$ has Witt index one, and ${[x_i]}_{i\in\mathbb{Z}_5}$ thus corresponds with the elliptic quadric $Q^-(3,2)\subseteq\PG(3,2)$, equivalently the five-point ovoid of $\PG(3,2)$.\footnote{This ovoid corresponds with maximal sets (with cardinality $5$) of mutually anti-commuting Pauli operators.}

    Next, we identify ${[x_i]}_{i\in\mathbb{Z}_5}$ with Lagrangians in $\mathbb F^6_2$ via the Klein correspondence. Define $V:=\bigwedge\nolimits^2W$ and equip $V$ with the symplectic form $\omega_K$ defined in Eq.~(\ref{eq: symplectic form}) above. For every $i\in\mathbb{Z}_5$, let $L_i:=x_i\wedge W$ and define the map $W\ra L_i$ by $y\mapsto x_i\wedge y$. Its kernel is $\langle x_i\rangle$, and thus $\dim(L_i)=3$. Moreover, every element of $L_i$ is decomposable and thus $Q_K|_{L_i}=0$, hence, $L_i$ is totally singular with respect to $Q_K$ and defines a Lagrangian subspace with respect to $\omega_K$.
    
    Next, we note that for distinct $i,j\in\mathbb{Z}_5$, one has $L_i\cap L_j=\langle x_i\wedge x_j\rangle$. We orient the complete graph on $\mathbb{Z}_5$ cyclically, write $i\longrightarrow j$ if and only if $j-i\in{1,2}\pmod 5$, and for every $i\in\mathbb{Z}_5$ define the linear functional $\ell_i\in W^*$ by 
    \begin{align}\label{eq: cyclic orientation}
        \ell_i(x_j)
        \ :=\ \begin{cases}
            0 & j=i\\
            0 & i\longrightarrow j\\
            1 & j\longrightarrow i
        \end{cases}\; .
    \end{align}
    For fixed $i$, exactly two of the values $\ell_i(x_j)$ equal to one. The assignment thus respects the binary relation $\sum_jx_j=0$, and extends uniquely to a linear functional on $W$. We define the linear functionals $\xi_i\in L_i^*$ by
    \begin{align}\label{eq: characters}
        \xi_i(x_i\wedge y)
        :=\ell_i(y)\; ,
    \end{align}
    which are well-defined because the representative $y$ is defined modulo $\langle x_i\rangle$ and $\ell_i(x_i)=0$.
    
    Now, consider the five symplectic stabiliser states $\pi_i:=\pi_{L_i,\xi_i}$, for $i\in\mathbb{Z}_5$. For $i\neq j$, Eq.~(\ref{eq: characters}) gives $\xi_i(x_i\wedge x_j)=\ell_i(x_j)$ and $\xi_j(x_i\wedge x_j)=\ell_j(x_i)$, and since either $i\longrightarrow j$ or $j\longrightarrow i$ (but not both), we have $\ell_i(x_j)\neq\ell_j(x_i)$. Consequently, the functiona=ls $\{\xi_i\}_{i\in\zz_5}$ disagree on overlap $L_i\cap L_j$, hence, the vectors $\{\pi_i\}_{i\in\mathbb{Z}_5}$ are pairwise orthogonal.\\
    
    It remains to prove that this set of pairwise orthogonal symplectic stabiliser states is unextendible. Assume to the contrary that another symplectic stabiliser state $\pi_{M,\eta}$ is orthogonal to every $\pi_i$, in particular, $M\cap L_i\neq\{0\}$ for every $i\in\mathbb{Z}_5$ (by Eq.~(\ref{eq: stabiliser overlap})). Since $M$ is Lagrangian for the polar form of $Q_K$, the restriction $Q_K|_M$ is linear.
    
    Suppose first that $Q_K|_M\neq0$. Its kernel then has vector dimension two, hence, $\mathbb{P}(M)\cap Q^+(5,2)$ is a projective line on the Klein quadric, which represents a pencil of projective lines through some point $p$ and contained in some projective plane $\mathbb{P}(H)$ for a hyperplane $H<W$ (see Tab.~\ref{tab: Klein correspondence}). Since $Q_K|_{L_i}=0$, every nonzero element of $M\cap L_i$ lies on the Klein quadric. The condition $M\cap L_i\neq\{0\}$ therefore implies that the Klein line representing the pencil meets every generator plane $\mathbb{P}(L_i)=\alpha_{x_i}$. Yet, a pencil contained in $\mathbb{P}(H)$ meets $\alpha_{x_i}$ only if $[x_i]\in\mathbb{P}(H)$ (see Tab.~\ref{tab: Klein correspondence}), which therefore implies $x_i\in H$ for every $i$, contradicting $\mathrm{span}\{x_0,\ldots,x_4\}=W$. Consequently, we must have $Q_K|_M=0$.
    
    If $Q_K|_M=0$ then $\mathbb{P}(M)$ is a generator plane of the Klein quadric, which is of one of the two forms (see Tab.~\ref{tab: Klein correspondence})
    \begin{align*}
        \mathbb{P}(x\wedge W)
        \qquad\text{or}\qquad
        \mathbb{P}\left(\bigwedge\nolimits^2H\right)\; ,
    \end{align*}
    where $0\neq x\in W$ and $H<W$ is a hyperplane. A generator of the second type meets $\alpha_{x_i}$ only if $x_i\in H$, so it cannot meet all five $\alpha_{x_i}$. Therefore, we must have $M=L_x:=x\wedge W$ for some $0\neq x\in W$.
    
    Suppose first that $x=x_k$ for some $k\in\mathbb{Z}_5$. For every $i\neq k$, orthogonality of $\pi_{M,\eta}$ to $\pi_i$ then requires
    \begin{align*}
        \eta(x_k\wedge x_i)
        =1+\xi_i(x_k\wedge x_i)
        =1+\ell_i(x_k)\; .
    \end{align*}
    By Eq.~(\ref{eq: cyclic orientation}), $1+\ell_i(x_k)=\ell_k(x_i)$, hence, $\eta(x_k\wedge x_i)=\xi_k(x_k\wedge x_i)$ for every $i\neq k$. The vectors $x_k\wedge x_i$, with $i\neq k$, span $L_k$, so $\eta=\xi_k$, which implies $\pi_{M,\eta}=\pi_k$. Yet, since $\pi_{M,\eta}$ is clearly not orthogonal to itself, this is a contradiction.
    
    This leaves $x\notin{x_0,\ldots,x_4}$. Orthogonality to the $\pi_i$ requires $\eta(x\wedge x_i)=1+\ell_i(x)$ for every $i\in\mathbb{Z}_5$. On the other hand,
    \begin{align*}
        \sum_{i=0}^4x\wedge x_i
        =x\wedge\sum_{i=0}^4x_i\
        =0\; ,
    \end{align*}
    and by linearity of $\eta$, the preceding orthogonality conditions
    would imply
    \begin{align}\label{eq: =0}
        \sum_{i=0}^4\bigl(1+\ell_i(x)\bigr)=0\; .
    \end{align}
    However, since for every vector $x_j$, exactly two indices $i$ satisfy $\ell_i(x_j)=1$ (by Eq.~(\ref{eq: cyclic orientation})), we have $\sum_{i=0}^4\ell_i=0$ and thus
    \begin{align*}
        \sum_{i=0}^4\bigl(1+\ell_i(x)\bigr)
        =\sum_{i=0}^4 1+\left(\sum_{i=0}^4\ell_i\right)(x)\
        =1\; ,
    \end{align*}
    contradicting Eq.~(\ref{eq: =0}). Consequently, there exists no further symplectic stabiliser state orthogonal to all five $\pi_i$, and $\{\pi_{L_i,\xi_i}\}_{i\in\mathbb{Z}_5}$ is thus a maximal set of pairwise orthogonal three-qubit symplectic stabiliser states.
\end{proof}

\section{Minimality and lifting property}\label{app: minimality and lifting}

We prove two general extendibility results: in Sec.~\ref{app: 2-extendibility}, we show that sets of pairwise orthogonal two-qudit stabiliser states of prime dimension and, in Sec.~\ref{app: 3-extendibility}, that pairwise orthogonal sets of three-qubit stabiliser states can always be extended to a stabiliser basis. Finally, we show that USBs lift to USBs involving more parties in Sec.~\ref{app: lifting}.

\subsection{Sets of pairwise orthogonal two-qudit
stabiliser states extend to a stabiliser basis}\label{app: 2-extendibility}

We begin with the elementary polar-space fact (underlying both the
signed and unsigned two-qubit statements).

\begin{lemma}\label{lm: rank-two-pencil}
    Let $V=\mathbb F^4$ be a symplectic vector space over a field $\mathbb F$. If a family of Lagrangians $\{L_i\}_{i\in I}$ satisfies
    \begin{align*}
        L_i\cap L_j\neq\{0\}
        \qquad\forall i,j\in I\; ,
    \end{align*}
    then there is a one-dimensional subspace $D\leq V$ such that $D\subseteq L_i$ for all $i\in I$.
\end{lemma}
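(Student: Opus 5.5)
The plan is to argue projectively in the symplectic polar space $W(3,\mathbb F)$. Lagrangians of $V=\mathbb F^4$ are two-dimensional, i.e.\ totally isotropic projective lines of $\PG(V)$. The hypothesis says that any two of the lines $L_i$ meet, and we must find a common point. The classical picture is that pairwise intersecting lines either all pass through one point or all lie in one plane. A plane $P\le V$ (three-dimensional) contains a unique point of its radical, namely $P^\perp$. Every Lagrangian inside $P$ contains $P^\perp$, since $L\subseteq P$ gives $P^\perp\subseteq L^\perp=L$. So in both alternatives a common one-dimensional subspace exists.

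The steps, in order, are as follows.

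\textbf{Trivial cases.} If all $L_i$ coincide, or $|I|\le 1$, any line $D\subseteq L_i$ works.

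\textbf{Fixing a pair.} Otherwise choose $i\neq j$ with $L_i\neq L_j$. Set $D:=L_i\cap L_j$, which is one-dimensional because the two Lagrangians are distinct, two-dimensional, and meet nontrivially. Set $P:=L_i+L_j$, which is three-dimensional. Since both are isotropic, $P^\perp=L_i^\perp\cap L_j^\perp=L_i\cap L_j=D$.

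\textbf{A third Lagrangian.} Take any other $L_k$ and suppose $D\not\subseteq L_k$. Then $L_k\cap L_i$ and $L_k\cap L_j$ are nonzero and distinct lines spanning $L_k$. Hence $L_k\subseteq L_i+L_j=P$, and so $L_k=L_k^\perp\supseteq P^\perp=D$, a contradiction. Concretely, if $L_k\cap L_i$ and $L_k\cap L_j$ were both equal to a line $E\neq D$, then $E\subseteq L_i\cap L_j=D$, which is also a contradiction. Therefore every $L_k$ contains $D$.

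This argument uses only isotropy, the identity $U^{\perp\perp}=U$, and $\dim L^\perp=4-\dim L$. It is therefore valid over any field, with no characteristic restriction, which is consistent with the statement being about a general field $\mathbb F$. The main obstacle is the step showing $L_k\subseteq P$ forces $D\subseteq L_k$. This needs care to separate the case where $L_k$ meets $L_i$ and $L_j$ in the same line from the case where it meets them in different lines. The first case forces that shared line to be $D$. In the second case the two intersection lines span $L_k$, and the radical argument applies.
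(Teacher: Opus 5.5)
Your proof is correct and follows essentially the same route as the paper. You fix $D=L_i\cap L_j$, note that $L_i+L_j=D^\perp$, and show that a third Lagrangian not containing $D$ would have to lie in $D^\perp$, which is impossible. The differences are cosmetic. You close with $D=P^\perp\subseteq L_k^\perp=L_k$, where the paper observes that $L_k$ would map to a two-dimensional isotropic subspace of the symplectic plane $D^\perp/D$. You also explicitly justify that $L_k\cap L_i$ and $L_k\cap L_j$ are distinct lines spanning $L_k$, a step the paper leaves implicit.
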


\begin{proof}
    If all $L_i$ coincide, the assertion is immediate. Otherwise, let $D:=L_1\cap L_2$ for distinct $L_1,L_2$. Then $\dim D=1$ and $L_1+L_2=D^\perp$. Let $L$ be another member of the set, assume that $D\nsubseteq L$, and pick nonzero $u\in L\cap L_1$ and $v\in L\cap L_2$. Then $L=\langle u,v\rangle\subseteq D^\perp$, and since $D\nsubseteq L$, the image of $L$ in the two-dimensional nondegenerate symplectic space $D^\perp/D$ is two-dimensional and isotropic. Clearly, this is impossible, and we must conclude $D\subseteq L$.
\end{proof}

\begin{theorem}\label{thm: 2 qudit extendibility}
    Every set of pairwise orthogonal two-qudit stabiliser states (for $q$ prime) extends to a stabiliser basis.
\end{theorem}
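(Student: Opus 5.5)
Let $\cU=\{\ket{L_i,\xi_i}\}_{i=1}^k$, $k<q^2$, be pairwise orthogonal two-qudit stabiliser states. We will construct, by induction on $k$, a stabiliser state orthogonal to all members of $\cU$; iterating this until $k=q^2$ yields a stabiliser basis.

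By Lm.~\ref{lm: stabiliser overlap}, orthogonality forces $L_i\cap L_j\neq\{0\}$ for all $i\neq j$; for $i=j$ this is trivial. Hence Lm.~\ref{lm: rank-two-pencil} (with $\mathbb F=\mathbb F_q$, $V=\mathbb F_q^4$) applies and gives a line $D=\langle d\rangle$ with $D\subseteq L_i$ for every $i$. The Weyl operator $W_d$ then commutes with every stabiliser group involved, and each $\ket{L_i,\xi_i}$ is an eigenvector of $W_d$ with eigenvalue label $c_i:=\xi_i(d)\in\mathbb F_q$. Decompose $\cU=\bigsqcup_{c}\cU_c$ by this label. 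Since $D^\perp/D$ is a two-dimensional symplectic space (a ``single qudit''), every Lagrangian containing $D$ is $\tilde L\supset D$ with $\tilde L/D\in\Lag(D^\perp/D)$. By a Clifford unitary mapping $W_d$ to $Z_1$ (symplectic transitivity on nonzero vectors), we reduce to the case $d=z_1$. Then every state in $\cU_c$ has the form $\ket{z_c}\otimes\ket{\phi}$ with $\ket\phi$ a single-qudit stabiliser state.

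Two states in the same block $\cU_c$ are orthogonal iff their second-qudit factors are orthogonal. Single-qudit pairwise orthogonal stabiliser states must share a Lagrangian (in $\mathbb F_q^2$ two distinct lines meet trivially), so they lie in one eigenbasis, as already noted in the proof of Thm.~\ref{thm: unextendibility}. Since $k<q^2$, some block $\cU_c$ has fewer than $q$ elements (possibly none). We then pick a second-qudit eigenvector $\ket{\phi'}$ of that common basis which is not used (any basis if $\cU_c=\emptyset$). The state $\ket{z_c}\otimes\ket{\phi'}$ is orthogonal to $\cU_c$ via the second factor and to every $\cU_{c'}$, $c'\neq c$, via the first factor. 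After undoing the Clifford, this is the desired extension.

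For $q=2$ the labels live in the twisted fibres $X_{\beta_W}$. Here we would argue using the Clifford reduction, which sends $W_d$ to $\pm Z_1$ so that the eigenvalues split as above. Alternatively, we can note that by Lm.~\ref{lm: stabiliser overlap} orthogonality only depends on restrictions to intersections, and these restrictions are compared through genuine characters.

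\textbf{Main obstacle.} The step requiring most care is the degenerate case of Lm.~\ref{lm: rank-two-pencil} in which all $L_i$ coincide. There $D$ is not unique, but any line in the common $L$ works; equivalently, all states lie in one product of eigenbases of a common Lagrangian, and we complete within that basis directly. The other point needing care is verifying that the Clifford reduction preserves the block structure in the qubit twisted setting.
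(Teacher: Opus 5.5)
Your proof is correct and takes essentially the paper's route. Lm.~\ref{lm: rank-two-pencil} gives a common line $D$, a Clifford sends $W_d$ to $Z_1$, and the states split by $Z_1$-eigenvalue into product states whose residual single-qudit factors lie in one eigenbasis per block; the only difference is that you add one missing state at a time, via a block with fewer than $q$ members, whereas the paper completes every block to a full single-qudit basis $\cB_t$ at once.
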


\begin{proof}
    Let $L_i\leq\mathbb F_q^4$ be the Lagrangians of the stabiliser states in the set. The overlap criterion in Lm.~\ref{lm: stabiliser overlap} implies that $L_i\cap L_j\neq\{0\}$ for all $i\neq j$. By Lm.~\ref{lm: rank-two-pencil}, the $L_i$ thus contain a common one-dimensional subspace $D$.

    We can thus choose a Clifford unitary which sends the common Pauli operator corresponding to the subspace $D$ to the Pauli operator $Z_1$. After this transformation, every stabiliser state in the set has the form
    \begin{align}\label{eq: prodduct form}
        \ket{\psi_i}
        =\ket{z_{t_i}}_1\otimes\ket{\phi_i}_2\; ,
    \end{align}
    for $t_i\in\mathbb F_q$ and where $\ket{\phi_i}$ is a one-qudit stabiliser state. Next, partition the family according to the value of $t_i$. States in different parts are automatically orthogonal on the first qudit, while for fixed $t_i=t$, orthogonality is equivalent to orthogonality of the residual one-qudit states (see Eq.~(\ref{eq: stabiliser overlap})), that is,
    \begin{align*}
        \bracket{\psi_i}{\psi_j}=0
        \quad\Longleftrightarrow\quad
        \bracket{\phi_i}{\phi_j}=0.
    \end{align*}
    For a qubit, any two orthogonal pure stabiliser states are the opposite eigenstates of one Pauli operator, hence, a pairwise orthogonal set lies in a single stabiliser basis. For a qudit of odd prime dimension, stabiliser states are represented by affine lines in $\mathbb F_q^2$, and two such lines are disjoint if and only if they are parallel. Consequently, every set of pairwise orthogonal one-qudit stabiliser states belongs to a single one-qudit stabiliser basis. For each $t\in\mathbb F_q$, we can therefore choose a one-qubit, respectively one-qudit stabiliser basis $\cB_t$ containing all residual states occurring in the $t$-th part (if no state has label $t$, we may choose $\cB_t$ arbitrarily). Then
    \begin{align*}
        \cB
        \ :=\ \bigcup_{t\in\mathbb F_q}\left\{\ket{z_t}\otimes\ket{\beta}\mid\ket{\beta}\in\mathcal B_t\right\}\; ,
    \end{align*}
    is an orthonormal basis of $q^2$ two-qudit stabiliser states, that is, a stabiliser basis containing the states in Eq.~(\ref{eq: prodduct form}). Finally, undoing the Clifford unitary proves the result.
\end{proof}

\subsection{Sets of pairwise orthogonal three-qubit stabiliser states extend to a stabiliser basis}\label{app: 3-extendibility}

The fact that every set of pairwise orthogonal three-qubit states extends to a stabiliser basis can be readily checked by an exhaustive computer search. The following analytical argument employs a beautiful fact of finite symplectic geometry, discussed in more detail in Ref.~\cite{LevayPlanatSangia2013,VanGeemenMarrani2019}, and exposes the distinction between the qubit and qudit case.\\

\textbf{Spin embedding.} Let $V:=\mathbb F_2^6$ be the three-qubit symplectic space, and let $\Lag(V)$ denote its $135=\prod_{i=1}^3(2^i+1)$ Lagrangian subspaces. We will use the `spin embedding' from Ref.~\cite{LevayPlanatSangia2013,VanGeemenMarrani2019},
\begin{align}\label{eq: spin embedding}
    \sigma:\Lag(V)
    \longrightarrow Q^+(7,2)\subseteq\mathrm{PG}(7,2)\; ,
\end{align}
which is a bijection from the Lagrangians of $V$ to the points of the hyperbolic quadric $Q^+(7,2)$, satisfying
\begin{align}\label{eq: spin orthogonality}
    L\cap M\neq\{0\}
    \quad\Longleftrightarrow\quad
    \sigma(L)\perp\sigma(M)\; .
\end{align}

For completeness, we add the following lemma proving these relations.

\begin{lemma}\label{lm: spin embedding}
    Let $\mathbb F^6_2\cong V=X\oplus Z$ be a six-dimensional symplectic vector space, where $X$ and $Z$ are complementary Lagrangian subspaces, and write $\omega\bigl((x,z),(y,w)\bigr)=x^Tw+z^Ty$.
    
    \begin{enumerate}
        \item[(i)] A $Z$-transverse Lagrangian $L$, that is, $L\cap Z=\{0\}$, is given by $L_A=\{(x,Ax)\mid x\in X\}$, where $A$ is a symmetric $3\times3$ matrix. Indeed, $\omega\bigl((x,Ax),(y,Ay)\bigr)=x^T(A+A^T)y$. In particular, $L_A$ is isotropic if and only if $A=A^T$.
        
        \item[(ii)] For $I\subseteq\{1,2,3\}$ let $p_I(A)$ be the corresponding principal minor of $A$, with $p_\emptyset(A):=1$, and define
        \begin{align*}
            \sigma(L_A)
            =[p_\emptyset:p_1:p_2:p_3:p_{23}:p_{13}:p_{12}:p_{123}]\in\mathrm{PG}(7,2)\; .
        \end{align*}
        These coordinates satisfy $Q_{\mathrm{spin}}(\sigma(L_A))=0$, where $Q_{\mathrm{spin}}$ is the hyperbolic quadric $Q^+(7,2)$, given by
        \begin{align*}
            Q_{\mathrm{spin}}(p)
            =p_\emptyset p_{123}
            +p_1p_{23}+p_2p_{13}+p_3p_{12}\; .
        \end{align*}
        Its polar form is $B_{\mathrm{spin}}(p,p')=p_\emptyset p'_{123}+p_{123}p'_\emptyset+p_1p'_{23}+p_{23}p'_1+p_2p'_{13}+p_{13}p'_2+p_3p'_{12}+p_{12}p'_3$, and for two $Z$-transverse Lagrangians $L_A,L_B$, it holds  $B_{\mathrm{spin}}\bigl(\sigma(L_A),\sigma(L_B)\bigr)=\det(A+B)$, consequently,
        \begin{align}\label{eq: transverse orthogonality}
            L_A\cap L_B\neq\{0\}
            \quad\Longleftrightarrow\quad
            \sigma(L_A)\perp\sigma(L_B)\; .
        \end{align}
        
        \item[(iii)] The map $\sigma$ extends from the symmetric-matrix charts in (i) and (ii) to an $\mathrm{Sp}(6,2)$-equivariant bijection
        \begin{align*}
            \sigma:\Lag(V)\overset{\sim}{\longrightarrow}Q^+(7,2)\; .
        \end{align*}
        It is called the \emph{spin embedding} of the binary Lagrangian Grassmannian \cite{LevayPlanatSangia2013,VanGeemenMarrani2019}, and satisfies
        \begin{align*}
            L\cap M\neq\{0\}
            \quad\Longleftrightarrow\quad
            \sigma(L)\perp\sigma(M)
            \quad\quad\forall L,M\in\Lag(V)\; .
        \end{align*}
    \end{enumerate}
\end{lemma}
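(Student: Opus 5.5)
The plan is to prove the three parts in order. Part (i) and the orthogonality relation in (ii) are direct computations. Part (iii) reduces to (ii) once $\sigma$ is known to be $\Sp(6,2)$-equivariant.

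For (i): if $L\cap Z=\{0\}$, the projection $L\to X$ along $Z$ is injective. Since $\dim L=3=\dim X$, it is an isomorphism, so $L$ is the graph $\{(x,Ax)\}$ of some linear map $A:X\to Z$. Using the stated form of $\omega$, $\omega\bigl((x,Ax),(y,Ay)\bigr)=x^T(A+A^T)y$. This vanishes identically exactly when $A=A^T$. Hence $L_A$ is Lagrangian if and only if $A$ is symmetric, and there are $2^6=64$ such charts' worth of Lagrangians.

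For (ii), the first step is to verify $Q_{\mathrm{spin}}(\sigma(L_A))=0$ by expanding over $\mathbb F_2$. For symmetric $A$, the cross term $2a_{12}a_{13}a_{23}$ of the Leibniz expansion vanishes, so
\begin{align*}
    p_{123}=a_{11}a_{22}a_{33}+a_{11}a_{23}^2+a_{22}a_{13}^2+a_{33}a_{12}^2.
\end{align*}
On the other hand, $\sum_i p_ip_{jk}=3a_{11}a_{22}a_{33}+a_{11}a_{23}^2+a_{22}a_{13}^2+a_{33}a_{12}^2$. Hence $Q_{\mathrm{spin}}=p_{123}+p_{123}=0$.

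The second step is to show $B_{\mathrm{spin}}(\sigma(L_A),\sigma(L_B))=\det(A+B)$. Expand $\det(A+B)$ by the Leibniz formula. For each permutation, every contribution mixing an off-diagonal entry of $A$ with the transposed entry of $B$ appears in a symmetric pair, $a_{ij}b_{ji}+b_{ij}a_{ji}=2a_{ij}b_{ij}=0$. What survives is $\det A+\det B+\sum_i\bigl(a_{ii}p_{jk}(B)+b_{ii}p_{jk}(A)\bigr)$. This is exactly the polar form, with $p_\emptyset=1$ on both sides.

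The third step gives the transverse orthogonality relation. $L_A\cap L_B\neq\{0\}$ holds if and only if there is $x\neq0$ with $Ax=Bx$. That is equivalent to $\ker(A+B)\neq 0$, and hence to $\det(A+B)=0$.

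For (iii), the plan has three steps.
\begin{enumerate}
    \item Cover $\Lag(V)$ by the charts $\{L: L\cap gZ=\{0\}\}$, $g\in\Sp(6,2)$, and extend $\sigma$ equivariantly. For this, generate $\Sp(6,2)$ by three kinds of elements:
    \begin{itemize}
        \item the shears $(x,z)\mapsto(x,z+Sx)$ with $S$ symmetric, which act as $A\mapsto A+S$;
        \item the maps $\mathrm{diag}(G,G^{-T})$, which act as $A\mapsto G^{-T}AG^{-1}$;
        \item the partial swaps of $X$ and $Z$ on one coordinate, which act as principal pivots of $A$.
    \end{itemize}
    For each generator, exhibit an explicit linear map $\rho(g)$ on $\mathbb F_2^8$ preserving $Q_{\mathrm{spin}}$ and satisfying $\sigma(gL)=\rho(g)\sigma(L)$ on charts. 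For the shears, the minors of $A+S$ are expressed linearly in the minors of $A$, using symmetry and characteristic two exactly as in (ii). For the pivots, the principal minors of the pivoted matrix are, up to the common scalar $p_{\{i\}}(A)$, a coordinate permutation of the original minors that swaps $I\leftrightarrow I\triangle\{i\}$ (standard principal pivot identities). Projectively this is a permutation of the coordinates.
    \item Well-definedness on overlaps of charts then follows from these transformation rules.
    \item Conclude bijectivity by counting: $|\Lag(V)|=135=(2^4-1)(2^3+1)=|Q^+(7,2)|$, so injectivity suffices. Injectivity follows from the orthogonality property, or directly on charts, since $A$ is read off from $p_i$ and $p_{ij}$ when $p_\emptyset=1$.
\end{enumerate}

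Finally, given any $L,M\in\Lag(V)$, choose a Lagrangian $Z'$ transverse to both. Such a $Z'$ exists because the set of Lagrangians meeting a fixed Lagrangian is a proper subset, and over $\mathbb F_2$ this can be checked by counting $64+64<135$ fails only marginally, so instead pick one explicitly via the orthogonality count in $Q^+(7,2)$. Take $g$ with $gZ=Z'$ and apply equivariance. This reduces the claim to Eq.~(\ref{eq: transverse orthogonality}).

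I expect the main obstacle to be the equivariance and well-definedness in (iii), specifically the bookkeeping for the pivot generators. If that becomes unwieldy, the fallback is to define $\sigma$ intrinsically as the pure-spinor line of $L$ in the half-spin module $\bigwedge^{\mathrm{even}}X$. The equivariance is then built in, and only the chart formula needs to be checked.
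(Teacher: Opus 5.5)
Your (i) and (ii) follow the paper's argument. That covers the graph chart, the $\mathbb F_2$ expansion of the principal minors giving $Q_{\mathrm{spin}}(\sigma(L_A))=0$ and $B_{\mathrm{spin}}(\sigma(L_A),\sigma(L_B))=\det(A+B)$, and $L_A\cap L_B=\{(x,Ax)\mid (A+B)x=0\}$. The last step of (iii) is also the paper's: move $L,M$ into a common $Z$-chart by some $g\in\Sp(V)$ and use that $\rho(g)$ preserves $B_{\mathrm{spin}}$.

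You differ in that the paper never constructs the extension; it takes the $\Sp(V)$-equivariant bijection $\sigma:\Lag(V)\to Q^+(7,2)$ from van Geemen--Marrani. Your generator-based construction would make this self-contained, but its central step is only asserted. Write $h_i$ for the partial swap at coordinate $i$ and $\pi_i$ for the coordinate permutation $I\mapsto I\triangle\{i\}$. The pivot rule $\sigma(h_iL_A)=\pi_i\sigma(L_A)$ holds only on the $32$ chart points with $a_{ii}=1$. So defining $\sigma(gL):=\rho(g)\sigma(L)$ off the chart is well defined only if the chart-wise definitions agree on all overlaps, i.e.\ if your $\rho(g)$ are compatible with the relations of $\Sp(6,2)$. ``Follows from these transformation rules'' does not show this.

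One workable repair: glue along the eight coordinate charts $\{L\mid L\cap Z_J=\{0\}\}$, where $Z_J$ is obtained from $Z$ by swapping the coordinates in $J$. Use the multi-index pivot identity $p_I(A')=p_{I\triangle J}(A)/p_J(A)$ for consistency, and only then check equivariance generator by generator. Otherwise your pure-spinor fallback, i.e.\ the citation, is what carries (iii), exactly as in the paper.

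The step that fails as written is the existence of a Lagrangian $Z'$ transverse to both $L$ and $M$. As you observe, $64+64<135$, so counting cannot force the two sets of complements to meet. Your fallback ``via the orthogonality count in $Q^+(7,2)$'' can only be turned back into a transversality statement through the global relation $L\cap M\neq\{0\}\Leftrightarrow\sigma(L)\perp\sigma(M)$, which is what you are proving, so it is circular.

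The paper supplies the complement explicitly. Pick a symplectic basis with $L=\langle e_1,e_2,e_3\rangle$ and $M=\langle e_1,\dots,e_r,f_{r+1},\dots,f_3\rangle$, where $r=\dim(L\cap M)$. Then $N=\langle f_1,\dots,f_r,e_{r+1}+f_{r+1},\dots,e_3+f_3\rangle$ is Lagrangian and meets neither $L$ nor $M$. Any $g$ with $gN=Z$ places $gL,gM$ in a common symmetric-matrix chart. With this substitution, and the extension taken from the literature or your pure-spinor construction, your proof goes through.
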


\begin{proof}
    \textbf{(i).} Fix complementary Lagrangians $X,Z\in\Lag(V)$ such that $V=X\oplus Z$ and identify $Z$ with $X^*$ by means of the symplectic form $\omega$. If $L$ is transverse to $Z$, then the projection map $V=X\oplus Z\ra X$ identifies $L$ with the graph $L_A=\{(x,Ax)\mid x\in X\}$ of a unique linear map $A:X\to Z$. For $x,y\in X$, we have
    \begin{align*}
        \omega((x,Ax),(y,Ay))
        =x^T(A+A^T)y\; .
    \end{align*}
    Hence, $L_A$ is Lagrangian if and only if
    $A=A^T$. Here, $A$ is the standard symmetric-matrix chart of $\Lag(V)$, viewed as a projective variety (see
    \cite[\S4.2]{VanGeemenMarrani2019}).
    
    \textbf{(ii).} Let $A=\begin{pmatrix}
    a&u&v\\ u&b&w\\ v&w&c
    \end{pmatrix}$, its principal minors are $p_{123}=\det A=abc+aw^2+bv^2+cu^2$, $p_\emptyset=1$ (by definition) and 
    \begin{align*}
        p_1=a\; ,\qquad
        p_2=b\; ,\qquad
        p_3=c\; ,\qquad
        p_{12}=ab+u^2\; ,\qquad
        p_{13}=ac+v^2\; ,\qquad
        p_{23}=bc+w^2\; .
    \end{align*}
    One immediately checks that
    \begin{align*}
        p_\emptyset p_{123}+p_1p_{23}+p_2p_{13}+p_3p_{12}
        =(abc+aw^2+bv^2+cu^2)+a(bc+w^2)+b(ac+v^2)+c(ab+u^2)=0\; ,
    \end{align*}
    that is, the coordinate vector $p$ of the principal-minors of $A$ is a singular point with respect to the quadratic form
    \begin{align*}
        Q_{\mathrm{spin}}(p)
        :=p_\emptyset p_{123}+p_1p_{23}+p_2p_{13}+p_3p_{12}\; .
    \end{align*}
    Clearly, $Q_{\mathrm{spin}}$ is hyperbolic and has polar form
    \begin{align*}
        B_{\mathrm{spin}}(p,p')
        =p_\emptyset p'_{123}+p_{123}p'_\emptyset
        +p_1p'_{23}+p_{23}p'_1
        +p_2p'_{13}+p_{13}p'_2
        +p_3p'_{12}+p_{12}p'_3\; .
    \end{align*}
    Substituting the coordinate vectors of principal minors of two symmetric matrices $A$ and $B$, one also verifies
    \begin{align*}
        B_{\mathrm{spin}}\bigl(\sigma(L_A),\sigma(L_B)\bigr)
        =\det(A+B)\; .
    \end{align*}
    Indeed, let $A=\begin{pmatrix}a&u&v\\ u&b&w\\ v&w&c\end{pmatrix}$ and $B=\begin{pmatrix}a'&u'&v'\\ u'&b'&w'\\ v'&w'&c'\end{pmatrix}$. Over $\mathbb F_2$, $\det(A)=abc+aw^2+bv^2+cu^2$, and similarly for $B$. Substituting the principal-minor coordinates of $A$ and $B$ into the polar form $B_{\mathrm{spin}}$ gives
    \begin{align*}
        &\hspace{-.5cm}B_{\mathrm{spin}}\bigl(\sigma(L_A),\sigma(L_B)\bigr)\\
        &=\det(A)+\det(B)
        +a(b'c'+w'^2)+a'(bc+w^2)
        +b(a'c'+v'^2)+b'(ac+v^2)
        +c(a'b'+u'^2)+c'(ab+u^2)\\
        &=(a+a')(b+b')(c+c')
        +(a+a')(w^2+w'^2)
        +(b+b')(v^2+v'^2)
        +(c+c')(u^2+u'^2)\\
        &=(a+a')(b+b')(c+c')
        +(a+a')(w+w')^2
        +(b+b')(v+v')^2
        +(c+c')(u+u')^2\\
        &=\det(A+B)\; ,
    \end{align*}
    where we used $(r+r')^2=r^2+r'^2$ over $\mathbb F_2$. On the other hand, $L_A\cap L_B=\{(x,Ax)\mid(A+B)x=0\}$, hence,
    \begin{align}\label{eq: affine overlaps}
        L_A\cap L_B\neq\{0\}
        \quad\Longleftrightarrow\quad
        \det(A+B)=0
        \quad\Longleftrightarrow\quad
        \sigma(L_A)\perp\sigma(L_B)\; .
    \end{align}

    \textbf{(iii).} By \cite{VanGeemenMarrani2019}, 
    the principal-minor
    map in part~\textup{(ii)} extends to an
    $\Sp(V)$-equivariant bijection
    \begin{align*}
        \sigma:\Lag(V)\overset{\sim}{\longrightarrow}Q^+(7,2)\; ,
    \end{align*}
    where equivariance is with respect to the eight-dimensional spin representation
    \begin{align}\label{eq: spin equivariance}
        \rho:\Sp(V)\longrightarrow O(Q_{\mathrm{spin}})\; ,\qquad
        \sigma(gL)
        =\rho(g)\sigma(L)\; .
    \end{align}
    It remains to prove the correspondence in Eq.~(\ref{eq: affine overlaps}) globally. Let $L,M\in\Lag(V)$ and put $r=\dim(L\cap M)$. Choose a symplectic basis such that $L=\langle e_1,\ldots,e_3\rangle$ and $M=\langle e_1,\ldots,e_r,f_{r+1},\ldots,f_3\rangle$. Then $N=\langle f_1,\ldots,f_r,e_{r+1}+f_{r+1},\ldots,e_3+f_3\rangle$ is a Lagrangian complement of both $L$ and $M$. Choose $g\in\Sp(V)$ such that $gN=Z$. The Lagrangians $gL$ and $gM$ are then both $Z$-transverse and hence belong to a common symmetric-matrix chart.
    
    Since $g\in\Sp(V)$ preserves intersections, while $\rho(g)$ preserves $B_{\mathrm{spin}}$, equivariance in Eq.~\eqref{eq: spin equivariance} and Eq.~\eqref{eq: transverse orthogonality} give
    \begin{align*}
        L\cap M\neq\{0\}
        &\Longleftrightarrow
        gL\cap gM\neq\{0\}\\
        &\Longleftrightarrow
        B_{\mathrm{spin}}
           \bigl(\sigma(gL),\sigma(gM)\bigr)=0\\
        &\Longleftrightarrow
        B_{\mathrm{spin}}
           \bigl(\sigma(L),\sigma(M)\bigr)=0
        \Longleftrightarrow
        \sigma(L)\perp\sigma(M)\; .\qedhere
    \end{align*}
\end{proof}

Our strategy below will now be as follows: using the spin embedding in Eq.~(\ref{eq: spin embedding}) (see Lm.~\ref{lm: spin embedding}, and Ref.~\cite{LevayPlanatSangia2013,VanGeemenMarrani2019} for more details), in Lm.~\ref{lm: spin trichotomy}, we first characterise the condition $L_i\cap L_j\neq\emptyset$ for the elements $|L_i,\xi_i\rangle$ of a set of pairwise orthogonal $3$-qubit stabiliser states (see Eq.~(\ref{eq: stabiliser overlap})) via the corresponding families of points in the embedded space; second, we show that these sets can be extended to a complete stabiliser basis by adjoining characters in Thm.~\ref{thm: 3-qubit extendibility}.

To begin with, recall that a \emph{generator} of $Q^+(7,2)$ is a maximal totally singular projective subspace, that is, it is the projectivisation of a Lagrangian subspace of the ambient symplectic space $\mathbb F_2^8$. Accordingly, a maximal totally singular projective subspace is isomorphic to $\mathrm{PG}(3,2)$ and contains $15$ points.

\begin{lemma}\label{lm: spin trichotomy}
    Every pairwise intersecting set of Lagrangians in $V=\mathbb F^6_2$ is contained in one of three $15$-element families:
    \begin{itemize}
        \item[(I)] $\cP_D=\{L\mid D\subseteq L\}$, \quad for $0\neq D\leq V$ a $1$-dimensional subspace;
        \item[(II)] $\cN_M=\{M\}\cup\{L\mid\dim(L\cap M)=2\}$, \quad for $M\in\Lag(V)$;
        \item[(III)] $\cR_\fq^\pm$, \quad where $\fq$ is a quadratic refinement of $\omega$ of Witt index $3$, and $\cR^\pm_\fq$ denote the two classes of maximal $\fq$-singular Lagrangians (that is, the two classes of generators) of the hyperbolic quadric $Q^+_\fq(5,2)$ (see Tab.~\ref{tab: Klein correspondence}).
    \end{itemize}
\end{lemma}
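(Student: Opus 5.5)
The plan is to transport the problem to the hyperbolic quadric $Q^+(7,2)$ via the spin embedding $\sigma$ of Lm.~\ref{lm: spin embedding}. By Eq.~(\ref{eq: spin orthogonality}), a pairwise intersecting family $\{L_i\}$ of Lagrangians is mapped to a set of points of $Q^+(7,2)$ that are pairwise perpendicular with respect to $B_{\mathrm{spin}}$. Since all these points are singular and pairwise perpendicular, $Q_{\mathrm{spin}}$ vanishes on their linear span. The span is therefore a totally singular subspace of $\mathbb F_2^8$. Because $Q^+(7,2)$ has Witt index $4$, this span is contained in some generator $G$, a projective $3$-space with $15$ points. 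It therefore suffices to show that for every generator $G$ of $Q^+(7,2)$, the preimage $\sigma^{-1}(G)$ is one of the families (I), (II) or (III).

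\textbf{Step 1: each family is a generator.} I will show that each listed family consists of $15$ pairwise intersecting Lagrangians. Its $\sigma$-image is then $15$ pairwise perpendicular singular points, spanning a totally singular subspace that contains at least $15$ points. Such a subspace must be exactly a generator.
\begin{itemize}
\item For (I), pairwise intersection is immediate since every member contains $D$. The count follows because $L\mapsto L/D$ is a bijection onto $\Lag(D^\perp/D)$, and $D^\perp/D$ is four-dimensional, so there are $(2+1)(4+1)=15$ members.
\item For (II), the members meet $M$ in planes of the three-dimensional space $M$, and any two planes of $M$ intersect nontrivially. For the count, each of the $7$ planes $P<M$ lies in exactly $3$ Lagrangians (those of $P^\perp/P$, which is two-dimensional). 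Removing $M$ itself leaves $2$ per plane, giving $1+7\cdot 2=15$.
\item For (III), I will use the Klein correspondence (Tab.~\ref{tab: Klein correspondence}). Identify $(V,\omega)$ with $(\bigwedge^2W,\omega_K)$ for a hyperbolic refinement $\fq\simeq Q_K$. One class consists of the $15$ planes $\alpha_x$, and any two of them meet in the point $[x\wedge y]$. The other class consists of the $15$ planes $\beta_H$, and any two meet in $\kappa(\mathbb P(H\cap H'))$. So both classes are pairwise intersecting.
\end{itemize}

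\textbf{Step 2: count and distinctness.} I will show that these families exhaust all generators by counting. The number of generators of $Q^+(7,2)$ is $2\prod_{i=1}^{3}(2^i+1)=270$. The families contribute as follows:
\begin{itemize}
\item type (I) gives $63$ families, one for each nonzero $D\in\mathbb F_2^6$;
\item type (II) gives $135$ families, one for each $M\in\Lag(V)$;
\item type (III) gives $2\cdot 36=72$ families. Here $36$ is the number of Witt-index-$3$ refinements among the $2^6=64$ quadratic forms polarising to $\omega$, the remaining $28$ being elliptic; this is the standard count $2^{n-1}(2^n+1)$ for $n=3$.
\end{itemize}
Since $63+135+72=270$, it suffices to show that the $270$ families are pairwise distinct as sets. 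Distinctness follows from the following invariants.
\begin{itemize}
\item The common intersection $\bigcap L$ over the family equals $D$ in case (I), and is $\{0\}$ in cases (II) and (III). For (II), the intersection lies in $M$ and inside two distinct planes of $M$, and meets every plane; for (III), $\bigcap_x x\wedge W=0$, and similarly for the $\beta_H$.
\item Within (II) and (III), the pairwise intersections differ. Distinct members of $\cR^\pm_\fq$ meet in exactly one dimension, whereas $\cN_M$ contains pairs meeting in dimension two.
\item Within each type the defining datum is recovered from the family. $D$ is the common intersection. $M$ is the unique member meeting all others in dimension two. The refinement $\fq$ is recovered because the union of the members of $\cR^\pm_\fq$ is exactly the singular set of $\fq$, and the two classes are disjoint as sets.
\end{itemize}
Hence $\sigma^{-1}$ of every generator is one of (I)--(III), which proves the lemma.

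I expect the main obstacle to be Step~2. The generator count of $Q^+(7,2)$ is standard. The delicate points are the count of hyperbolic refinements and, above all, verifying that the $270$ families are genuinely distinct; a single coincidence would break the exhaustion argument. I would try the invariants above first. If recovering $\fq$ from its class proves awkward, I would instead fall back on triality: under $\sigma$, one class of generators corresponds to the $135$ points $\sigma(M)$, giving exactly the families $\cN_M$ via $\sigma(M)^\perp$. The other class must then consist of the remaining $63+72=135$ families of types (I) and (III).
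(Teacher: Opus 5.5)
Your proposal is correct and follows essentially the same route as the paper: map the Lagrangians to $Q^+(7,2)$ by the spin embedding, note that the span of the image is totally singular and hence lies in a generator, then match the $270$ generators with the $63+135+72$ families using the count together with distinctness via intersection patterns. You also spell out a step the paper leaves implicit, namely that each $15$-element family fills an entire generator. One small slip: in case (II) the common intersection is trivial because it lies in all seven planes of $M$, not just in two of them.
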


\begin{proof}
    Under the spin embedding in Eq.~(\ref{eq: spin embedding}), specifically, Eq.~(\ref{eq: spin orthogonality}), the images of pairwise intersecting Lagrangians correspond to pairwise perpendicular singular points of $Q^+(7,2)$. Their span is totally singular and is therefore contained in a generator. It remains to identify the inverse images of the generators of $Q^+(7,2)$.\\
    
    \textbf{(I).} Clearly, any family of elements in $\cP_D$ have pairwise nontrivial intersection. In fact, we have $|\cP_D|=15$ since there is a bijection between Lagrangians in $\cP_D$ and in $\Lag(D^\perp/D)$, where $D^\perp/D$ is a symplectic space of dimension $4$, hence, $|\Lag(D^\perp/D)|=\prod_{i=1}^2(2^i+1)=15$. Note that $\bigcap_{L\in\cP_D}L=D\neq\emptyset$. Moreover, note that since one-dimensional subspaces $D\subset V$ correspond with non-zero elements $0\neq v\in V$, there are $63$ families of the form $\cP_D$.\\

    \textbf{(II).} Fix $M\in\Lag(V)$ and note that $M$ has $7$ two-dimensional subspaces $K$. For every such subspace, there exist $|\Lag(K^\perp/K)|=3$ Lagrangian subspaces containing $K$, including $M$ itself. It follows that $|\cN_M|=1+2\cdot 7=15$. To see explicitly that any two of its members intersect nontrivially, let $K_1=L_1\cap M$ and $K_2=L_2\cap M$ for distinct $L_1,L_2\in\cN_M\backslash\{M\}$. Since $K_1,K_2\subset M$ and $\dim(M)=3$, we must have $\dim(K_1\cap K_2)=1$, which implies $\{0\}\neq K_1\cap K_2\subseteq L_1\cap L_2$. Note that the intersection pattern of the elements in $\cN_M$ differs from that in $\cP_D$ in that $\dim(L\cap M)=2$ for all $L\in\cN_M$ and $\bigcap_{L\in\cN_M}L=\{0\}$. Moreover, we clearly have $|\{\cN_M\}_{M\in\Lag(V)}|=135=|\Lag(V)|$.\\

    \textbf{(III).} Let $\fq$ be a quadratic refinement of the symplectic form $\omega$ of $V=\mathbb F^6_2$ with Witt index $3$. The $\fq$-singular Lagrangians correspond projectively with the maximal singular planes of $Q^+_\fq(5,2):=\{[v]\in\PG(5,2)\mid \fq(v)=0\}$. Now, a hyperbolic quadric $Q^+_\fq(5,2)$ has $30$ maximal singular planes, which split into two types $\cR^\pm_\fq$ with $|\cR^\pm_\fq|=15$; indeed, they correspond with the $\alpha$ and $\beta$ families in the Klein correspondence (see Tab.~\ref{tab: Klein correspondence}), from which it also follows that $\dim(L\cap L')=1$ for all distinct $L,L'\in\cR^\pm_\fq$ (as well as $\bigcap_{L\in\cR^\pm_\fq}L=\{0\}$). For fixed $\fq$, the two generator classes $\cR^+_\fq$ and $\cR^-_\fq$ are distinct. Moreover, either generator class covers the entire singular point set of $Q^+_\fq(5,2)$. Hence, if $\cR^s_\fq=\cR^{s'}_{\fq'}$, then $\fq$ and $\fq'$ have the same singular point set and therefore $\fq=\fq'$, and for this fixed refinement, equality then forces $s=s'$. We can thus count the number of such families, by recalling that every quadratic refinement $\fq$ is of the form $\fq(v)=\fq_a(v)=\fq_0(v)+\omega(a,v)$ for $a\in V$, from which it follows that there are $|V|=64$ quadratic refinements in total. The cardinality of those with Witt index $3$ corresponds with the vectors for which $\fq_0(a)=0$, and is given by $36$. Adding the above two types, thus yields a total of $2\cdot 36=72$ families of type (III).

    Finally, we need to show that the above families exhaust all families of pairwise intersecting Lagrangians. This follows since the families in the three types are mutually distinct, as they are distinguished by their intersection pattern (a nonzero common intersection; a unique member meeting every other member in dimension two; and the property that every two distinct members meet in dimension one), and since the number of generators of $Q^+(7,2)$ is $270$, matching the sum of the families of types (I), (II), (III) above, $63+135+72=270$.
\end{proof}

By Eq.~(\ref{eq: stabiliser overlap}) and Lm.~\ref{lm: spin trichotomy}, the Lagrangians corresponding to a set of pairwise orthogonal stabiliser states fall within three families. To prove that they always extend to a stabiliser basis, we need to add the character data. For the families of type (I) and (II) this is straightforward; type (III) requires more work, yet reveals an intricate connection with the root lattice of $E_8$. We summarise the correspondence in the following lemma (for details, see Ref.~\cite{LevayPlanatSangia2013,VanGeemenMarrani2019}).

\begin{lemma}\label{lm: RM-E8 correspondence}
    Let $X=\mathbb F_2^3$ and let $\{e_x\}_{x\in X}$ be the standard basis of $\R^X\cong\R^8$. Define the set,
    \begin{align*}
        H_8=\mathrm{RM}(1,3)
        =\left\{\bigl(a\cdot x+b\bigr)_{x\in X}\mid a\in\mathbb F_2^3,\ b\in\mathbb F_2
        \right\}\; .
    \end{align*}    
    Then the following hold.
    \begin{enumerate}
        \item[(i)] $H_8$ is the binary $[8,4,4]$ extended Hamming code. Its nonzero codewords consist of the all-one word and fourteen weight-four words. The supports of the weight-four words are precisely the affine
        hyperplanes of $X$.
        
        \item[(ii)] The vectors of $H_8$ define the root lattice
        \begin{align*}
            \Lambda_{E_8}
            =\frac1{\sqrt2}\{z\in\mathbb Z^X\mid z\bmod2\in H_8\}
        \end{align*}
        of the exceptional Lie algebra $E_8$, which correspond with $16$ basis vectors $\{\pm\sqrt2\,e_x\}_{x\in X}$ and $14\cdot 16=224$ roots in
        \begin{align*}
            \left\{\frac1{\sqrt2}\sum_{x\in H}\epsilon_xe_x\mid H\subseteq X\text{ an affine hyperplane},\ \epsilon_x\in\{\pm1\}\right\}\; .
        \end{align*}
        
        \item[(iii)] Let $\overline\Lambda=\Lambda_{E_8}/2\Lambda_{E_8}$, and define the quadratic form with corresponding polar form,
        \begin{align*}
            Q(\overline v)=\frac{(v,v)}2\pmod2\; ,\qquad
            B(\overline v,\overline w)=(v,w)\pmod2\; .
        \end{align*}
        Then $(\overline\Lambda,B)$ is an $8$-dimensional symplectic vector space, $Q$ a quadratic refinement of Witt index $4$, and
        \begin{align}\label{eq: E8 correspondence}
            \{\text{$E_8$-root rays}\}
            \overset{\sim}{\longrightarrow}
            \{u\in\overline\Lambda\mid Q(u)=1\}\; ,
        \end{align}
        is an orthogonality-preserving bijection.
    \end{enumerate}
\end{lemma}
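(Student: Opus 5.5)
The plan treats the three items in order, each by direct computation with standard facts about Reed--Muller codes and Construction A.

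\textbf{Part (i).} Each codeword of $H_8$ is the evaluation vector of an affine function $x\mapsto a\cdot x+b$ on $X=\mathbb F_2^3$. The map $(a,b)\mapsto(a\cdot x+b)_x$ is linear and injective, since an affine function vanishing on all of $X$ is zero. Hence $H_8$ has dimension $4$. For $a=0$ the codewords are $0$ and the all-one word. For $a\neq0$ the support is $\{x\mid a\cdot x=b+1\}$, which is an affine hyperplane with $4$ elements. This gives $2\cdot 7=14$ weight-four words, and conversely every affine hyperplane arises this way. Minimum distance $4$ follows, so $H_8$ is the $[8,4,4]$ extended Hamming code.

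\textbf{Part (ii).} I would apply Construction A. The lattice $\Lambda=\frac1{\sqrt2}\{z\mid z\bmod 2\in H_8\}$ has the following properties:
\begin{itemize}
\item it is integral, because any two codewords of $H_8$ overlap in an even number of positions (the code is self-orthogonal);
\item it is even, because every weight is divisible by $4$;
\item it is unimodular, because $\det=2^{8}/|H_8|^2=1$ with $|H_8|=16$.
\end{itemize}
An even unimodular lattice of rank $8$ is $E_8$, which I quote as a known fact \cite{ConwaySloane2013,Elkies2004}. To identify the norm-$2$ vectors, take $z$ with $(z,z)/2=2$, i.e.\ $\sum z_x^2=4$. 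Either $z=\pm2e_x$, giving $16$ roots, or $z$ has four entries $\pm1$ on a set that must be the support of a weight-four codeword, i.e.\ an affine hyperplane, giving $14\cdot16=224$ roots. The total is $240$, as expected.

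\textbf{Part (iii).} Since $\Lambda$ is integral, $B$ is well defined on $\overline\Lambda\cong\mathbb F_2^8$. It is alternating because $(v,v)$ is even. It is nondegenerate because $\Lambda$ is unimodular: if $(v,w)\in2\mathbb Z$ for all $w$, then $v/2\in\Lambda^*=\Lambda$. The map $Q$ is well defined, since $(v+2w,v+2w)/2=(v,v)/2+2(v,w)+2(w,w)\equiv(v,v)/2\pmod 2$, and its polarisation is $B$.

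For the bijection, I would use the standard count that each nonzero class of $\overline\Lambda$ has a minimal-norm representative. The $120$ root pairs $\pm r$ map to distinct classes: if $r\equiv\pm r'\pmod{2\Lambda}$ with $r\neq\pm r'$, then $r\mp r'\in2\Lambda$ has norm at least $8$, but $|r\mp r'|^2\le 2+2+2\cdot 1=6$ when $r'\neq\pm r$, a contradiction. These classes satisfy $Q=1$. An $8$-dimensional quadratic space with exactly $120$ vectors of $Q=1$ has Witt index $4$, since the hyperbolic type has $2^{3}(2^4-1)=120$ such vectors and the elliptic type has $136$. Therefore the map from root rays to $\{Q=1\}$ is injective between sets of equal size, hence a bijection. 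Orthogonality is preserved because, for roots, $(r,r')\in\{0,\pm1\}$ when $r'\neq\pm r$. Thus $(r,r')=0$ exactly when $B=0$.

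The main obstacle is this final step: ruling out collisions mod $2\Lambda$ and fixing the Witt type. The Witt type has to be settled by the count $120$ versus $136$.
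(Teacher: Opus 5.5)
Parts (i) and (ii) follow the paper's argument. So does most of (iii): $B$ being well defined, alternating and nondegenerate, reduction being injective on root rays, and orthogonality being preserved.

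\textbf{The gap: fixing the Witt type and proving surjectivity.} Injectivity of the map from the $120$ root rays into $\{u\in\overline\Lambda\mid Q(u)=1\}$ only gives $|\{Q=1\}|\geq 120$. This lower bound is compatible with \emph{both} possible types, $120$ (hyperbolic) and $136$ (elliptic). So the phrase ``injective between sets of equal size'' presupposes what has to be proved. You never show there are \emph{exactly} $120$ vectors with $Q=1$, and comparing $120$ with $136$ cannot decide the type from a lower bound alone. Your remark that every nonzero class has a minimal-norm representative is true but vacuous, and it is not used. This is not cosmetic: if $Q$ were elliptic, $16$ classes with $Q=1$ would contain no root and the claimed bijection would fail.

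\textbf{How to close it.} You need either an independent determination of the Witt index or an upper bound on $|\{Q=1\}|$.

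The paper does the former. Let $\epsilon_1,\epsilon_2,\epsilon_3$ be the standard basis of $X$, and take $r_0=\frac1{\sqrt2}\sum_{x\in X}e_x$ and $r_i=\sqrt2(e_0+e_{\epsilon_i})$ for $i=1,2,3$. These lie in $\Lambda$ and satisfy $(r_i,r_i)=4$ and $(r_i,r_j)=2$ for $i\neq j$. Hence their classes are singular and pairwise $B$-orthogonal. A short check of integer representatives modulo $2$ shows they are linearly independent in $\overline\Lambda$. This gives a $4$-dimensional totally singular subspace, hence Witt index $4$ and exactly $2^3(2^4-1)=120$ vectors with $Q=1$. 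Your injectivity argument then completes the bijection.

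Alternatively, you can make your ``minimal representative'' idea quantitative. Since $E_8$ has covering radius $1$, for every $v\in\Lambda$ there is $w\in\Lambda$ with $|v-2w|^2\leq4$. So every class has a representative of norm $0$, $2$ or $4$. A class with $Q=1$ must then contain a vector of norm exactly $2$, i.e.\ a root. This gives surjectivity, and hence the Witt type, directly.
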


\begin{proof}
    \textbf{(i)} Let $X=\mathbb F_2^3$. By definition, $H_8=\mathrm{RM}(1,3)=\left\{\bigl(a\cdot x+b\bigr)_{x\in X}\mid a\in\mathbb F_2^3,\ b\in\mathbb F_2\right\}$ (see, for example, \cite[\S1]{MiezakiMunemasa2024}). There are $2^4=16$ affine linear functions, the zero function has Hamming weight zero, the constant-one function has Hamming weight eight, and every nonconstant affine function has a kernel of size four and hence Hamming weight four. Thus the fourteen weight-four supports are precisely the affine hyperplanes of $X$. In particular, $H_8$ is doubly even, that is, $\mathrm{wt}(c)=0\pmod4$ for all $c\in H_8$, where $\mathrm{wt}$ denotes the Hamming weight of the codeword $c\in H_8$.
    
    Since every codeword has Hamming weight divisible by four, for two codewords $c,d$ one has
    \begin{align*}
        c\cdot d
        =\sum_{i=1}^8c_id_i
        =|\mathrm{supp}(c)\cap\mathrm{supp}(d)|
        =\frac{\mathrm{wt}(c)+\mathrm{wt}(d)-\mathrm{wt}(c+d)}2\pmod2\; .
    \end{align*}
    It follows that $H_8$ is self-orthogonal (that is, $H_8\subseteq H^\perp_8)$, and since $\dim H_8=4$ and its length is eight, it is also self-dual, that is, $H_8=H^\perp_8$. Hence, $H_8$ is the doubly-even self-dual $[8,4,4]$ extended Hamming code.
    
    \textbf{(ii)} Define the lattice corresponding to the binary code $H_8$ by (see Ref.~\cite[chapter 7]{ConwaySloane2013})
    \begin{align*}
        \Lambda
        &=\frac{1}{\sqrt{2}}\{z\in\mathbb Z^8\mid z\bmod2\in H_8\}\; .
    \end{align*}
    Clearly, $\Lambda$ has full rank since $2\zz^8\subseteq\{z\in\mathbb Z^8\mid z\bmod2\in H_8\}$, hence, $\sqrt{2}\zz^8\subseteq\Lambda$. Moreover, $\Lambda$ is integral since for any $z,w\in\zz^8$ with $H_8\ni c=z\pmod2$ and $H_8\ni d=w\pmod2$, we have $\left(\frac z{\sqrt2},\frac w{\sqrt2}\right)=\frac{z\cdot w}{2}\in\mathbb Z$ since $c\cdot d=0$, by self-orthogonality of $H_8$. Next, note that $z\cdot z=\sum_{i=1}^8z^2_i\equiv\mathrm{wt}(c)\pmod4$, since $H_8$ is doubly even, from which it follows that $\Lambda$ is even, that is, $\left(\frac z{\sqrt2},\frac z{\sqrt2}\right)=\frac{z\cdot z}{2}\in 2\zz$ for every $z\in\Lambda$. The inverse image of $H_8$ in $\mathbb Z^8$ has index $[\zz^8:\sqrt{2}\Lambda]=2^{8-\dim H_8}=16$. Scaling by $\frac{1}{\sqrt{2}}$ in eight dimensions multiplies the covolume (its square is the determinant of the Gram matrix of a lattice basis) by $2^{-4}=\frac{1}{16}$. Hence, $\Lambda$ has covolume one and is thus unimodular (as a full-rank lattice), that is, $\Lambda^*=\Lambda$. It is therefore an even positive-definite unimodular lattice of rank eight, and consequently $\Lambda\cong\Lambda_{E_8}$ by the uniqueness theorem for such lattices (see Ref.~\cite{Elkies2004}).
    
    A root of $\Lambda$ has norm two, so its integer representative $z$ satisfies $z\cdot z=4$. There are only two possibilities: either $z=\pm2e_x$, giving the sixteen roots $\{\pm\sqrt{2}e_x\}_{x\in X}$, or $z$ has four coordinates equal to $\pm1$ and the others zero. Its odd-coordinate support therefore is a weight-four codeword, hence an affine hyperplane $H\subseteq X$, which gives $\{\frac{1}{\sqrt{2}}\sum_{x\in H}\epsilon_xe_x\mid H\subset X\text{ an affine hyperplane},\ \epsilon_x\in{\pm1}\}$. There are fourteen choices of $H$ and sixteen sign patterns, giving $14\cdot16=224$ such roots, hence, $16+224=240$ roots in total (and thus $120$ rays). After rescaling to unit vectors these $240$ roots correspond exactly with the sets $\{\pm e_x\}_{x\in X}$ and $\left\{\frac{1}{2}\sum_{x\in H}\epsilon_xe_x\mid H\subseteq X\text{ an affine hyperplane},\ \epsilon_x\in{\pm1}\right\}$.
    
    \textbf{(iii)} We first prove the claimed properties of the quadratic form $Q$ and its polarisation.
    
    Replacing $v$ by $v+2a$ changes $(v,v)/2$ by $2(v,a)+2(a,a)=0\pmod 2$, hence, $Q$ is well defined. Clearly, $B$ is its polar form. It is alternating since $\Lambda$ is even and thus $B(\overline{v},\overline{v})=(v,v)=0\pmod2$. If $B(\overline v,\overline w)=0$ for all $\overline w$, then $(v,w)\in 2\zz$, equivalently $(\frac{v}{2},w)\in\zz$, hence, $\frac{v}{2}\in\Lambda^*$. By unimodularity, this is equivalent to $v\in 2\Lambda$, hence, $\overline v=0$, and $B$ is nondegenerate. Finally, to determine the type of the quadratic refinement $Q$ of $B$, let $X=\mathbb F_2^3$, let $\epsilon_1,\epsilon_2,\epsilon_3$ denote its standard basis, and let $\{e_x\}_{x\in X}$ denote the standard orthonormal basis of $\mathbb R^X$, such that $e_x(y)=\delta_{x,y}$. Define
    \begin{align*}
        r_0:=\frac1{\sqrt2}\sum_{x\in X}e_x,
        \qquad
        r_i:=\sqrt2(e_0+e_{\epsilon_i}),
        \quad i=1,2,3.
    \end{align*}
    Clearly, these vectors belong to $\Lambda_{E_8}$. Indeed, the integer representative of $r_0$ is the all-one vector, whose reduction modulo two is the constant-one codeword in $H_8=\operatorname{RM}(1,3)$. The integer representative of $r_i$ is $2e_0+2e_{\epsilon_i}$, whose reduction modulo two is the zero codeword. Moreover, one checks that $(r_i,r_i)=4$, $(r_0,r_i)=2$ and $(r_i,r_j)=2$ for all $i\neq j$. Consequently, their reductions $\bar r_i\in\overline\Lambda:=\Lambda_{E_8}/2\Lambda_{E_8}$ satisfy $Q(\bar r_i)=0$ and are pairwise orthogonal, that is, $B(\bar r_i,\bar r_j)=0$. To prove linear independence, suppose $\sum_{i=0}^3 a_i\bar r_i\in 2\Lambda_{E_8}$. Reduction of integer representatives modulo two immediately gives $a_0=0$. The remaining relation implies $\sum_{i=1}^3 a_i(e_0+e_{\epsilon_i})\in H_8$. Every sum of one or two of these vectors has weight two, while their total sum has support $\{0,\epsilon_1,\epsilon_2,\epsilon_3\}$, which is not an affine hyperplane, as it contains $0$ and the three independent vectors $\epsilon_1,\epsilon_2,\epsilon_3$. Thus the sum belongs to $H_8$ only when $a_1=a_2=a_3=0$.

    Consequently, $\langle\bar r_0,\bar r_1,\bar r_2,\bar r_3\rangle$ is a $4$-dimensional totally singular subspace of the $8$-dimensional symplectic space $\Lambda$, contained in $Q$, hence, $Q$ has Witt index four, from which it follows that $|\{u\in\overline{\Lambda}\mid Q(u)=1\}|=120$ \cite{Taylor1992}.

    Having established these facts, we now prove that the map in Eq.~(\ref{eq: E8 correspondence}) is an orthogonality-preserving bijection.
    
    If $r$ is a root, then $Q(\overline r)=\frac{(r,r)}2=1$, and $\overline r=\overline{-r}$, so reduction depends only on the root ray. Suppose roots $r,s$ have the same reduction. Then $r-s=2a$ for some $a\in\Lambda$, and hence $(r-s,r-s)=4(a,a)$ is divisible by eight. By contrast, for distinct nonopposite $E_8$ roots, $(r,s)\in\{-1,0,1\}$, and thus $(r-s,r-s)=4-2(r,s)\in\{2,4,6\}$. Consequently, the reduction is injective on root rays, and since $\{u\in\overline{\Lambda}\mid Q(u)=1\}$ has cardinality $120$, Eq.~(\ref{eq: E8 correspondence}) is a bijection.
    
    Finally, we show that Eq.~(\ref{eq: E8 correspondence}) preserves orthogonality. Clearly, if $(r,s)=0$ for $r,s\in\Lambda$ , then also $B(\overline{r},\overline{s})=(r,s)=0\pmod2$. Conversely, let $\overline{r},\overline{s}\in\overline{\Lambda}$ with $B(\overline{r},\overline{s})=0$. Then $(r,s)=0\pmod 2$, and since the inner product between two distinct, nonopposite $E_8$ roots takes values $-1$, $0$ or $1$ only, the only even possibility is $(r,s)=0$.
\end{proof}

With the above correspondence at hand, we are now in the position to prove the remaining extendibility result.

\begin{theorem}\label{thm: 3-qubit extendibility}
    Every set of pairwise orthogonal three-qubit stabiliser states extends to a stabiliser basis.
\end{theorem}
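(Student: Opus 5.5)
Given a pairwise orthogonal family $\{\ket{L_i,\xi_i}\}$ of three-qubit stabiliser states, Lm.~\ref{lm: stabiliser overlap} forces $L_i\cap L_j\neq\{0\}$ for $i\neq j$. By Lm.~\ref{lm: spin trichotomy}, the Lagrangians $L_i$ therefore all lie in one family of type (I), (II) or (III). I would treat the three types separately. In each case the goal is an explicit stabiliser basis that contains the given states, and the construction is guided by the geometry of that family.

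\textbf{Type (I).} All $L_i$ contain a common line $D=\langle v\rangle$. I would mimic the proof of Thm.~\ref{thm: 2 qudit extendibility}. First choose a Clifford unitary mapping $W_v$ to $Z_1$, so that every state becomes $\ket{z_{t_i}}_1\otimes\ket{\phi_i}_{23}$. Next partition the family by the sign $t_i$. Within each part, orthogonality reduces to orthogonality of the two-qubit residues $\ket{\phi_i}$. Thm.~\ref{thm: 2 qudit extendibility} then completes each part to a two-qubit stabiliser basis $\cB_t$, and the union $\bigcup_t\ket{z_t}\otimes\cB_t$ is the required basis.

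\textbf{Type (II).} All $L_i$ lie in $\cN_M$, meaning each $L_i$ equals $M$ or meets $M$ in a plane. If at least two of the $L_i$ differ from $M$, then $K_i=L_i\cap M$ and $K_j$ meet in a line. If all the $K_i$ share a common line, we are back in type (I). Otherwise I would use the character data as follows. Orthogonality of two states with $L_i\cap L_j=K_i\cap K_j$ forces $\xi_i$ and $\xi_j$ to disagree on that line, and I expect this constraint to leave very few configurations. The natural completion to try is the measurement basis of $M$ itself. Each state with $L_i\neq M$ is supported on exactly two $M$-outcomes (Lm.~\ref{lm: affine stabiliser support}), and these supports are pairwise disjoint. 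I would show that each such two-element support can be refined to an orthogonal pair inside a local basis, built either from the states obtained by splitting along $L_i$ or by exchanging that pair for $\ket{L_i,\xi_i}$ together with its partner $\ket{L_i,\xi_i'}$ whose character differs off $K_i$. Since $\dim(L_i\cap M)=2$, the state $\ket{L_i,\xi_i}$ and its partner span the same two-dimensional space as the two $M$-basis states they replace. Performing this exchange for every $i$ and keeping the remaining $M$-basis states should yield a stabiliser basis.

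\textbf{Type (III).} This is the hard case: all $L_i$ are generators of one class of a Klein quadric $Q^+_\fq(5,2)$. By Tab.~\ref{tab: Klein correspondence}, they can be identified with $\alpha_{x_i}$ for distinct points $[x_i]\in\PG(3,2)$. Orthogonality then becomes a tournament-type condition on the values $\xi_i(x_i\wedge x_j)$, as in the proof of Thm.~\ref{thm: 3 qubits symplectic}. In the twisted setting these values live in the cocycle-twisted fibres $X_{\beta_W}$, and this twist is what breaks the five-element ovoid obstruction. I would use Lm.~\ref{lm: RM-E8 correspondence} to encode pure three-qubit stabiliser states, or rather the relevant signed generators, as $E_8$ root rays, with orthogonality of states corresponding to orthogonality of roots. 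A stabiliser basis is then a frame of $8$ mutually orthogonal roots, and the statement reduces to the claim that every set of pairwise orthogonal $E_8$ roots in this class extends to an orthogonal frame of $8$ roots. That claim I would verify from the root description in Lm.~\ref{lm: RM-E8 correspondence}(ii): $W(E_8)$ acts transitively on orthogonal $k$-tuples of roots, and the orthogonal complement of $k$ pairwise orthogonal roots is a root system $D_4$, $A_1^{3}$, etc., which always contains further orthogonal roots up to $k=8$.

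The main obstacle is making the translation in type (III) rigorous. One must show that the twisted characters on a Klein generator class correspond exactly to signs of $E_8$ roots, and that a frame of $8$ orthogonal roots corresponds to a genuine stabiliser basis, whose Lagrangians need not all come from type (III). If that identification fails, I would fall back on a direct case analysis over the at most $15$ generators $\alpha_x$ and their admissible sign choices.
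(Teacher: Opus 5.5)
Your types (I) and (II) follow the paper's proof. For (I) you branch on the common Pauli and invoke Thm.~\ref{thm: 2 qudit extendibility}. For (II) you complete inside the eigenbasis of $M$ by adjoining the partner $\ket{L_i,\xi_i'}$ of each two-point support, plus the unused $M$-basis states. Two small corrections apply. The reduction ``common line $\Rightarrow$ type (I)'' is unnecessary. The two-point supports are also not literally pairwise disjoint: orthogonality only forces any two of them to be disjoint or equal, with equality exactly for a state and its partner, and it forces singletons $\ket{M,\xi}$ to avoid them.

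\textbf{The gap is in type (III).} There you leave the decisive step open as ``the main obstacle''. The missing idea is that the identification with $E_8$ roots is literal in a suitable frame.
\begin{itemize}
\item After a Clifford transformation you may take $\fq(x,z)=x\cdot z$. Then every $\fq$-singular Pauli operator contains an even number of $Y$ factors and is a real symmetric matrix. Hence all $15\cdot 8=120$ states of the class $\cR^s_\fq$ are real vectors.
\item Fix $M\in\cR^s_\fq$ with real eigenbasis $\{e_x\}_{x\in\mathbb F_2^3}$. Every other $L\in\cR^s_\fq$ meets $M$ in a one-dimensional subspace. By Lm.~\ref{lm: affine stabiliser support} and Eq.~(\ref{eq: stabiliser overlap}), each of its states is supported uniformly on an affine plane $H\subseteq\mathbb F_2^3$ with real coefficients $\pm\frac12$.
\item There are exactly $14\cdot 8=112$ such rays and $112$ such states, so every ray $\R\,\frac12\sum_{x\in H}\epsilon_xe_x$ occurs.
\end{itemize}
These are precisely the normalised roots of Lm.~\ref{lm: RM-E8 correspondence}(ii). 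So the states of the class and the $E_8$ root rays coincide as vectors in $\R^8$, and orthogonality is ordinary Euclidean orthogonality. Your worry that an orthogonal frame of roots might not be a stabiliser basis therefore disappears: every root ray \emph{is} one of these stabiliser states. The tournament description of the twisted characters is not needed.

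\textbf{The frame-extension step is also not justified as stated.} $W(E_8)$ is not transitive on orthogonal $k$-tuples of roots. Already for $k=4$ there are two orbits, distinguished by whether $\frac12(r_1+r_2+r_3+r_4)$ lies in the lattice, with orthogonal complements $D_4$ and $A_1^4$ respectively. The extension claim itself is true, and a case analysis of complements could be made to work, but the paper's argument is cleaner:
\begin{itemize}
\item Reduce modulo $2\Lambda_{E_8}$. By Lm.~\ref{lm: RM-E8 correspondence}(iii), root rays correspond bijectively to the $120$ vectors with $Q=1$, and orthogonality corresponds to $B$-orthogonality.
\item Extend the isotropic span of the images $u_i$ to a maximal isotropic subspace $T$ of dimension $4$.
\item On $T$, $Q$ is linear and nonzero. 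So $\{u\in T\mid Q(u)=1\}$ is an affine hyperplane of $8$ pairwise orthogonal root rays containing the given family, and this is the required stabiliser basis.
\end{itemize}
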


\begin{proof}
    Let $\cF=\{|L_i,\xi_i\}_{i=1}^k$ be such a set and let $\cL=\{L_i\}_{i=1}^k$ be the set of its distinct Lagrangian subspaces. By Eq.~(\ref{eq: stabiliser overlap}), we must have $L\cap L'\neq\{0\}$ for all $L,L'\in\cL$, and, by Lm.~\ref{lm: spin trichotomy}, $\cL$ is contained in a family of type (I-III).\\
    
    \textbf{(I).} In this case, $D\subset\bigcap_i L_i$. After a Clifford transformation taking $D$ to $Z_1$, every state has the form $\ket{z_t}_1\otimes\ket{\phi}_{23}$, for $t\in\mathbb F_2$. For each value of $t$, the residual states are pairwise orthogonal two-qubit stabiliser states. By Thm.~\ref{thm: 2 qudit extendibility}, each branch extends to a two-qubit stabiliser basis, and the union of the two branches thus constitutes a stabiliser basis.\\
    
    \textbf{(II).} Let $\cB_M=\{\ket{x}_M\mid x\in\Xi(M)\}$ be the joint eigenbasis of $M$. A stabiliser state $|M,\xi\rangle$ then has support on precisely one element of $\cB_M$. If $M\neq L\in\cN_M$, then $\dim(L\cap M)=2$, and Lm.~\ref{lm: affine stabiliser support} implies that every stabiliser state $|L,\xi\rangle$ has support on a two-point affine line $\{x,y\}\subset\Xi(M)$. It therefore has the form
    \begin{align*}
        \frac{\ket{x}_M+\zeta\ket{y}_M}{\sqrt2}\; ,\qquad
        \zeta\in\{\pm1,\pm i\}\; .
    \end{align*}
    It follows that two distinct affine lines cannot share an element of $\cB_M$; similarly, a stabiliser state $|M,\xi\rangle$ cannot coexist with an edge containing that element. (In both cases, the orthogonality condition is violated, since the respective inner products contain a nonzero summand.) Consequently, the affine stabiliser supports in $\cF$ decompose into mutually disjoint singletons and pairs of elements in $\cB_M$. Now, to complete $\cF$ to a stabiliser basis, for any state with two-point support in $\cF$ add the unique orthogonal partner $\frac{\ket{x}_M-\zeta\ket{y}_M}{\sqrt2}$, and adjoin all $M$-basis states not contained in such a support or already in $\cF$. After this procedure, every affine line thus contributes two states, supported on two elements of $\cB_M$, and every remaining element of $\cB_M$ contributes one additional state. Disjointness implies that this yields a total of eight pairwise orthogonal stabiliser states, hence, a stabiliser basis.\\
    
    \textbf{(III).} Without loss of generality, let $\cL\subset\cR_\fq^s$ for $s\in\{\pm\}$. After a Clifford transformation, we may take $\fq(v)=x\cdot z$ for $v=(x,z)$ (the quadratic refinement in the Weyl representation \cite{Gross2006}). The $\fq$-singular Pauli operators are then represented by real symmetric matrices, hence, all joint eigenvectors of the Lagrangians in $\cR_\fq^s$ may be chosen real.
    
    Next, fix any $M\in\cR_\fq^s$ and let $\{e_x\}_{x\in\mathbb F_2^3}$ be its real eigenbasis. Moreover, the $120$ stabiliser states associated with the $15$ elements of $\cR_\fq^s$ can further be identified with the set
    \begin{align}\label{eq: type-III stabiliser states}
        \{e_x\}_{x\in\mathbb F_2^3}\cup
        \left\{\R\left(\frac{1}{2}\sum_{x\in H}\epsilon_xe_x\right)\bigg\mid H\subseteq\mathbb F_2^3\text{ an affine hyperplane},\ \epsilon_x\in\{\pm1\}\right\}\; ,
    \end{align}
    where sign patterns differing by an overall sign define the same state. Indeed, $M$ gives the eight coordinate rays, and for every $M\neq L\in\cR^s_\fq$, we have $\dim(L\cap M)=1$ by Lm.~\ref{lm: spin trichotomy}, hence, the stabiliser states associated with $L$ have four-point affine support (see Lm.~\ref{lm: affine stabiliser support}) and real coefficients of magnitude $\frac{1}{2}$ by Eq.~(\ref{eq: stabiliser overlap}). Consequently, the fourteen affine hyperplanes and their eight projective sign patterns in Eq.~(\ref{eq: type-III stabiliser states}) represent the remaining $14\cdot8=112$ rays.
    
    With Lm.~\ref{lm: RM-E8 correspondence}, (i), we may identify the incidence vectors of the affine hyperplanes of $\mathbb F_2^3$ in Eq.~(\ref{eq: type-III stabiliser states}) with the fourteen weight-four codewords of the extended binary Hamming code $H_8=\mathrm{RM}(1,3)$, and after multiplying the representatives in $H_8$ by $\frac{1}{\sqrt{2}}$, they correspond, up to an overall sign, with the roots of the lattice
    \begin{align*}
        \Lambda_{E_8}
        =\frac{1}{\sqrt2}\{z\in\mathbb Z^8\mid z\bmod2\in H_8\}\; .
    \end{align*}
    By Lm.~\ref{lm: RM-E8 correspondence} (ii), $\Lambda_{E_8}$ constitutes a representation of the root lattice of the exceptional Lie algebra $E_8$. In particular, the elements in Eq.~(\ref{eq: type-III stabiliser states}) correspond with the set of $120$ rays of the $E_8$ root system.
    
    It thus remains to show that every orthogonal family of $E_8$-root rays extends to an orthogonal basis of eight root rays. We will show this analytically using the above correspondence (for an earlier computational proof, see also Ref.~\cite{RuugeVanOystaeyen2005,WaegellAravind2015}). To see this, we use that, by Lm.~\ref{lm: RM-E8 correspondence} (iii), the $120$ root rays are a bijection with the nonsingular elements of the quadric $Q(\overline v)=\frac{(v,v)}2\pmod2$ of hyperbolic type in an $8$-dimensional symplectic vector space, preserving orthogonality with respect to its polar symplectic form $B(\overline v,\overline w)=(v,w)\pmod2$. We may thus represent the stabiliser states in $\cF$ by vectors $u_1,\ldots,u_k\in\{u\in\overline{\Lambda}\mid Q(u)=1\}$. Pairwise orthogonality implies $B(u_i,u_j)=0$ for all $i\neq j$. Now, extend their span to a maximal $B$-isotropic subspace $T\leq\overline\Lambda$, so that $\dim T=4$. Since $B|_T=0$, the restriction $Q|_T$ is linear, and it is nonzero because $Q(u_i)=1$ for all $1\leq i\leq k$. Therefore, $\{u\in T\mid Q(u)=1\}$ is an affine hyperplane of $T$ containing eight vectors. Finally, since orthogonality is preserved under Eq.~(\ref{eq: E8 correspondence}), the corresponding root rays are pairwise orthogonal. Since $\{u\in T\mid Q(u)=1\}$ contains the image of the original family under the correspondence in Eq.~(\ref{eq: E8 correspondence}) and forms an orthogonal basis of $\R^8$, it defines a stabiliser basis under the reverse correspondence.
\end{proof}

The proof of Thm.~\ref{thm: 3-qubit extendibility} reveals why the existence results for qubit and odd-prime-dimensional qudit USBs differ: neither the spin embedding $\sigma$ in Lm.~\ref{lm: spin embedding} nor the identification of stabiliser states, corresponding to the Lagrangians in $\cR^s_\fq$, with roots of the $E_8$ lattice, corresponding with nonsingular points of a quadratic refinement of Witt index $4$ of an $8$-dimensional symplectic space, carry over to the odd prime case.

\subsection{Lifting property}\label{app: lifting}

Given an unextendible set of pairwise orthogonal stabiliser states for $n$ parties, we construct one for $n+1$ parties.

\begin{lemma}\label{lm: lifting}
    Let $\cU$ be an unextendible set of $n$-qudit stabiliser states of prime local dimension $q$, and let $\{|t\rangle_Z\mid t\in\mathbb F_q\}$ be the computational stabiliser basis of a single qudit. Then the following set is unextendible,
    \begin{align*}
        \cU^\uparrow
        :=\left\{|\psi\rangle\otimes|z_t\rangle\mid|\psi\rangle\in\cU,\ t\in\mathbb F_q
        \right\}\; .
    \end{align*}
\end{lemma}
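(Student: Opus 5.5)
The plan has three steps: check that $\cU^\uparrow$ is a valid incomplete orthogonal set, then take a hypothetical extending state, restrict it to a slice where the last qudit is fixed, and show that slice would extend $\cU$.

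\textbf{Step 1 (validity).} Pairwise orthogonality of $\cU^\uparrow$ is immediate. If two elements carry different $t$, they are orthogonal on the last qudit. If they carry the same $t$, they are orthogonal because the corresponding elements of $\cU$ are. Incompleteness follows from the count $|\cU^\uparrow|=q|\cU|<q\cdot q^n$.

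\textbf{Step 2 (a slice of the hypothetical extension).} Suppose a pure $(n+1)$-qudit stabiliser state $\ket{\Psi}$ is orthogonal to every element of $\cU^\uparrow$. Expand it in the computational basis of the last qudit as $\ket{\Psi}=\sum_t\ket{\phi_t}\otimes\ket{z_t}$. Orthogonality to $\ket{\psi}\otimes\ket{z_t}$ is exactly $\bracket{\psi}{\phi_t}=0$. Hence every nonzero (unnormalised) $\ket{\phi_t}$ is orthogonal to all of $\cU$. It then suffices to show that some nonzero $\ket{\phi_t}$ is, after normalisation, a pure $n$-qudit stabiliser state; this contradicts unextendibility of $\cU$.

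\textbf{Step 3 (the slice is a stabiliser state).} Measuring $Z$ on the last qudit is a Pauli measurement. The post-measurement state of a stabiliser state under a Pauli measurement is again a stabiliser state, and it factorises as $\ket{\phi_t}/\|\phi_t\|\otimes\ket{z_t}$. Since the outcome $t$ occurs with probability $\|\phi_t\|^2$, some outcome has nonzero probability, and the corresponding $\ket{\phi_t}/\|\phi_t\|$ is an $n$-qudit stabiliser state.

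To make this precise in the paper's language, I would write $\ket{\Psi}=\ket{L,\xi}$ and use that the post-measurement projector is $(\one\otimes\Pi_{z_t})\Pi_{L,\xi}(\one\otimes\Pi_{z_t})$ up to normalisation. I would then argue directly with Lagrangians:
\begin{itemize}
\item If $z_{n+1}\in L$, the state is already a product $\ket{\phi}\otimes\ket{z_t}$, and $\ket{\phi}$ is a stabiliser state.
\item If $z_{n+1}\notin L$, then $L'=(L\cap z_{n+1}^\perp)+\langle z_{n+1}\rangle$ is Lagrangian. The post-measurement state is $\ket{L',\xi'}$, with $\xi'$ agreeing with $\xi$ on $L\cap z_{n+1}^\perp$ and $\xi'(z_{n+1})=t$.
\end{itemize}
In either case the state factorises, and the first factor is stabilised by the projection of $L\cap z_{n+1}^\perp$ to the first $n$ qudits, which is Lagrangian in $\mathbb F_q^{2n}$.

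The main obstacle is this standard fact that Pauli-measurement outcomes of stabiliser states are stabiliser states. In the qubit case it needs a little care with the cocycle $\beta_W$ so that $\xi'$ lies in $\Xi(L')$. I would handle that by noting that the post-measurement state is an honest joint eigenvector of the commuting Weyl operators $W_v$, $v\in L'$. Its eigenvalues are therefore automatically admissible, so no explicit cocycle computation is required.
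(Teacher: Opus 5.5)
Your proposal is correct and follows the paper's own proof: you decompose a hypothetical extending state along the $Z$-basis of the added qudit and note that some branch $(\one\otimes\langle z_t|)\ket{\Psi}$ is nonzero. That branch is, as a Pauli-measurement outcome, an $n$-qudit stabiliser state orthogonal to all of $\cU$, which contradicts unextendibility. Your explicit Lagrangian argument with $L'=(L\cap z_{n+1}^\perp)+\langle z_{n+1}\rangle$, including the remark that joint eigenvectors automatically carry admissible labels in the qubit case, simply spells out the standard fact that the paper invokes without proof.
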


\begin{proof}
    Clearly, the states in $\cU^\uparrow$ are stabiliser states and are pairwise orthogonal. Moreover, $\mathrm{span}(\cU^\uparrow)=\mathrm{span}(\cU)\otimes\mathbb C^q$. Writing $\cK:=\mathrm{span}(\cU)^\perp$, we have $\mathrm{span}(\cU^\uparrow)^\perp=\cK\otimes\mathbb C^q$. Suppose now that a stabiliser state $|\Phi\rangle$ belonged to this complementary subspace, and measure the Pauli $Z$ operator on the final qudit. At least one postselected branch
    \begin{align*}
        |\phi_t\rangle
        :=\left(\one\otimes\langle z_t|\right)|\Phi\rangle
    \end{align*}
    is nonzero. Pauli measurement and postselection map stabiliser states to stabiliser states, hence, the normalized branch state $|\phi_t\rangle$ is an $n$-qudit stabiliser state. Moreover, since $|\Phi\rangle\in\cK\otimes\mathbb C^q$, the normalised state $|\phi_t\rangle$ lies in $\cK$. Yet, this contradicts unextendibility of $\cU$, hence, the complement of $\cU^\uparrow$ contains no stabiliser state.
\end{proof}

Combining the explicit constructions of USBs in Thm.~\ref{thm: 4 qubit USB} and Thm.~\ref{thm: 3 qudit USB} with the general extendibility results in Thm.~\ref{thm: 2 qudit extendibility} and Thm.~\ref{thm: 3-qubit extendibility}, and the lifting argument in Lm.~\ref{lm: lifting} thus proves the general existence result for USBs in Thm.~\ref{thm: unextendibility}.

\section{Magic witness, uncompletability, operational indistinguishability and bound magic}\label{app: applications}

We discuss various applications of USBs for the resource theory of magic. In this regard, we take stabiliser operations (SO) to mean finite adaptive stabiliser protocols generated by stabiliser-state preparation, Clifford unitaries, projective Pauli measurements, discarding, and classical randomness/conditioning \cite{VeitchEtAl2014,HowardCampbell2017}.\\

\textbf{Magic witnesses.} Similarly to testing separability using entanglement witnesses, certifying whether a state is magic can be done using magic witnesses, that is, via Hermitian operators $W$ with the property that for every stabiliser state $\sigma\in\Stab_{n,q}$, $\tr[W\sigma]\geq0$, whereas $\tr[W\rho]<0$ for some density operator $\rho\notin\Stab_{n,q}$.

\begin{lemma}\label{lm: magic witness}
    Let $\cK\subseteq\cH$ be a subspace containing no stabiliser state, let $\Pi_\cK$ be its orthogonal projector, and define
    \begin{align*}
        \alpha_\cK
        :=\max_{\ket{s}\in\Stab^\mathrm{pure}_{n,q}}
        \bra{s}\Pi_\cK\ket{s}\; .
    \end{align*}
    Then $\alpha_\cK<1$ and $W_\cK:=\alpha_\cK\one-\Pi_\cK$ is a magic witness, in particular, every state supported on $\cK$ is magic.
\end{lemma}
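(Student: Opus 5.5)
The plan is to exploit two facts: the set $\Stab^\mathrm{pure}_{n,q}$ is finite, and $\Stab_{n,q}$ is its convex hull.

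First we show $\alpha_\cK<1$. For a unit vector $\ket{s}$ and an orthogonal projector $\Pi_\cK$, we have $\bra{s}\Pi_\cK\ket{s}=\|\Pi_\cK\ket{s}\|^2\leq1$. Equality holds if and only if $\Pi_\cK\ket{s}=\ket{s}$, i.e.\ $\ket{s}\in\cK$. This is excluded by assumption, so every pure stabiliser state gives a value strictly below one. Since there are only finitely many pure stabiliser states, the maximum is attained and is itself strictly less than one. (One could also invoke compactness, but finiteness suffices and is immediate from the labelling by pairs $(L,\xi)$ with $L\in\Lag(V)$ and $\xi\in\Xi(L)$.)

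Next we verify the witness property. For any $\sigma\in\Stab_{n,q}$, write $\sigma=\sum_s p_s\dyad{s}$ as a convex combination of pure stabiliser states. Linearity of the trace then gives
\begin{align*}
    \tr[W_\cK\sigma]=\sum_s p_s\left(\alpha_\cK-\bra{s}\Pi_\cK\ket{s}\right)\geq0 .
\end{align*}
Now let $\rho$ be any density operator supported on $\cK$ (assuming $\cK\neq\{0\}$, which is implicit for a nontrivial statement). Then $\Pi_\cK\rho=\rho$, so $\tr[\Pi_\cK\rho]=1$ and $\tr[W_\cK\rho]=\alpha_\cK-1<0$. Hence $\rho\notin\Stab_{n,q}$, because otherwise the inequality above would be violated. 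This shows simultaneously that $W_\cK$ is a genuine witness (it detects some non-stabiliser state) and that every state supported on $\cK$ is magic. Hermiticity of $W_\cK$ is clear, since it is a real combination of $\one$ and a projector.

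No step is a real obstacle. The only point needing care is the strictness $\alpha_\cK<1$, which rests on the maximum being taken over a finite set, together with the equality case of $\|\Pi_\cK\ket{s}\|\leq1$.
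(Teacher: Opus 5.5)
Your proposal is correct and follows essentially the same route as the paper. Both arguments use finiteness of $\Stab^\mathrm{pure}_{n,q}$ to attain the maximum, and the equality case $\bra{s}\Pi_\cK\ket{s}=1\iff\ket{s}\in\cK$ to get $\alpha_\cK<1$. Both then use convexity for $\tr[W_\cK\sigma]\geq0$ on $\Stab_{n,q}$ and $\tr[W_\cK\rho]=\alpha_\cK-1<0$ for states supported on $\cK$. Your remark that $\cK\neq\{0\}$ is needed for $W_\cK$ to detect some state is a small, correct addition.
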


\begin{proof}
    Since the set of pure stabiliser states is finite, the maximum defining $\alpha_\cK$ is attained. Moreover, $\bra{s}\Pi_\cK\ket{s}=1$ holds if and only if $\ket{s}\in\cK$. By assumption, $\cK$ contains no stabiliser state, and hence $\alpha_\cK<1$.
    
    For every pure stabiliser state $\ket{s}\in\Stab^\mathrm{pure}_{n,q}$, $\bra{s}W_\cK\ket{s}=\alpha_\cK-\bra{s}\Pi_\cK\ket{s}\geq0$, and by convexity, the same inequality holds for every $\sigma\in\Stab_{n,q}$. Finally, if $\rho$ is supported on $\cK$, then
    $\Pi_\cK\rho=\rho$, and therefore
    \begin{equation*}
        \tr[W_\cK\rho]
        =\alpha_\cK\tr[\rho]
        -\tr[\Pi_\cK\rho]
        =\alpha_\cK-1<0\; .\qedhere
    \end{equation*}
\end{proof}

In particular, every USB naturally defines a magic witness.

\begin{corollary}\label{cor: USB magic witness}
    Let $\cU=\{\ket{\psi_i}\}_{i=1}^k$ be a USB and set $P_{\cU}:=\sum_{i=1}^k\dyad{\psi_i}$. Then $W_{\cU}:=P_{\cU}-\delta_{\cU}\one$ is a magic witness with
    \begin{align*}
        \delta_{\cU}
        :=\min_{\ket{s}\in\Stab^\mathrm{pure}_{n,q}}
        \bra{s}P_{\cU}\ket{s}
        =\min_{\ket{s}\in\Stab^\mathrm{pure}_{n,q}}
        \sum_{i=1}^k|\langle\psi_i|s\rangle|^2
        >0\; .
    \end{align*}
\end{corollary}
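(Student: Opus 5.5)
The plan is to reduce Cor.~\ref{cor: USB magic witness} to Lm.~\ref{lm: magic witness}, applied to the complementary subspace $\cK:=\mathrm{ran}(Q_\cU)$ with $Q_\cU=\one-P_\cU$ as in Eq.~(\ref{eq: unextendible subspace}).

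First, I will check the hypothesis of Lm.~\ref{lm: magic witness}, namely that $\cK$ contains no pure stabiliser state. This is exactly Def.~\ref{def: USB}. A pure stabiliser state $\ket{s}\in\cK$ would be orthogonal to every $\ket{\psi_i}$. It cannot be one of the $\ket{\psi_i}$, since it is not orthogonal to itself. So it would be a further stabiliser state extending $\cU$, contradicting unextendibility. Lm.~\ref{lm: magic witness} then gives $\alpha_\cK<1$ and shows that $W_\cK=\alpha_\cK\one-\Pi_\cK$ is a magic witness.

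Second, I will translate the constants. Since $\Pi_\cK=\one-P_\cU$, we have $\bra{s}\Pi_\cK\ket{s}=1-\bra{s}P_\cU\ket{s}$ for every normalised $\ket{s}$. Hence $\alpha_\cK=1-\delta_\cU$, and
\begin{align*}
    W_\cK
    =(1-\delta_\cU)\one-(\one-P_\cU)
    =P_\cU-\delta_\cU\one
    =W_\cU\; .
\end{align*}
So $W_\cU$ is the same operator as $W_\cK$ and is therefore a magic witness. Also, $\alpha_\cK<1$ is equivalent to $\delta_\cU>0$.

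As a check, I will also argue positivity of $\delta_\cU$ directly. The set $\Stab^\mathrm{pure}_{n,q}$ is finite, so the minimum is attained at some $\ket{s}$. By Lm.~\ref{lm: stabiliser overlap}, $\delta_\cU=0$ would force $\langle\psi_i|s\rangle=0$ for all $i$, which again contradicts unextendibility. The second expression for $\delta_\cU$ is just the expansion $\bra{s}P_\cU\ket{s}=\sum_i|\langle\psi_i|s\rangle|^2$.

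Finally, I will exhibit the witnessed state required by the definition of a magic witness. Take $\rho=Q_\cU/\tr[Q_\cU]$, which is well defined because $k<D$. It gives $\tr[W_\cU\rho]=-\delta_\cU<0$. For every $\sigma\in\Stab_{n,q}$, we have $\tr[W_\cU\sigma]\geq0$ by the pure-state bound and convexity.

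There is no real obstacle here. The only point needing care is the bookkeeping of signs and normalisation when passing between $\Pi_\cK$ and $P_\cU$.
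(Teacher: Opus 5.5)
Your proposal is correct and follows the same route as the paper: the paper's one-line proof is the specialisation of Lm.~\ref{lm: magic witness} to $\cK=\mathrm{ran}(Q_\cU)$, where $\alpha_\cK=1-\delta_\cU$ and $W_\cK=W_\cU$, and you simply write out that translation explicitly. The appeal to Lm.~\ref{lm: stabiliser overlap} in your direct positivity check is unnecessary, since a vanishing sum of nonnegative terms already forces each overlap $\langle\psi_i|s\rangle$ to vanish.
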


\begin{proof}
    Indeed, $\tr[W_{\cU}\sigma]\geq0$ for all $\sigma\in\Stab_{n,q}$, and $\tr[W_{\cU}\rho]=-\delta_{\cU}<0$ for every state $\rho$ supported in $\mathrm{ran}(Q_{\cU})$.
\end{proof}

\textbf{Uncompletability.} Define a \emph{Clifford isometry} $J:\cH\ra\cH\otimes\cH_A$ to be an embedding of the form
\begin{align*}
    J|\psi\rangle
    =C\left(|\psi\rangle\otimes|\alpha\rangle_A\right)\; ,
\end{align*}
where $|\alpha\rangle_A$ is a stabiliser ancilla and $C$ a Clifford unitary.

\begin{lemma}\label{lm: stabiliser uncompletability}
    Let $\cU$ be an unextendible stabiliser basis, and $J$ a Clifford isometry. Then $J(\cU)$ cannot be completed to an orthonormal basis of pure qudit stabiliser states of the enlarged system.
\end{lemma}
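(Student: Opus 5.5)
The plan is to argue by contradiction, reducing to the lifting argument of Lm.~\ref{lm: lifting}. Suppose $J(\cU)\cup\cV$ is an orthonormal stabiliser basis of $\cH\otimes\cH_A$, where $J=C(\,\cdot\otimes|\alpha\rangle_A)$. Conjugating the whole basis by $C^\dagger$ preserves stabiliser states and orthogonality. So we may take $C=\one$ and assume that $\{|\psi_i\rangle\otimes|\alpha\rangle\}_i\cup C^\dagger\cV$ is a stabiliser basis. Writing $\cK=\mathrm{span}(\cU)^\perp\subseteq\cH$, the orthogonal complement of $J(\cU)$ is then
\begin{align*}
    \bigl(\cK\otimes\cH_A\bigr)\oplus\bigl(\cH\otimes\alpha^\perp\bigr)\; .
\end{align*}
The extra elements $C^\dagger\cV$ are stabiliser states spanning exactly this space. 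Since $\cU$ is incomplete, $\cK\neq\{0\}$, so $\cV$ is nonempty.

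The key step is to produce an element of $\cV$ that has nonzero component along $\cK\otimes|\alpha\rangle$. Let $P_\alpha=\one\otimes\dyad{\alpha}$. The elements of $C^\dagger\cV$ span a space whose image under $P_\alpha$ is $\cK\otimes|\alpha\rangle\neq\{0\}$. Hence some $|\Phi\rangle\in C^\dagger\cV$ satisfies $P_\alpha|\Phi\rangle\neq0$. Moreover $P_\alpha|\Phi\rangle\in\cK\otimes|\alpha\rangle$, because $|\Phi\rangle\perp|\psi_i\rangle\otimes|\alpha\rangle$ for every $i$.

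Next, project the ancilla onto $|\alpha\rangle$. The branch $\bigl(\one\otimes\langle\alpha|\bigr)|\Phi\rangle$ is nonzero and lies in $\cK$. We need it to be a stabiliser state up to normalisation. Choose a Clifford $D_A$ on the ancilla with $D_A|\alpha\rangle=|0\cdots0\rangle$. Projecting onto $|\alpha\rangle$ is then a sequence of postselected single-qudit $Z$ measurements on the rotated state $(\one\otimes D_A)|\Phi\rangle$. As used in the proof of Lm.~\ref{lm: lifting}, each such postselection maps stabiliser states to stabiliser states whenever the branch is nonzero. Iterating over the ancilla qudits therefore gives a normalised $n$-qudit stabiliser state in $\cK$. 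This contradicts unextendibility of $\cU$.

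The main obstacle is the justification that postselecting a pure stabiliser state on a stabiliser ancilla state yields a stabiliser state. I would argue this via the stabiliser formalism: the postselection is a projection by $\frac1q\sum_k\zeta^{-tk}Z^k$, which acts on the Lagrangian/character data in the standard Gottesman way. Alternatively, I would use Lm.~\ref{lm: affine stabiliser support} repeatedly, one qudit at a time. The conjugation step uses only that Clifford unitaries permute pure stabiliser states.
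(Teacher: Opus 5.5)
Your proof is correct and is essentially the paper's argument: the paper writes $Q_\cU=J^\dagger(\one-JP_\cU J^\dagger)J=\sum_a|\widetilde\phi_a\rangle\langle\widetilde\phi_a|$ with $|\widetilde\phi_a\rangle=(\one\otimes\langle\alpha|)C^\dagger|\phi_a\rangle$, concluding that some postselected branch is a nonzero stabiliser vector orthogonal to $\cU$, which is the same content as your spanning argument with $P_\alpha$. One harmless slip: after removing $C$, the complement of $J(\cU)$ is $(\cK\otimes|\alpha\rangle)\oplus(\cH\otimes\alpha^\perp)$; the spaces $\cK\otimes\cH_A$ and $\cH\otimes\alpha^\perp$ overlap, so your sum is not direct, although it is the same subspace.
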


\begin{proof}
    Let $J:\cH\ra\cH\otimes\cH_A$, $J|\psi\rangle=C\left(|\psi\rangle\otimes|\alpha\rangle_A\right)$ with $|\alpha\rangle_A$ a stabiliser ancilla and $C$ a Clifford unitary. Suppose to the contrary that there exist pure stabiliser states $\{|\phi_a\rangle\}_{a=1}^{D_AD-k}$ such that $\{J|\psi_i\rangle\}_{i=1}^k\cup{|\phi_a\rangle}_{a=1}^{D_AD-k}$ is a complete orthonormal stabiliser basis. With Eq.~(\ref{eq: unextendible subspace}), we then have $\one-JP_\cU J^\dagger=\sum_a\dyad{\phi_a}$, as well as
    \begin{align*}
        Q_\cU
        =J^\dagger(\one-JP_\cU J^\dagger)J
        =\sum_a\dyad{\widetilde\phi_a}\; ,
    \end{align*}
    where $|\widetilde\phi_a\rangle:=J^\dagger|\phi_a\rangle=(\one\otimes{}_A\langle\alpha|)C^\dagger|\phi_a\rangle$. Since pure stabiliser states are closed under Clifford transformations and stabiliser postselection, every $|\widetilde\phi_a\rangle$ is either zero or proportional to a pure stabiliser state. Yet from
    \begin{align*}
        \langle\psi_i|\widetilde\phi_a\rangle
        =\langle\psi_i|J^\dagger|\phi_a\rangle
        =\langle J\psi_i|\phi_a\rangle
        =0\qquad\forall 1\leq i\leq k,\ 1\leq a\leq D_AD-k\; .
    \end{align*}
    and $|\cU|<D$, we have $Q_\cU\neq0$, and the above decomposition thus contains at least one nonzero vector $|\widetilde\phi_a\rangle$. Its normalisation is a pure stabiliser state orthogonal to every member of $\cU$, contradicting that $\cU$ is a USB.
\end{proof}

In other words, while unextendibility is not preserved under a proper stabiliser embedding, e.g. after the embedding $|\psi_i\rangle\longmapsto|\psi_i\rangle\otimes|0\rangle_A$, every state $|s\rangle\otimes|1\rangle_A$ is an orthogonal stabiliser extension, Lm.~\ref{lm: stabiliser uncompletability} shows that the embedded family is \emph{uncompletable}: it cannot be enlarged to a complete stabiliser basis of the enlarged Hilbert space.\\

\textbf{Distillability.} For a UPB, bound entanglement of $\rho^{\perp}_\cU$ rests on the fact that positivity under partial transposition is (i) stable under tensor powers and (ii) an obstruction to distillation~\cite{Peres1996,Horodeckisz1998}. It is natural to ask whether the resource theory of magic supplies an obstruction that is also stable under tensor products and plays a similar role. For qudits of odd prime local dimension this is the case, and in the strongest possible form: the complement of \emph{every} orthogonal family of stabiliser states is positively represented in phase space.

We recall the discrete Wigner formalism for odd $q$~\cite{Gross2006,Veitch2012,MariEisert2012,VeitchEtAl2014}. For $u\in V=\zz^{2n}_q$, the phase-point operators,
\begin{equation}
    A_u
    :=\frac{1}{D}\sum_{v\in V}\zeta_q^{-\omega(u,v)}W_v
    =W_uA_0W^\dagger_u\; ,
\end{equation}
are Hermitian and satisfy $\tr[A_u]=1$ and $\tr[A_uA_{u'}]=D\delta_{u,u'}$. Here, $A_0=\frac{1}{D}\sum_{v\in V}W_v$ is the parity operator, acting by $A_0\ket{j}=\ket{-j}$ on the computational basis. The discrete Wigner function of an operator $O$ is $W_O(u):=D^{-1}\tr[A_uO]$, so that $\sum_{u\in V}W_O(u)=\tr[O]$. Write $\cW_+:=\{\rho\mid W_\rho\geq 0\}$ for the set of positively represented states. Then the following hold: (i) $\mathrm{Stab}_{n,q}\subseteq\cW_+$, and by the discrete version of Hudson's theorem~\cite{Gross2006}, the pure states in $\cW_+$ are exactly the pure stabiliser states $\ket{A}$, whose Wigner function is the normalised indicator function of its affine Lagrangian $A$, that is, $W_{\Pi_A}=D^{-1}\mathbf{1}_A$ where $\mathbf{1}_A(v)=1(0)$ if $v\in(\notin) A$; (ii) $\cW_+$ is convex, closed under tensor products and preserved by stabiliser operations \cite{Veitch2012,MariEisert2012}; and (iii) the \emph{mana} $\cM(\rho):=\log\sum_{u\in V}\lvert W_\rho(u)\rvert$ is a magic monotone which vanishes precisely on $\cW_+$ and is strictly positive on every pure magic state~\cite{VeitchEtAl2014}. Consequently, no state in $\cW_+$ can be converted by stabiliser operations into a pure magic state, that is, every magic state in $\cW_+$ is bound magic.

\begin{lemma}\label{lm: Wigner positivity}
    Let $q$ be an odd prime and let $\cU=\{\ket{A_i}\}_{i=1}^{k}$, $k<D$, be a set of pairwise orthogonal $n$-qudit stabiliser states, with $Q_\cU$ and $\rho^{\perp}_\cU=Q_\cU/\tr[Q_\cU]$ as in Eq.~\eqref{eq: unextendible subspace}. Then $\rho^{\perp}_\cU$ is represented by the uniform distribution,
    \begin{equation}\label{eq: Wigner positivity}
        W_{\rho^{\perp}_\cU}(u)
        \ =\ \frac{1}{D(D-k)}\Big(1-\mathbf{1}_{\bigcup_i A_i}(u)\Big)
        \ \geq\ 0\; ,
    \end{equation}
    supported on the phase points not covered by the affine Lagrangians $A_i$. In particular, $\rho^{\perp}_\cU\in\cW_+$ and $\cM(\rho^{\perp}_\cU)=0$.
\end{lemma}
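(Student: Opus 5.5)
The plan is to compute the Wigner function of $Q_\cU$ by linearity, using the facts recalled just before the statement. First, write $Q_\cU=\one-\sum_{i=1}^k\Pi_{A_i}$. The identity has Wigner function $W_\one(u)=D^{-1}\tr[A_u]=D^{-1}$ for every $u\in V$, since $\tr[A_u]=1$. By the discrete Hudson theorem as recalled in (i), each pure stabiliser projector has $W_{\Pi_{A_i}}=D^{-1}\mathbf{1}_{A_i}$. Linearity of $O\mapsto W_O$ then gives
\begin{align*}
    W_{Q_\cU}(u)=\frac1D\Bigl(1-\sum_{i=1}^k\mathbf{1}_{A_i}(u)\Bigr)\; .
\end{align*}

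Second, I use pairwise orthogonality. By Cor.~\ref{cor: affine stabiliser overlap}, $\langle A_i|A_j\rangle=0$ if and only if $A_i\cap A_j=\emptyset$, so the affine Lagrangians $A_i$ are pairwise disjoint. Hence $\sum_i\mathbf{1}_{A_i}=\mathbf{1}_{\bigcup_iA_i}$ pointwise, and the bracket equals $1-\mathbf{1}_{\bigcup_iA_i}(u)\in\{0,1\}$, which is nonnegative.

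Third, I normalise. Since $\tr[Q_\cU]=D-k>0$ (because $k<D$), dividing gives Eq.~\eqref{eq: Wigner positivity}. As a consistency check, each $A_i$ has $q^n=D$ points, so the number of uncovered phase points is $D^2-kD=D(D-k)$. The Wigner function is therefore exactly the uniform distribution on the uncovered points, and it sums to $1$.

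The conclusions then follow directly. We have $\rho^\perp_\cU\in\cW_+$, and since $\sum_u|W(u)|=\sum_uW(u)=\tr\rho^\perp_\cU=1$, the mana is $\cM(\rho^\perp_\cU)=\log 1=0$.

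The only step needing care is the Hudson-type identity $W_{\Pi_A}=D^{-1}\mathbf{1}_A$ with the paper's sign conventions. The paper states this as a recalled fact, so I would invoke it. If it were challenged, I would verify it directly: insert Eq.~\eqref{eq: stabiliser projector} into $\tr[A_u\Pi_{L,\xi_a}]$ and apply character orthogonality over $L$. This yields an indicator of the condition $\omega(u-a,v)=0$ for all $v\in L$, i.e.\ $u\in a+L^\perp=a+L$. One must also confirm that the phase convention $\zeta_q^{-\omega(u,v)}$ in $A_u$ matches $\chi_{\xi_a}(v)=\zeta_q^{\omega(a,v)}$ up to the trace relation $\tr[W_uW_v]=D\delta_{u,-v}$.
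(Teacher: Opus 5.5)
Your proposal is correct and follows the paper's proof. Both use disjointness of the $A_i$ from Cor.~\ref{cor: affine stabiliser overlap}, linearity of $O\mapsto W_O$ with $W_\one\equiv D^{-1}$ and $W_{\Pi_{A_i}}=D^{-1}\mathbf{1}_{A_i}$, normalisation by $\tr[Q_\cU]=D-k$, and the count of $D(D-k)$ uncovered phase points giving total weight one and hence zero mana; your optional character-orthogonality check of the Hudson identity is also consistent with the paper's phase conventions.
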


\begin{proof}
    By Cor.~\ref{cor: affine stabiliser overlap}, two stabiliser states are orthogonal if and only if their affine Lagrangians are disjoint. Pairwise orthogonality of $\cU$ therefore means that $A_1,\dots,A_k$ are pairwise disjoint, hence, $\sum_i\mathbf{1}_{A_i}=\mathbf{1}_{\bigcup_iA_i}\le 1$ pointwise. Since $O\mapsto W_O$ is linear, $W_\one\equiv D^{-1}$ and $W_{\Pi_{A_i}}=D^{-1}\mathbf{1}_{A_i}$, we obtain $W_{Q_\cU}=D^{-1}(1-\mathbf{1}_{\bigcup_iA_i})\geq 0$. Dividing by $\tr[Q_\cU]=D-k$ gives Eq.~(\ref{eq: Wigner positivity}). Finally, summing over the $q^{2n}=D^2$ phase space points yields $\frac{D^2-kD}{D(D-k)}=1$, hence, $\cM(\rho^{\perp}_\cU)=0$.
\end{proof}

\begin{corollary}\label{cor: bound magic states}
    Let $q$ be an odd prime and let $\cU$ be a USB. Then $\rho^{\perp}_\cU$ is a bound magic state.
\end{corollary}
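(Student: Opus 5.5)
The plan is to split ``bound magic'' into its two defining conditions: $\rho^{\perp}_\cU$ is magic, and no pure magic state can be distilled from any number of copies of it by stabiliser operations. Both conditions follow almost directly from results already established.

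\emph{Magic.} Since $\cU$ is a USB, $\mathrm{ran}(Q_\cU)$ contains no pure stabiliser state. Apply Lm.~\ref{lm: magic witness} with $\cK=\mathrm{ran}(Q_\cU)$, or equivalently use the witness $W_\cU$ of Cor.~\ref{cor: USB magic witness}. Because $\rho^{\perp}_\cU=Q_\cU/\tr[Q_\cU]$ is supported on $\cK$, we get $\tr[W_\cU\rho^{\perp}_\cU]=-\delta_\cU<0$. Hence $\rho^{\perp}_\cU\notin\Stab_{n,q}$. This is the only place where unextendibility enters.

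\emph{Non-distillability.} By Lm.~\ref{lm: Wigner positivity}, $\rho^{\perp}_\cU\in\cW_+$. This holds for any pairwise orthogonal family, since the affine Lagrangians $A_i$ are disjoint and the Wigner function is $D^{-1}(D-k)^{-1}(1-\mathbf{1}_{\bigcup_iA_i})$. I then invoke the recalled properties of $\cW_+$: it is closed under tensor products, so $(\rho^{\perp}_\cU)^{\otimes m}\in\cW_+$ for every $m$, and it is preserved by stabiliser operations. These operations include adjoining stabiliser ancillas, which lie in $\cW_+$ by closure under tensor products. Consequently, any output of a stabiliser protocol applied to $(\rho^{\perp}_\cU)^{\otimes m}$, conditioned on any outcome, is again in $\cW_+$. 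The conditioning is handled by noting that the normalised post-measurement branches of a positively represented state remain positively represented (Pauli projectors are themselves Wigner-positive, and this is part of the cited preservation statement). By Gross's discrete Hudson theorem, the only pure states in $\cW_+$ are stabiliser states, so no pure magic state can be produced exactly.

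For asymptotic or approximate distillation I use the mana. By Lm.~\ref{lm: Wigner positivity}, $\cM(\rho^{\perp}_\cU)=0$, and mana is additive, so $\cM\big((\rho^{\perp}_\cU)^{\otimes m}\big)=0$. Since $\cM$ is a monotone under stabiliser operations, every output has zero mana. A pure magic target $\ket{\phi}$ has $\cM(\ket{\phi})>0$, and $\cM$ is continuous. Therefore any sequence of outputs with zero mana stays bounded away from $\ket{\phi}$ (and from $\ket{\phi}^{\otimes r}$), ruling out distillation even in the limit.

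The main subtlety is making precise what ``bound magic'' means: exact versus asymptotic distillation, and whether postselected branches count. I would fix the definition as ``no stabiliser protocol on finitely many copies outputs a state arbitrarily close to a pure magic state.'' I would then argue via monotonicity of mana under trace-non-increasing stabiliser branches, citing \cite{VeitchEtAl2014}. The rest is a direct combination of Lm.~\ref{lm: magic witness} and Lm.~\ref{lm: Wigner positivity}.
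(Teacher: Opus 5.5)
Your proposal is correct and follows the paper's route. Magic comes from unextendibility: you use the witness of Cor.~\ref{cor: USB magic witness}, the paper the equivalent convex-decomposition argument in Sec.~\ref{sec: UPBs vs USBs}. Non-distillability comes from the Wigner positivity of Lm.~\ref{lm: Wigner positivity}, combined with mana vanishing on $\cW_+$ and being strictly positive on pure magic states. Your treatment of tensor powers, postselected branches and approximate distillation makes explicit what the paper's one-line argument leaves implicit.

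One small imprecision: mana is monotone only \emph{on average} under stabiliser measurements, not branch by branch. For a zero-mana input, though, this together with $\cM\geq0$ still forces every branch of nonzero probability to have zero mana. The preservation of $\cW_+$ under postselection, which you already invoke, gives the same conclusion directly.
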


\begin{proof}
    That $\rho^{\perp}_\cU\notin\mathrm{Stab}_{n,q}$ is shown in Sec.~\ref{sec: UPBs vs USBs}; positivity is Lm.~\ref{lm: Wigner positivity}. As $\cM$ is a magic monotone vanishing on $\rho^{\perp}_\cU$ and strictly positive on every pure magic state, no stabiliser protocol converts copies of $\rho^{\perp}_\cU$ into pure magic states.
\end{proof}

Cor.~\ref{cor: bound magic states} is thus an exact analogue, in the resource theory of magic, of the UPB mechanism for producing bound entanglement~\cite{BennettEtAl1999}, with Wigner positivity taking over the role of positive partial transpose. For the three-qudit USB $\cS_q$ of Thm.~\ref{thm: 3 qudit USB} one has $\tr[Q_{\cS_q}]=q^3-|\cS_q|=2q(q-1)$, hence, $\rho^{\perp}_{\cS_q}$ is represented by the uniform distribution on $2q^{4}(q-1)$ out of the $q^6$ phase space points; for $q=3$ this is a rank-$12$ state supported uniformly on $324$ of the $729$ points.

For qubits, no counterpart of Lm.~\ref{lm: Wigner positivity} is available. This is not an artefact of our construction, but a consequence of the fact that there is no Clifford-covariant quasiprobability representation in which all qubit stabiliser states are non-negatively represented, the obstruction being the state-independent contextuality of the multi-qubit Pauli group, as exhibited by Mermin's square~\cite{HowardEtAl2014,DelfosseEtAl2015,Raussendorf2019}. (Note that the substitutes available in the qubit setting---$\Lambda$-polytopes and CNC-operator representations~\cite{Raussendorf2019,RaussendorfEtAl2023,ZurelOkayRaussendorf2020}---represent \emph{all} states positively and hence single out no free set.)

We provide a concrete counterexample to the analogous claim for bound magic in the qubit case, using the complementary state of the four-qubit USB. Let $\cU_4=\{|u_i\rangle\}_{i=1}^8$ denote the eight states in Thm.~\ref{thm: 4 qubit USB}, in the order listed there, and define $\cP_4=\sum_{i=1}^8\dyad{u_i}$, $\cQ_4=\one-\cP_4$ and $\rho_4^\perp=\frac{\cQ_4}{8}$ as in Eq.~(\ref{eq: unextendible subspace}).

\begin{proposition}\label{prop: distillability}
    The state $\rho_4^\perp$ can be converted with nonzero probability, using stabiliser operations, into a single-qubit state in the distillable region of the five-qubit $T$-state protocol. In particular, $\rho_4^\perp$ is not bound magic.
\end{proposition}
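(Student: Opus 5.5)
The plan is to exhibit an explicit stabiliser protocol. It projects qubits $1,2,3$ of $\rho_4^\perp$ onto a suitably chosen product Pauli eigenstate $\ket{m}_{123}$, which is a projective Pauli measurement with postselection. Possibly a local Clifford is then applied to qubit $4$, and the resulting single-qubit Bloch vector is compared with the distillable region of the Bravyi--Kitaev five-qubit $T$-protocol \cite{BravyiKitaev2005}.

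The structural observation is that every $\ket{u_i}$ is a product state $\ket{u_i^{123}}\otimes\ket{u_i^{4}}$. Hence
\begin{align*}
    (\bra{m}\otimes\one)\,\cQ_4\,(\ket{m}\otimes\one)
    =\one-\sum_{i=1}^8 w_i\,\dyad{u_i^4},\qquad
    w_i:=|\bracket{m}{u_i^{123}}|^2\in\{0,2^{-1},2^{-2},2^{-3},1\}\; ,
\end{align*}
and the postselection probability is $(2-\sum_iw_i)/8>0$ whenever not all the weight sits on a full basis.

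The fourth-qubit factors are $z_+,x_+,y_+,y_-,y_+,y_-,z_-,x_-$. Subtracting $\dyad{a}$ pushes the Bloch vector towards $-a$, so I want the surviving weight to fall on a triple of mutually unbiased eigenstates such as $\{z_-,x_-,y_-\}$, i.e.\ on $u_7,u_8$ and one of $u_4,u_6$. If these three receive a common weight $w$ and all other $w_i$ vanish, the output is
\begin{align*}
    \rho_{\mathrm{out}}=\frac{\one-w(\dyad{z_-}+\dyad{x_-}+\dyad{y_-})}{2-3w}\; ,
\end{align*}
with Bloch vector $\frac{w}{2-3w}(1,1,1)$. Its $T$-fidelity is $F_T=\tfrac12\bigl(1+\tfrac{3w}{2-3w}\bigr)$ along the $T$-axis, and it lies exactly on that axis, so no twirling is needed.

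The concrete steps are as follows. First, I list the $123$-factors: $u_4^{123}=z_-x_+y_-$, $u_6^{123}=x_+z_-y_+$, $u_7^{123}=x_-x_-z_+$ and $u_8^{123}=x_-x_-z_-$. Second, I search over the $6^3$ product Pauli eigenstates $\ket{m}$, or over a product on two qubits combined with a Bell-type stabiliser state if needed. I require $m$ to be orthogonal to $u_1,u_2,u_3,u_5$ and to one of $u_4,u_6$, while keeping comparable overlaps with $u_7,u_8$ and the other. If exactly equal weights are not achievable, I allow unequal weights and a subsequent single-qubit Clifford. I then compute the full Bloch vector $\vec r$ and check that its component along some $T$-axis $(\pm1,\pm1,\pm1)/\sqrt3$ exceeds the protocol threshold, i.e.\ $F_T>F_*\approx 0.910$. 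After $T$-twirling, which is a stabiliser operation, only this component matters. Third, I conclude that $\rho_4^\perp$ is converted with nonzero probability into a state from which pure $T$-states are distillable, so it is not bound magic.

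The main obstacle is the second step. Balancing the weights so that the output Bloch vector lies strictly outside the stabiliser octahedron, and moreover beyond the $T$-threshold, is delicate. Product postselections give only dyadic weights, and orthogonality constraints between $m$ and the other $u_i$ may force zero weight on one of the desired states. If single-shot product postselection does not suffice, I would try two alternatives. One is to postselect on an entangled stabiliser state of qubits $1,2$, e.g.\ $\ket{\psi^-}$, which appeared in Lm.~\ref{lm: 3-qubit obstruction} and has nonzero overlap with both $z_-x_+$ and $x_+z_-$. The other is to use two copies of $\rho_4^\perp$ with a parity-check postselection to sharpen the fidelity before invoking the threshold.
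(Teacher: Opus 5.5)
\textbf{There is a genuine gap: the search in your second step cannot succeed, for any choice of $\ket{m}$.}

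Write $W=\sum_i w_i$ and $\vec A=\sum_i w_i\vec a_i$, where $\vec a_i$ is the Bloch vector of $\ket{u_i^4}$. Your output then has Bloch vector $\vec r=-\vec A/(2-W)$, and one checks the exact identity
\begin{align*}
\|\vec A\|_1+W=2\bigl[\max(w_1,w_7)+\max(w_2,w_8)+\max(w_3+w_5,\,w_4+w_6)\bigr]\; .
\end{align*}
Fix one sign per Bloch axis, for example $z_-,x_-,y_-$, which selects $u_7,u_8,u_4,u_6$. The qubit-$4$ factors in such a group are pairwise non-orthogonal: they are either equal or lie on different axes. Since the full states are orthogonal, the parts $\ket{u_i^{123}}$ in the group must be pairwise orthogonal. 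Their weights therefore sum to at most $1$. Hence $\|\vec A\|_1\le 2-W$, i.e.\ $\|\vec r\|_1\le 1$, so the output always lies in the stabiliser octahedron.

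This holds for every normalised $\ket{m}$ on qubits $1,2,3$: product states, $\ket{\psi^-}$-type states, even non-stabiliser states. The conditional state is linear in $K^\dagger K$, so the same bound covers any Kraus operator acting on qubits $1,2,3$ alone. Single-qubit Cliffords and twirling on qubit $4$ preserve the octahedron, so they cannot help.

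Your ideal target shows the obstruction concretely. The states $u_4^{123},u_7^{123},u_8^{123}$ are pairwise orthogonal, so $3w\le 1$ and $\frac{3w}{2-3w}\le 1$. The same argument works whichever qubit you keep, because every member of $\cS_4$ is a product of Pauli eigenstates. The two-copy fallback does not rescue this either. If it is built on these single-copy reductions, its inputs are stabiliser states; otherwise it is unspecified.

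\textbf{The missing idea is a postselection that entangles the kept qubit with the others.} The paper projects all four qubits onto the rank-two code stabilised by $\one X\one Z$, $XXZ\one$ and $ZZ\one X$, two of which act on qubit $4$; this succeeds with probability $5/64$. Decoding with $\overline X=XX\one\one$ and $\overline Z=Z\one X\one$ gives Bloch vector $(-\frac35,\frac25,\frac25)$, whose $\ell_1$-norm $\frac75$ exceeds $1$. Applying $S^\dagger$ and a cyclic Clifford twirl then places it on the $T$-axis with polarisation $\frac{7}{5\sqrt3}\approx 0.81$. This is above the Bravyi--Kitaev threshold $\sqrt{3/7}\approx 0.65$.

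Two smaller corrections:
\begin{itemize}
\item The value $0.910$ is Bravyi--Kitaev's threshold for $\sqrt{\langle T|\rho|T\rangle}$. In your convention $F_T=\langle T|\rho|T\rangle$, the threshold is $\frac12(1+\sqrt{3/7})\approx 0.83$.
\item The $T$-axis component of $\frac{w}{2-3w}(1,1,1)$ is $\frac{\sqrt3\,w}{2-3w}$, not $\frac{3w}{2-3w}$.
\end{itemize}
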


\begin{proof}
    Consider the independent commuting Pauli operators $S_1=\one X\one Z$, $S_2=XXZ\one$ and $S_3=ZZ\one X$ and let
    \begin{align*}
        \Pi_{\cC}
        =\frac{1}{8}(\one+S_1)(\one+S_2)(\one+S_3)
        &=\frac{1}{8}(\one+ZZ\one X+XXZ\one-YYZX+\one X\one Z+ZY\one Y+X\one ZZ+YZZY)
    \end{align*}
    be the projector onto their common $(+1)$-eigenspace. Since the code has three independent stabilisers on four qubits, $\Pi_{\cC}$ has rank two. A choice of logical Pauli operators is $\overline X=XX\one\one$, $\overline Y=YXX\one$, $\overline Z=Z\one X\one$. These operators commute with $S_1,S_2,S_3$, anticommute pairwise, and satisfy $\overline Y=i\overline X\overline Z$.
    
    For a product of one-qubit Pauli eigenstates, the expectation of a Pauli string factorises, and a factor vanishes whenever the Pauli in the string differs from the local eigenbasis. Using this, one immediately computes
    \begin{align*}
        \bigl(\langle u_i|\Pi_{\cC}|u_i\rangle\bigr)_{i=1}^8
        &=\frac18(1,2,1,1,1,1,4,0)\; ,&
        \bigl(\langle u_i|\Pi_{\cC}\overline X|u_i\rangle\bigr)_{i=1}^8
        &=\frac18(1,-2,0,0,0,0,4,0)\; ,\\
        \bigl(\langle u_i|\Pi_{\cC}\overline Y|u_i\rangle\bigr)_{i=1}^8
        &=\frac18(0,0,0,-1,-1,0,0,0)\; ,&
        \bigl(\langle u_i|\Pi_{\cC}\overline Z|u_i\rangle\bigr)_{i=1}^8
        &=\frac18(0,0,-1,0,0,-1,0,0)\; .
    \end{align*}
    Consequently,
    \begin{align*}
        \tr[\cP_4\Pi_{\cC}]=\frac{11}{8}\; ,\qquad
        \tr[\cP_4\Pi_{\cC}\overline X]=\frac{3}{8}\; ,\qquad
        \tr[\cP_4\Pi_{\cC}\overline Y]=-\frac{1}{4}\; ,\qquad
        \tr[\cP_4\Pi_{\cC}\overline Z]=-\frac{1}{4}\; .
    \end{align*}
    Since $\tr[\Pi_{\cC}]=2$ and the logical Pauli operators are traceless on the code space, it follows that
    \begin{align*}
        \tr[\cQ_4\Pi_{\cC}]=\frac{5}{8}\; ,\qquad
        \tr[\cQ_4\Pi_{\cC}\overline X]=-\frac{3}{8}\; ,\qquad
        \tr[\cQ_4\Pi_{\cC}\overline Y]=\frac{1}{4}\; ,\qquad
        \tr[\cQ_4\Pi_{\cC}\overline Z]=\frac{1}{4}\; .
    \end{align*}
    Thus projection onto the code succeeds with probability $p_{\cC}=\tr[\Pi_{\cC}\rho_4^\perp]=\frac{5}{64}$. After postselection and Clifford decoding of the code space to one logical qubit, the output state is $\rho_L=\frac{1}{2}\left(\one-\frac{3}{5}X+\frac{2}{5}Y+\frac{2}{5}Z\right)$. Conjugation by $S^\dagger$, where $S=\operatorname{diag}(1,i)$, sends its Bloch vector to $(\frac{2}{5},\frac{3}{5},\frac{2}{5})$. Moreover, let $R$ be a Clifford unitary that cyclically permutes the Pauli operators and apply the Clifford twirl $\cT(\rho)=\frac{1}{3}\sum_{j=0}^2R^j\rho R^{-j}$. Then
    \begin{align*}
        \cT(S^\dagger\rho_LS)
        =\frac{1}{2}\left[\one+\frac{7}{15}(X+Y+Z)\right]\; .
    \end{align*}
    Writing a state on the $T$-axis as
    $\rho_T(r)=\frac{1}{2}\left[\one+\frac{r}{\sqrt3}(X+Y+Z)\right]$, the resulting polarisation is
    \begin{align*}
        r
        =\frac{7}{5\sqrt{3}}\; ,\qquad 
        r^2
        =\frac{49}{75}>\frac{3}{7}\; .
    \end{align*}
    The Bravyi--Kitaev distillation protocol \cite[Thm.~2]{BravyiKitaev2005} distils $T$-axis states for $r>\frac{3}{\sqrt{7}}$. Hence, a single copy of $\rho_4^\perp$ can, with probability $\frac{5}{64}$, be reduced by stabiliser operations to a state in that distillable region. Applying this reduction independently to many copies therefore distils arbitrarily pure $T$ states. It follows that $\rho_4^\perp$ is not bound magic.
\end{proof}

We note that the conclusion is inherited by all `saturated' ancillary lifts of $\cU_4$. More precisely, if $\cB_A$ is a complete stabiliser basis of an ancillary system and $\cU_4^\uparrow=\{|u_i\rangle\otimes|b\rangle\mid 1\leq i\leq8,\ |b\rangle\in\cB_A\}$, then $\cQ_{\cU_4^\uparrow}=\cQ_4\otimes\one_A$ and $\rho_{\cU_4^\uparrow}^\perp=\rho_4^\perp\otimes\frac{\one_A}{D_A}$. Discarding the ancilla recovers $\rho_4^\perp$, hence, every such lifted complement is also magic-distillable.\\

\textbf{Distinguishability by stabiliser operations.} Finally, we show that stabiliser unextendibility gives no uniform quantitative obstruction to discrimination by stabiliser operations (SO) \cite{VeitchEtAl2014,HowardCampbell2017}. Our discrimination protocol below falls within the class of stabiliser operations, and consequently, the same lower bound on the discrimination success probability therefore holds for any operational class containing SO (e.g. the generally larger class of completely stabiliser-preserving channels \cite{HeimendahlHeinrichGross2022}).

\begin{proposition}\label{prop: approximate SO distinguishability}
    Let $\cU=\{|\psi_i\rangle\}_{i=1}^k$ be an $n$-qudit USB. For every $\varepsilon>0$, there exists a USB $\cV$ in a larger system such that the uniform-prior discrimination success probability under stabiliser operations satisfies
    \begin{align*}
        p_{\mathrm{succ}}^{\mathrm{SO}}(\cV)
        >1-\varepsilon\; .
    \end{align*}
\end{proposition}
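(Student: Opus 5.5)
The plan is to \emph{dilute} $\cU$ inside a much larger unextendible family whose other members are perfectly distinguishable by a simple stabiliser measurement. Concretely, we adjoin $m$ ancilla qudits and, writing $\ket{z_t}$ for the computational basis states indexed by $t\in\mathbb F_q^m$, fix an arbitrary $n$-qudit stabiliser basis $\cB=\{\ket{b_j}\}_{j=1}^{D}$, say the computational one. We then set
\begin{align*}
    \cV
    :=\bigl\{\ket{\psi_i}\otimes\ket{z_0}\mid 1\leq i\leq k\bigr\}
    \ \cup\
    \bigl\{\ket{b_j}\otimes\ket{z_t}\mid 1\leq j\leq D,\ 0\neq t\in\mathbb F_q^m\bigr\}\; .
\end{align*}
All members are stabiliser states, being tensor products of stabiliser states. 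Pairwise orthogonality is immediate: members in different blocks are orthogonal on the ancilla, while members within a block are orthogonal on the system. The cardinality is $|\cV|=k+(q^m-1)D<q^mD$, so $\cV$ is incomplete.

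For unextendibility, the orthogonal complement of $\mathrm{span}(\cV)$ is exactly $\cK\otimes\ket{z_0}$ with $\cK=\mathrm{span}(\cU)^\perp$. Suppose a stabiliser state $\ket{\Phi}$ lay in this complement. Then $\ket\Phi$ is a $+1$ eigenvector of each ancilla Pauli $Z$, so it factorises as $\ket{\Phi}=\ket{\phi}\otimes\ket{z_0}$. Equivalently, following Lm.~\ref{lm: lifting}, we postselect the ancilla on $\ket{z_0}$, which maps stabiliser states to stabiliser states. Either way, $\ket\phi$ is an $n$-qudit stabiliser state lying in $\cK$, contradicting that $\cU$ is a USB. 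This step is essentially the argument of Lm.~\ref{lm: lifting}, and I expect no real difficulty there.

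For discrimination, the protocol is a stabiliser operation with two stages.
\begin{enumerate}
    \item Measure the $m$ ancilla qudits in the $Z$ basis, obtaining outcome $t$.
    \item If $t\neq0$, measure the system in the basis $\cB$, which is a sequence of commuting Pauli measurements, since $\cB$ is a joint Pauli eigenbasis. This identifies $\ket{b_j}\otimes\ket{z_t}$ with certainty. If $t=0$, simply guess a member of the first block uniformly at random.
\end{enumerate}
Under the uniform prior, the success probability is therefore
\begin{align*}
    p_{\mathrm{succ}}^{\mathrm{SO}}(\cV)
    \ \geq\ 1-\frac{k}{|\cV|}\Bigl(1-\frac1k\Bigr)
    \ >\ 1-\frac{k}{k+(q^m-1)D}\; .
\end{align*}
Given $\varepsilon>0$, we choose $m$ large enough that $k/(k+(q^m-1)D)<\varepsilon$.

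The only point needing care is the bookkeeping that the protocol is genuinely within SO: Pauli measurements, classically controlled further Pauli measurements, and classical guessing are all permitted. Everything else is routine, so this part is mainly about stating the protocol precisely rather than overcoming a real obstacle.
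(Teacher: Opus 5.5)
Your proposal is correct and is essentially the paper's proof. It uses the same diluted family (the paper puts the $m$-qudit control register first and uses a general label $t_\star$, which is immaterial), the same branch-wise postselection argument for unextendibility, and the same protocol of measuring the control register and then measuring in $\cB$, giving $1-p_{\mathrm{succ}}^{\mathrm{SO}}\leq (k-1)/((q^m-1)D+k)$. The only cosmetic difference is that the paper allows an arbitrary discrimination protocol for $\cU$ on the $t_\star$ branch and then bounds it by random guessing, whereas you guess directly.
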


\begin{proof}
    Let $\cB={|b_j\rangle}_{j=1}^D$ be the computational basis. Introduce an $m$-qudit control register, let $A=q^m$, choose a computational-basis label $t_\star\in\mathbb F_q^m$, and define the set,
    \begin{align*}
        \cV_m
        =\{|t\rangle\otimes|b_j\rangle\mid t\neq t_\star,\ 1\leq j\leq D\}
        \cup\{|t_\star\rangle\otimes|\psi_i\rangle\mid 1\leq i\leq k\}\; ,
    \end{align*}
    of $(A-1)D+k<AD$ pairwise orthogonal stabiliser states. Unextendibility follows as in Lm.~\ref{lm: stabiliser uncompletability}: suppose that a pure stabiliser state $|\Phi\rangle$ were orthogonal to every element of $\cV_m$. For each computational-basis label $t$, let
    \begin{align*}
        |\phi_t\rangle
        =(\langle t|\otimes\one)|\Phi\rangle\; .
    \end{align*}
    Every nonzero $|\phi_t\rangle$ is proportional to a pure stabiliser state. If $t\neq t_\star$, then $|\phi_t\rangle$ is orthogonal to the complete basis $\cB$, and hence $|\phi_t\rangle=0$. The remaining branch $|\phi_{t_\star}\rangle$ is orthogonal to every member of $\cU$, and therefore also vanishes by unextendibility of $\cU$. Together, this implies $|\Phi\rangle=0$, hence, $\cV_m$ is a USB.
    
    Next, we construct an explicit stabiliser discrimination protocol. Measure the control register in the computational basis. If the outcome is $t\neq t_\star$, measure the second system in the basis $\cB$, thus identifying the input state perfectly. If the outcome is $t_\star$, apply any discrimination protocol for $\cU$ with success probability $p_{\mathrm{succ}}^{\mathrm{SO}}(\cU)$. Consequently,
    \begin{align*}
        p_{\mathrm{succ}}^{\mathrm{SO}}(\cV_m)
        \geq\frac{(A-1)D+kp_{\mathrm{succ}}^{\mathrm{SO}}(\cU)}{(A-1)D+k}\; ,
    \end{align*}
    for the overall success probability of the protocol. Since guessing succeeds with probability $\frac{1}{k}$, this implies
    \begin{align*}
        1-p_{\mathrm{succ}}^{\mathrm{SO}}(\cV_m)
        \leq\left(1-\frac{(A-1)D+1}{(A-1)D+k}\right)
        =\frac{k-1}{(A-1)D+k}\longrightarrow0\; 
    \end{align*}
    as $m\rightarrow\infty$. Choosing $m$ sufficiently large thus proves the claim.
\end{proof}

UPBs universally obstruct perfect LOCC discrimination \cite{BennettEtAl1999}. By contrast, Prop.~\ref{prop: approximate SO distinguishability} shows that USB unextendibility alone provides no uniform quantitative obstruction to stabiliser discrimination. As a consequence, USBs also do not have a dimension-independent discrimination or data-hiding gap. This contrasts with Kwon's recent $3$-qubit example, which cannot be perfectly distinguished by stabiliser operations \cite{Kwon2025}.

\end{document}